\documentclass{jair}

\usepackage[T1]{fontenc}

\usepackage{amsmath}
\usepackage{algorithm}
\usepackage{algorithmic}
\usepackage[most]{tcolorbox}

\allowdisplaybreaks

\usepackage{tikz}
\usetikzlibrary{shapes, arrows.meta, positioning}

\tikzstyle{block} = [rectangle, draw, thick, minimum height=1cm, align=center]
\tikzstyle{arrow} = [thick,->,>=latex]

\tikzset{
  block/.style = {draw, rectangle, minimum height=2cm, minimum width=3.7cm, align=center},
  arrow/.style = {thick, -{Latex[width=2mm,length=2mm]}},
}

\theoremstyle{plain}
\newtheorem{assumption}{Assumption}
\newtheorem{theorem}{Theorem}[section]
\newtheorem{corollary}{Corollary}[theorem]
\newtheorem{lemma}[theorem]{Lemma}
\newtheorem{proposition}[theorem]{Proposition}
\theoremstyle{definition}
\newtheorem{definition}[theorem]{Definition}
\newtheorem{example}[theorem]{Example}

\theoremstyle{remark}
\newtheorem{remark}[theorem]{Remark}
\numberwithin{equation}{section}

\newcommand{\E}{\mathbb{E}}
\newcommand{\Var}{\mathrm{Var}}
\newcommand{\Bias}{\mathrm{Bias}}
\newcommand{\AMISE}{\mathrm{AMISE}}

\newcommand{\Vol}{\mathrm{Vol}}
\newcommand{\bs}{\boldsymbol}
\DeclareMathOperator*{\argmin}{arg\,min}

\setcopyright{cc}
\copyrightyear{2025}
\acmYear{2025}
\acmDOI{10.1613/jair.1.xxxxx}

\JAIRAE{Insert JAIR AE Name}
\JAIRTrack{Insert JAIR Track Name Here}
\acmVolume{4}
\acmArticle{111}
\acmMonth{8}
\acmYear{2025}

\begin{document}

\title[]{Non-parametric Formal Synthesis of Unknown Stochastic Systems: Asymptotic Convergence Guarantees}
\author{Zhi Zhang}
\authornote{Corresponding Author.}
\email{zhi.zhang90@outlook.com}
\affiliation{%
  \institution{University of Southampton}
  \city{Southampton}
  \country{United Kingdom}
}

\author{Sadegh Soudjani}
\email{sadegh@mpi-sws.org}
\affiliation{%
  \institution{MPI-SWS, Germany, and University of Birmingham}
  \country{United Kingdom}
}


\renewcommand{\shortauthors}{Zhang and Soudjani}

\begin{abstract}
Data-driven techniques have shown promising potential for checking behavior of complex systems operating in safety-critical domains against safety and other temporal requirements. This paper studies a class of data-driven techniques that are based on learning a representation of the system from data using non-parametric estimation.
The proposed approach is able to formally verify discrete-time stochastic dynamical systems against temporal logic specifications only using observation samples and without the knowledge of the model, and provides a probabilistic guarantee on the satisfaction of the specification.
We first consider finite abstract representations of the system in the form of Markov decision processes (MDPs) and derive asymptotic convergence guarantees between the transition probabilities of the abstract MDP and their estimation using Bernstein’s inequality and statistical properties of non-parametric estimators.
We then propose theoretical results for estimating the asymptotic upper bound of the \emph{Lipschitz constant} (LC) of the stochastic system, which can determine the size of the finite abstract MDP for a given precision error.
Under appropriate assumptions, our results prove that the asymptotic convergence rate of the estimations is $O(n^{-\frac{1}{3+\mathsf d}})$ for both the transition probabilities and the LC, where $\mathsf d$ is the dimension of the system and $n$ is the data scale.
By integrating these results, we can guarantee the asymptotic closeness in formal verification and policy synthesis performed on the original system and its finite abstraction based on the size of the dataset. Multiple case studies are presented to validate the effectiveness of the proposed method. 


%
%
\end{abstract}



\received{xxx}
\received[revised]{xxx}
\received[accepted]{xxx}

\maketitle

\section{Introduction}
\label{sec:intro}

Formal verification is extensively employed to ensure safety and satisfaction of other temporal requirements in safety-critical applications, including autonomous vehicles, power grids, medical robotics, and unmanned aerial systems.
Within this framework, model-based formal verification and synthesis \textcolor{black}{play} an essential role
\citep{kwiatkowska2002probabilistic,doyen2018verification,tabuada2009verification,belta2017formal}. However, complex systems interacting with unpredictable environments are challenging to model
\textcolor{black}{because of} black-box components \citep{sjoberg1995nonlinear} and the unpredictability of operating environments \citep{corso2021survey,kordabad2025data,yeh2018autonomous}.
The availability of large \textcolor{black}{amounts} of data from such systems necessitates developing data-driven techniques for formal verification and design of such systems with weak dependence on the information of the system's model.

 For systems under uncertainty, formal approaches rely on abstracting the system with \textcolor{black}{continuous} state space into a finite-state model such as Markov decision processes (MDPs) or interval MDPs that are amenable to automated analysis and computation \citep{baier2008principles,clarke1994model,Lavaei_Survey}. 
 For systems with an unknown model, constructing the MDP is not available, due to the difficulty in guaranteeing the closeness between the specifications of the original unknown system and its finite abstraction, and in the construction of transition probabilities between states of the finite abstraction.


The rapid progress in data-driven techniques--especially within the fields of machine learning and statistical learning, has opened new possibilities for formal verification and policy synthesis on dynamical systems, enabling the development of approaches that no longer rely on precise analytical models but instead learn directly from data \citep{fulton2018safe,xiang2018verification,makdesi2023data,schon2024btgp,kazemi2022datadriven}.
Approaches such as Reinforcement Learning (RL) \citep{jothimurugan2021compositional,fulton2018safe,lavaei2020formal,hasanbeig2023certified}, Deep Learning (DL) \citep{corsi2021formal}, Gaussian process regression \citep{jackson2021formal,reed2025error}, and non-parametric estimation \citep{zhang2024formal} have been integrated with formal verification and synthesis techniques, enabling the design of dynamical systems using data alone--without requiring precise system models.
However, relying on data inevitably introduces randomness into the obtained results. This inherent stochasticity raises important questions about whether these results can achieve asymptotic convergence, a key factor in determining their long-term reliability. To date, most existing works focus on derivations and implementations, without addressing the theoretical foundations of convergence in data-driven formal synthesis.

\begin{figure}
\centering
\resizebox{\linewidth}{!}{
\begin{tikzpicture}[node distance=1.2cm and 2.5cm, font=\LARGE]

\node[block] (spec) {Specification};
\node[block, below=of spec, text width=3.5cm, align=center, minimum height=1cm] (input) {Data from the System};

\node[block, right=of spec, yshift=-0.8cm, text width=3.5cm, align=center, minimum height=1cm] (upperLC) {Estimating\\Upper Bound of LC};
\node[block, below=0.3cm of upperLC,text width=3.5cm, align=center, minimum height=1cm] (kernel) {Estimating\\Transition Kernel};

\node[block, right=4.0cm of upperLC,text width=3.5cm, align=center, minimum height=1cm] (step2) {Determining Discretization\\Parameters for Partitioning};
\node[block, right=3.2cm of step2,text width=3.5cm, align=center, minimum height=1cm] (step3) {Constructing IMDP\\Using Integration};
\node[block, right=3.0cm of step3,text width=3.5cm, align=center, minimum height=1cm] (step4) {IMDP Formal\\Verification \& Synthesis};

\draw[arrow] (input.east) -- ++(0.4,0) |- (upperLC.west);
\draw[arrow] (input.east) -- ++(0.4,0) |- (kernel.west);
\draw[arrow] (upperLC.east) -- ++(0.5,0) -- ++(0,0) |- (step2.west);
\draw[arrow] (step2.east) -- ++(0.5,0) -- ++(0,0) |- (step3.west);
\draw[arrow] (step3.east) -- ++(0.3,0) -- (step4.west);

\draw[arrow] 
  (kernel.east) -- ++(0.5,0)      
  -- ++(0,-0.50)                   
  -- ++(11,0)                    
  -|                    
  (step3.south);               

\draw[arrow]
  (spec.east) -- ++(0,0.8) -| 
  (step2.north);

\draw[arrow] (spec.north) -- ++(0,1.2) -| (step4.north);

\node[below=0.2cm of kernel] {\textbf{Step 1}};
\node[below=0.2cm of step2] {\textbf{Step 2}};
\node[above=0.2cm of step3] {\textbf{Step 3}};
\node[below=0.2cm of step4] {\textbf{Step 4}};

\end{tikzpicture}
}
\caption{Workflow of the non-parametric formal verification framework.}
\label{verification_algorithm}
\end{figure}








In this paper, we introduce a novel formal verification and synthesis approach using non-parametric estimation (NPE) to guarantee asymptotic convergence when analyzing and synthesizing policies for unknown stochastic systems.
NPE has become a fundamental approach for analyzing problems without assuming a fixed parametric model \citep{hardle2004nonparametric,scott2015multivariate}. It has been successfully applied in a wide range of domains, including estimating the distribution of the Lipschitz constant in nonlinear components \citep{chakrabarty2020safe}, modeling the temporal evolution of dynamical systems \citep{hang2018kernel}, deep learning and short-term forecasting tasks \citep{huberman2021nonparametric}, and learning graphical models \citep{zhu2017learning}.
NPE also offers several favorable theoretical properties:
(i) its statistical behavior can be characterized by the mean and variance; and
(ii) it enables the estimation of conditional stochastic kernels in dynamical systems.
Despite these advantages, it remains unclear how to leverage NPE to establish asymptotic convergence guarantees for formal verification and policy synthesis of unknown stochastic systems.
To address this gap, we focus on the following key challenges:
(i) deriving asymptotic bounds on transition probabilities of finite abstractions of unknown systems by integrating the estimated stochastic kernel;
(ii) providing estimates of upper bounds on the \emph{Lipschitz constant} (LC) for stochastic systems;
and
(iii) incorporating closeness guarantees between the original system and its finite abstraction.

The primary contribution of this paper is to propose a non-parametric formal synthesis framework for unknown stochastic systems with asymptotic convergence guarantees as indicated in Fig.~\ref{verification_algorithm}.
Our key insight is that the transition probabilities in the MDP abstraction estimated using NPE admit asymptotic bounds with the convergence rate $O(n^{-\frac{1}{3+d}})$, where $d\ge1$ is the state dimension of the system and $n$ is the data scale. 
Using Bernstein's inequality, we derive asymptotic bounds on the error between the true and estimated transition probabilities.
We then adopt NPE to quantify asymptotic upper bound of the LC estimation, which shows the LC estimating range has asymptotic convergence rate $O(n^{-\frac{1}{3+d}})$. The bound on LC gives closeness guarantees between the system and its finite MDP abstraction.
These theoretical foundations demonstrate the reliability of our framework.
By incorporating these bounds and the asymptotic upper bounds of the estimated LC into the model-based closeness guarantees for finite abstractions of the system, we ensure the asymptotic convergence of the formal synthesis results for unknown stochastic systems.

\noindent
\textbf{Organization.}
The rest of this paper is structured as follows. After reviewing the related work, Section~\ref{sec2} introduces the preliminaries on the problem formulation, NPE for density functions, and formal abstraction-based policy synthesis using interval MDPs. Section~\ref{main} gives the main results of the paper, which are a systematic theoretical analysis of the asymptotic convergence properties
of the transition probabilities estimated through the NPE, LC estimation method (Step 1 in Fig.~\ref{verification_algorithm}), and the theoretical closeness guarantees for a given data scale. The asymptotic upper bound of the LC is used to determine discretization parameter of the partition for the unknown system (Step 2 in Fig.~\ref{verification_algorithm}).
Section~\ref{sec:IMDP_abstraction_NPE} presents the construction of the interval MDP using NPE and performing formal policy synthesis with convergence guarantees (Step 3 in Fig.~\ref{verification_algorithm}). Finally, the formal synthesis is performed on the constructed interval MDP (Step 4 in Fig.~\ref{verification_algorithm}) with case studies reported in Section~\ref{sec:case_studies}. 
Concluding remarks are in Section \ref{sec:conclusions}.
To maintain a clear focus on the main results, certain preliminaries and proofs have been relegated to the appendix.




\noindent
\textbf{Related Work.}
Formal verification of dynamical systems has traditionally relied on constructing precise mathematical models, which has been extensively studied for verifying safety, reachability, and temporal logic specifications in such systems, with formal analysis enabled by model-checking tools, e.g., \cite{kwiatkowska2009prism,hensel2022probabilistic,soudjani2015faust,van2023syscore}, with reports on available tool competitions \cite{abate2023arch,abate2024arch}.
\textcolor{black}{In this context, a substantial body of work has focused on the verification of Probabilistic Computation Tree Logic (PCTL) properties on stochastic systems, including approaches for efficient verification under model uncertainty \cite{puggelli2013polynomial} and, more recently, techniques for explaining and analyzing reinforcement learning policies through temporal logic specifications \cite{gross2025pctl}.}
However, these methods presuppose full knowledge of the system's dynamics, which is often impractical in real-world scenarios where systems are partially observable, high-dimensional, or entirely black-box.
Thus, recent research has shifted toward data-driven formal verification, where models or properties are estimated directly from system observations.


A growing body of research has focused on the theoretical foundations of data-driven formal verification and synthesis for unknown dynamical systems, aiming to provide rigorous guarantees despite limited model knowledge \cite{makdesi2023data,nazeri2025data,badings2023robust}.
The current advances in data-driven formal approaches for stochastic systems can be divided into two categories. The first category includes approaches that learn an abstraction from data \cite{gracia2023distributionally,banse2023data2,Chekan2023UncertainConstraints,makdesi2023data,schon2024btgp,kazemi2022datadriven,zhang2024formal,schon2024bayesian,nazeri2025data}.
The second category includes approaches that are based on learning barrier certificates ensuring satisfaction of the requirements using data without building a finite abstract model.
Many existing data-driven approaches to barrier certificates are restricted to specific system classes, particularly linear or control-affine dynamics, see, e.g., \cite{jagtap2020control,Cohen2022,Lopez2022uCBF}. In addition, several methods rely on known Lipschitz constants to provide formal guarantees or address only partially unknown system dynamics. For instance, Gaussian processes are used in \cite{Wang2018CBF} and \cite{jagtap2020control} to learn the unknown part of nonlinear system dynamics, while assuming that the control-affine component is known.
%

In contrast, approaches that handle fully unknown dynamics remain scarce. The work \cite{Salamati2021DDCBC} studies the synthesis of barrier functions for fully unknown discrete-time systems but still assume knowledge of the Lipschitz constant. For continuous-time systems, the paper \cite{wang2023stochastic} uses Bayesian inference and \textcolor{black}{relies} on local Lipschitz assumptions to manage uncertainty.
To address limitations in model structure, neural network-based methods have gained popularity for synthesizing expressive barrier functions—so-called neural barriers, due to their high representational capacity \cite{so2023train, abate2021fossil, Safe_Barrier_Neural}. Neural networks have also been used as compact representations for memory-efficient synthesis and verification \cite{majumdar2023neural}.
Adjacent lines of work explore temporal logic and reinforcement learning. The works \cite{kazemi2020fullLTL,kazemi2025average} investigate model-free reinforcement learning for temporal logic control with convergence guarantees. 
\textcolor{black}{Reinforcement learning has also been employed for statistical model checking of PCTL specifications on MDPs, enabling scalable, sampling-based verification in the absence of explicit models \cite{wang2020statistically}.}
Meanwhile, conditional mean embeddings have been applied in both abstraction-based and abstraction-free settings for correct-by-design policy synthesis \cite{Romao2023DRControl,schon2024DRObarrier}.

%



Non-parametric methods--such as Gaussian process regression \textcolor{black}{(GPR)} \citep{rasmussen2003gaussian}, non-parametric estimation (NPE) \citep{hardle2004nonparametric,scott2015multivariate}, and non-parametric least squares estimators \citep{ziemann2022single}--are widely used to model the dynamics of unknown systems using data alone, without assuming a specific parametric structure. These approaches have been applied in various contexts, including reachability analysis \citep{jackson2021formal,chowdhury2017kernelized} and safety verification \citep{jagtap2020control, ahmadi2017safety,reed2025error}.
NPE, in particular, has been employed to approximate the invariant density of dynamical systems \citep{hang2018kernel}. Among the methods mentioned above, NPE stands out for its suitability in settings where no underlying parametric form is assumed \citep{hardle2004nonparametric,scott2015multivariate}. For instance, NPE can estimate the probability density function of a random variable based on sampled data while preserving key statistical properties such as the mean and variance. It is also applicable to non-parametric function regression.
In the context of data-driven formal synthesis, NPE can be used to estimate the conditional density function, which captures the system’s probabilistic transitions. This makes it a valuable tool for verifying stochastic systems when model information is unavailable or incomplete.



\textcolor{black}{
Closely related to our work are the GPR-based methods proposed in \citep{chowdhury2017kernelized, reed2025error}, which also study stochastic systems within a non-parametric learning framework.
However, these approaches cannot be directly applied to our setting for the following reasons. First, the method in \citep{reed2025error} assumes \textcolor{black}{bounded-support} noise, while the method in \citep{chowdhury2017kernelized} requires the noise to be sub-Gaussian \textcolor{black}{(i.e., tail conditions)}.
In contrast, our work does not need these assumptions, as we consider unknown distributions for the noise.
Second, these GPR-based methods assume systems with additive noise structure of the form $  y = f(x) + w  $, whereas our setting does not impose such a restriction.
Third, the approach in \citep{reed2025error} \textcolor{black}{does not provide the estimation of} 
the upper bound of the \emph{Lipschitz constant} of transition kernel for the stochastic system. This estimation is used to determine the partition of the state space, which can decide the closeness between the original system and its finite abstraction with respect to satisfying temporal properties.
Finally, although the method in \cite{reed2025error} ensures safety via stochastic barrier functions under bounded-support noise, 
\textcolor{black}{it cannot be directly applied to the uniform cell-wise bounds required for IMDP abstraction in verifying PCTL properties.}
}

A subset of the results of this paper was presented at the International Conference on Artificial Intelligence and Statistics \cite{zhang2024formal}. This paper extends the results of \cite{zhang2024formal} substantially by providing:
(i) asymptotic convergence results for the estimated transition probabilities;
(ii) confidence bounds on the estimated LC of the system; and
(iii) asymptotic convergence of the estimated probabilities of satisfying the temporal specifications.

\section{Preliminaries and Problem Statement}
\label{sec2}
\textcolor{black}{Let $\Omega$ denote a generic sample space for all random variables of the this paper.}
In this section, 
we consider a discrete-time stochastic control system (DTSCS), which is a tuple $\Sigma_{ss}=(\mathcal S, U,w,f)$, 
where 
$\mathcal{S}\subset \mathbb R^{\textcolor{black}{\mathsf d}}$ is the continuous state space of the system,
\textcolor{black}{$U$ is the finite action space of the system,}
$w$ is a sequence of independent and identically-distributed (i.i.d.) random variables from \textcolor{black}{the} sample space $\Omega$ to the set $V_{w},$ i.e., $w:=\{ w(k):\Omega \to V_{w},k\in \mathbb N\},$
and $f:\mathcal{S}\times U\times V_{w}\to \mathcal{S}$ is a measurable function characterizing the state evolution of $\Sigma_{ss}$ as
\begin{align}
\label{dynamic_evolu}
    &\textcolor{black}{\bs x}(k+1)=f(\textcolor{black}{\bs x}(k),a(k),w(k)),\\
    &k\in \mathbb{N},~\textcolor{black}{\bs x}(k)\in \mathcal{S},~ a(k)\in U \text{ and } w(k)\in V_{w}.\nonumber
\end{align}
Also, we define a set $\mathcal{U}_{\mathfrak a}$ that is the collection of \textcolor{black}{ random sequences on the sample space $\Omega$},
$\{a(k):\Omega\to U,k\in \mathbb N \}$, with $a(k)$ being independent of $w(t)$ for any $k,t\in \mathbb N$ and $t\ge k$.
\textcolor{black}{Examples of such random sequences include Markov policies defined as follows.
Consider the trajectory given by $\nu_x=\bs x(0)\xrightarrow{a(0)}\bs x(1)\xrightarrow{a(1)}\bs x(2)\xrightarrow{a(2)}\ldots$.
Markov policies are characterized by functions $(\mu_0,\mu_1,\mu_2,\ldots)$ of the form $\mu_k: \mathcal{S} \to U$, such that the control action is selected as $a(k) = \mu_k(\bs x(k))$ for all $k\in\mathbb{N}$.
}

In our discussion, we assume that the system~\eqref{dynamic_evolu} is unknown but data from sampled trajectories is available.
Meanwhile, we expect system~\eqref{dynamic_evolu} satisfies a temporal specification $\psi$.


\smallskip
\noindent
\textcolor{black}{\textbf{Running Example.}
Consider the two-dimensional stochastic bistable switch adopted from \cite{dutreix2022abstraction} with dynamics
\begin{align}
\label{run_examp}
    \bs x(k+1)=f_r(\bs x(k)) + a(k) + w(k)=\begin{bmatrix}
        x_{1}(k)+[-\alpha x_{1}(k)+x_{2}(k)]\tau\\
        x_{2}(k)+\left( \frac{x^{2}_{1}(k)}{x^{2}_{1}(k)+1}-\beta x_{2}(k) \right)\tau
        \end{bmatrix} +\begin{bmatrix}
            u_1(k) \\
            u_2(k)
        \end{bmatrix} +\begin{bmatrix}
            w_{1}(k) \\
            w_{2}(k)
        \end{bmatrix},
\end{align} 
with $\alpha=1.3$, $\beta=0.25$, and $\tau=0.05$, state $(x_1(k),x_2(k))\in \mathcal{S} \subset  \mathbb R^2$, and action $a(k) = (u_1(k),u_2(k))\in U \subset  \mathbb R^2$. Although we do not assume prior knowledge of the noise distribution, it could for instance have Gaussian, Cauchy, or Laplace distributions characterized by their respective parameters.
Our ultimate objective is to design a policy for $a(k)$ to satisfy a given temporal specification with high probability. In the following sections, we will use this simple running example to illustrate the technical concepts and theorems.
}

The system~\eqref{dynamic_evolu} can be represented with a conditional density function $T(\textcolor{black}{\bs x}'|\textcolor{black}{\bs x},a)$ that maps the current state and \textcolor{black}{action} $(\textcolor{black}{\bs x},a)$ to a distribution over the next state $\textcolor{black}{\bs x}'\in \mathcal{S}$. We will utilize non-parametric estimation to estimate this conditional density function from data as discussed next.

\subsection{Non-parametric Estimation of Density Functions}\label{sec:prelem}
Let $X=(X_1,\ldots,X_\mathsf d)^T$ denote a $\mathsf d$-dimensional random vector which has a continuous probability density function $f_{X}:\mathbb R^{\mathsf d}\rightarrow\mathbb R_{\ge 0}$.
For a given set of i.i.d. random samples $ \{\hat{X}_i = (\hat{X}_{i1},\ldots,\hat{X}_{i\mathsf d})^T\in \mathbb{R}^{\mathsf d}| i=1,\ldots,n\}, n\in \mathbb N$, the general form of the multivariate kernel density estimator of $f_{X}(\cdot)$ is
\begin{equation}
\label{equ:MKDE}
\hat{f}_{X}(\bs{x})
=\frac{1}{n}
\sum_{i=1}^{n} 
K_{H}(\bs{x}-\hat{X}_i),\quad \forall \bs{x}\in \mathbb R^{\mathsf d},
\end{equation}
where
$K_{H}(\bs{u})=\frac{1}{|H|}K(H^{-1}\bs{u})$
and $K:\mathbb R^{\mathsf d}\rightarrow\mathbb R_{\ge 0}$ is a \emph{multivariate kernel function}
 satisfying two moment conditions $\int K(\bs{u}) \,d\bs{u}=1$ and $\int \bs{u}K(\bs{u}) \,d\bs{u}=\bs{0}$.
 $H$ is a non-singular \emph{bandwidth matrix} and
 $|H|$ denotes the determinant of $H$.
 Examples of the univariate kernel function $K(\cdot)$ include uniform, triangle, quartic, and Gaussian kernel functions (see Appendix~\ref{kernel_bandwidth}). Multivariate kernel functions are typically chosen to be the product of univariate kernel functions \citep{hardle2004nonparametric}, i.e., the same kernel function with different bandwidths in each dimension:
 $$K(u_1,u_2,\ldots,u_{\mathsf d}) = k(u_1)k(u_2)\ldots k(u_{\mathsf d})$$
for some univariate kernel function $k:\mathbb R\rightarrow\mathbb R_{\ge 0}$, and bandwidth matrix $H=diag(h_1,\ldots,h_{\mathsf d})$. A popular choice for the kernel function is the Gaussian kernel $k(u)=\frac{1}{\sqrt{2\pi}}\exp(-u^2/2) $, which leads to the following estimator for $f_X(\cdot):$
\begin{equation}
 \label{eq:gaussian}
    \hat{f}_{X}(\bs{x})
\!=\! \frac{(2\pi)^{-\mathsf d/2}}{nh_1\cdots h_{\mathsf d}} \cdot 
\sum_{i=1}^{n}
\prod_{j=1}^{\mathsf d}
\exp\left[-\frac{1}{2}\left(\frac{x_j-\hat{X}_{ij}}{h_j}\right)^2\right], 
\end{equation}
with $x_j$ and $\hat X_{ij}$ being the $j^{\text{th}}$ elements of $\bs{x}$ and $\hat X_i$, respectively.
We will use this estimator in the rest of this paper to establish our theoretical results.


The accuracy of the estimation is widely assessed using the mean integrated squared error (MISE), which is used for selecting the kernel function and the bandwidth matrix.
The \emph{asymptotic} MISE (AMISE), bias, and variance of the estimation is generally obtained by eliminating the higher-order terms.
To guarantee the best performance of the estimation, the choice of the bandwidth plays an important role in determining the AMISE.
An appropriately selected bandwidth ensures a balanced trade-off between bias and variance, thereby enhancing the accuracy of the estimator. In contrast, a poorly chosen bandwidth can lead to either under-smoothing—characterized by low bias but high variance—or over-smoothing, which yields low variance at the cost of increased bias. Minimizing the AMISE provides a principled way to navigate this trade-off and select an optimal bandwidth. The related details of the kernels and choice of the bandwidth can be found in Appendix \ref{kernel_bandwidth}.

\paragraph{Estimating Conditional Density Functions.}
\citet{rosenblatt1969conditional} introduced the standard kernel estimator of a conditional density function (CoDF) by replacing the estimates of the joint and marginal densities in the definition of the CoDF. Using a kernel estimator in \textcolor{black}{equation}~\eqref{equ:MKDE} that is the product of kernels for two random vectors $X$ and $Y$, the estimator of the joint density function of $(Y,X)$ and the marginal density function of $X$ are given by
\begin{align}
\label{eq:estimation_density_joint}
\hat{f}_{YX}(\bs{y},\bs{x})=&
\frac{1}{n}
\sum_{i=1}^n
K_{H_{\mathsf x}}(\bs{x}-\hat{X}_i)
\cdot K_{H_{\mathsf y}}(\bs{y}-\hat{Y}_i) \\
\label{eq:estimation_density}
\hat{f}_{X}(\bs{x})=&
\frac{1}{n}
\sum_{j=1}^n
K_{H_{\mathsf x}}(\bs{x}-\hat{X}_j),
\end{align}
where $K_{H_{\mathsf x}}$ and $K_{H_{\mathsf y}}$ are kernels with bandwidth matrices $H_{\mathsf x}$ and $H_{\mathsf y}$, respectively.
Thus, the kernel estimator of $f_{Y|X}$ is given by
\begin{align}
\hat{f}_{Y|X}(\bs{y},\bs{x})=
\frac{\hat{f}_{YX}(\bs{y},\bs{x})}{\hat{f}_{X}(\bs{x})}
=
\frac{\sum_{i=1}^n
K_{H_{\mathsf x}}(\bs{x}-\hat{X}_i)
\cdot K_{H_{\mathsf y}}(\bs{y}-\hat{Y}_i)}
{\sum_{j=1}^n
K_{H_{\mathsf x}}(\bs{x}-\hat{X}_j)}.
\label{condi_densityestima}
\end{align}

The asymptotic bias and variance of the conditional density estimator~\eqref{condi_densityestima} have been obtained by \citet{hyndman1996estimating} for univariate $X$ and $Y$, and can be found in the appendix in equation~\eqref{bias_from1996}-\eqref{var_from1996}. In the following sections, we use \textcolor{black}{equation}~\eqref{condi_densityestima} to estimate the CoDF of \textcolor{black}{equation}~\eqref{dynamic_evolu}, which gives the density function of the next state as a random vector conditioned on the current state and current \textcolor{black}{action}.
\textcolor{black}{For instance, in the running example, the conditional density associated with \eqref{run_examp} under action $a(k)$ is of the form
\begin{equation}
\label{run_examp_codf}
    f_{X_{k+1}\mid X_k}(\bs x(k+1)\mid \bs x(k)) = f_w(\bs x(k+1)-f_r(\bs x(k))-a(k)),
\end{equation}
where $f_w(\cdot)$ is the density function of the noise $w(\cdot)$. If the noise has a standard Cauchy distribution, then
\begin{align}
    \label{run_examp_codf1}
f_{X_{k+1}\mid X_k}(\bs x(k+1)\mid \bs x(k))
= \frac{1}{\pi^2\left[1+(x_1(k+1)-f_r^1(\bs x(k))-u_1(k))^2\right]\left[1+(x_2(k+1)-f_r^2(\bs x(k))-u_2(k))^2\right]},
%
\end{align}
where $f_r = (f_r^1,f_r^2)$ is defined in \eqref{run_examp}.
If the noise is Gaussian with zero mean and covariance $\Sigma$, then
\begin{align}
\label{run_examp_codf2}
f_{X_{k+1}\mid X_k}(\bs x(k+1)\mid \bs x(k))
= \frac{1}{(2\pi) |\Sigma|^{\frac{1}{2}}}
\exp\!\left(
-\frac{1}{2}
\left[\bs x(k+1)- f_r(\bs x(k))-a(k)\right]^{\mathsf T}
\Sigma^{-1}
\left[\bs x(k+1)- f_r(\bs x(k))-a(k)\right]
\right).
\end{align}
These conditional density functions can be estimated by the kernel-based estimator \eqref{condi_densityestima}.
}



\subsection{Interval Markov Decision Processes}
An interval Markov decision process (IMDP) is a type of Markov decision process in which the transition probabilities are not fixed values but instead lie within specified intervals \citep{givan2000bounded}. This representation captures uncertainty or variability in the system's dynamics, allowing each transition to be associated with a range of possible probabilities rather than a single exact value. IMDPs are particularly useful in data-driven and abstraction-based settings, where exact transition probabilities may be difficult to obtain but bounds can still be derived from data. This interval-based structure enables robust verification and control synthesis by accounting for all possible behaviors within the specified probability ranges, ensuring that properties such as safety or reachability hold under worst-case scenarios.

\begin{definition}[\textbf{IMDP}]
An IMDP is a tuple $\Sigma=( Q, S_{\mathfrak a}, P_{lo}, P_{up}, AP, L ),$ where $Q$ is a finite set of states, $S_{\mathfrak a}$ is a finite set of actions and $S_{\mathfrak a}(q)$ is the set of actions at state $q\in Q$, $P_{lo}: Q\times S_{\mathfrak a}\times Q\to [0,1]$ is a function representing the lower bound of the transition probability from $q\in Q$ to $q^*\in Q$ under action $a\in S_{\mathfrak a}$, $P_{up}: Q\times S_{\mathfrak a}\times Q\to [0,1]$ is a function representing the upper bound of the transition probability from $q$ to $q^*$ under action $a\in S_{\mathfrak a}$, $AP$ is a finite set of atomic propositions, and $L:Q\to 2^{AP}$ is a labeling function assigning possibly several elements of $AP$ to each state $q$.
\end{definition}

For any $q,q^{*}\in Q$ and $a\in S_{\mathfrak a}(q)$, it holds that $P_{lo}(q,a,q^{*})\leq P_{up}(q,a,q^*)$ and $\sum_{q\textcolor{blue}{^*}\in Q}P_{lo}(q,a,q^*)\leq1\leq \sum_{q\textcolor{blue}{^*}\in Q}P_{up}(q,a,q^*)$. This is to ensure a non-empty set of feasible probability distributions for transitions between the states. 
The set of probability distributions over $Q$ is denoted by $D(Q)$.
$\theta^{a}_{q}\in D(Q)$ represents a feasible distribution initiated from $q\in Q$ to all successor states in $Q$ under $a$, and satisfies $P_{lo}(q,a,q^*)\leq \theta^{a}_{q}(q^*)\leq P_{up}(q,a,q^*)$, where $q^*$ is the successor state. The set of all feasible distributions initiated from $q$ under $a$ is denoted by $\Theta^{a}_{q}$.
\textcolor{black}{A path of the IMDP is a sequence $\nu=q_{0}\xrightarrow{a_0}q_{1}\xrightarrow{a_1}q_{2}\xrightarrow{a_2}\ldots,$ where $a_{i}\in S_{\mathfrak a}(q_{i})$, and satisfies $P_{up}(q_{i}, a_{i}, q_{i+1})>0$ for all $i$. When the path is finite, the}
last state of a finite path $\nu^{\mathsf{fin}}$ is denoted by $\textsf{last}(\nu^{\mathsf{fin}})$.
The sets of all finite and infinite paths are denoted by $\textsf{Paths}^{\mathsf{fin}}$ and $\textsf{Paths}$, respectively.
Let a function $\varpi: \textsf{Paths}^{\mathsf{fin}}\to S_{\mathfrak a}$ denote a strategy on the IMDP $\Sigma$, which maps a finite path $\nu^{\mathsf{fin}}$ of $\Sigma$ onto an action in $S_{\mathfrak a}$. The set of all such strategies is denoted by $\Pi$.
An MDP is an IMDP with all probability intervals being a singleton (i.e., with $P_{lo} = P_{up}$).

\begin{definition}[\textbf{Adversary}]
Consider an IMDP $\Sigma$. An adversary is a function $\kappa: \textsf{Paths}^{\mathsf{fin}}\times S_{\mathfrak a}\to D(Q)$, which maps the path-action pair $(\nu^{\mathsf{fin}},a)$  with $a\in S_{\mathfrak a}(\textsf{last}(\nu^{\mathsf{fin}}))$ to a feasible distribution $\theta^{a}_{q}\in\Theta^{a}_{\textsf{last}(\nu^{\mathsf{fin}})}$. The set of all adversaries is denoted by $\mathcal{K}$.
\end{definition}

\subsection{Probabilistic Computation Tree Logic (PCTL)}
\label{PCTL}
PCTL \citep{hansson1994logic} is a formal language for expressing requirements on complex behaviors of stochastic systems. It extends classical temporal logics by introducing probabilistic operators, allowing one to specify properties such as ``the probability of eventually reaching a safe state is at least 0.95'' or ``with at least 90\% probability, a failure does not occur within 10 steps.'' PCTL is widely used in the formal verification of systems modeled by MDPs and IMDPs, where it enables reasoning about uncertainty and nondeterminism. Compared to purely qualitative logics such as Linear Temporal Logic (LTL) or Computation Tree Logic (CTL) \citep{baier2008principles}, PCTL is better suited for systems with stochastic dynamics since it allows quantitative reasoning over probabilities, enabling the specification of both functional behavior and probabilistic guarantees. By providing a rigorous specification language, PCTL serves as a foundation for verifying properties like safety, liveness, and reachability in stochastic systems under quantitative constraints.

\begin{definition}[\textbf{Syntax of PCTL}]
For a given set of atomic propositions $AP$, formulas in PCTL can be recursively defined as follows:
\begin{align*}
    &\text{State Formula } \phi:=\textsf{true}~|~\rho~|~\neg \phi~|~\phi \wedge \phi~|~ P_{\bowtie p}[\psi],\\
    &\text{Path Formula }~ \psi:= \mathcal{X} \phi~|~\phi~\mathcal{U}^{\leq k}~\phi~|~ \phi~\mathcal{U}~ \phi, 
\end{align*}
where $\rho \in AP$, $\neg$ is the negation operator, $\wedge$ is the conjunction operator, $P_{\bowtie p}$ is the probabilistic operator, $\bowtie\in \{\leq, <, \ge,>\} $ is a relation placeholder, and $p\in [0,1]$. $\mathcal{X}$ (next), $\mathcal{U}^{\leq k}$ (bounded until), and $\mathcal{U}$ (until) are temporal operators.
\end{definition}

\begin{definition}[\textbf{PCTL Semantics}]
For a labeling function $L:Q\rightarrow 2^{AP}$, the satisfaction relation $\models$ is defined inductively as follows. For any state $q\in Q$,
(1) $q\models\textsf{true}$ for all $q\in Q$;
(2) $q\models \rho \iff \rho \in L(q)$;
(3) $q\models (\phi_1 \wedge \phi_2) \iff (q\models \phi_1) \wedge (q\models \phi _2)$;
(4) $q\models \neg \phi \iff q\not\models \phi$;
(5) $q\models P_{\bowtie p} [\psi] \iff \textsf{Prob}_{q}(\psi)\bowtie p,$ where $\textsf{Prob}_{q}(\psi)$ is the probability that infinite trajectories originating from $q$ satisfy $\psi$.
Also, for any path $\upsilon\in \textsf{Paths}$, the satisfaction relation $\models$ is defined as: (1) $\upsilon \models \mathcal{X}\phi \iff \upsilon(1)\models \phi;$ (2) $\upsilon \models \phi_1 \mathcal{U}^{\leq k}\phi_2 \iff  \exists i\leq k~s.t.~\upsilon(i)\models \phi_2 \wedge \upsilon(j)\models \phi_1,~\forall~j\in [0,i);$ (3) $\upsilon\models \phi_1\mathcal{U}\phi_2 \iff \exists i\ge 0~s.t.~\upsilon(i)\models\phi_2 \wedge \upsilon(j)\models \phi_1,~\forall~j\in [0,i).$
\end{definition}

The specification \emph{bounded eventually} $\Diamond^{\leq k}$
is defined as $P_{\bowtie p}[\Diamond^{\leq k }\phi ]\equiv P_{\bowtie p}[\textsf{true}~ \mathcal{U}^{\leq k} \phi]$ representing that $\phi$ is satisfied within $k$ time steps.
The specification \emph{eventually} $\Diamond$ is defined as $P_{\bowtie p}[\Diamond \phi]\equiv P_{\bowtie p}[\textsf{true}~\mathcal{U} \phi]$ representing that $\phi$ is satisfied at some point in the future. 

\textcolor{black}{
For the running example \eqref{run_examp}, we consider a bounded-horizon specification $\psi=\neg r_{O} \mathcal{U}^{\leq K} r_D$. 
This specification requires that the system does not visit $r_O$ until visiting $r_D$ in $K$ steps, where $r_D$ and $r_O$
are subsets of the state space of the system indicating the destination and avoiding regions, respectively.}

\subsection{IMDP Policy Synthesis}
Here, we give an account of IMDP verification and policy synthesis against a specification described in PCTL.
For a PCTL path formula $\psi$ starting from an initial state $q\in Q$, the lower and upper bound of probabilities that the paths initialized at $q$ satisfy $\psi$ in $k$ steps can be defined as 
\begin{align}\label{lowprob_path} 
    P^{k}_{lo}(q)\!=\!\begin{cases}
       1,~\text{if}~q\in Q^{1},\\
    0,~\text{if}~q\in Q^{0}, \\
    0,~\text{if}~q\notin (Q^{0}\cup Q^{1})\wedge k=0,\\
    \textcolor{black}{\max_{a}}\min_{\theta^{a}_{q}} \sum_{q^{*}}\theta^{a}_{q}(q^{*})P^{k-1}_{lo}(q^{*}),~\text{otherwise},
\end{cases}
\end{align}
\begin{align}
\label{upprob_path}
     P^{k}_{up}(q)\!=\!\begin{cases}
       1,~\text{if}~q\in Q^{1},\\
    0,~\text{if}~q\in Q^{0},\\
    0,~\text{if}~q\notin (Q^{0}\cup Q^{1})\wedge k=0,\\
    \!\max_{a}\max_{\theta^{a}_{q}} \sum_{q^{*}}\theta^{a}_{q}(q^{*})P^{k-1}_{up}(q^{*}),~\text{otherwise},
\end{cases}
\end{align}
where $Q^{1}$ is the set of states that always satisfy the path formula $\psi$, $Q^{0}$ is the set of states that never satisfy $\psi$, and $\theta^{a}_{q}(q^*)\in [P_{lo}(q,a,q^*),P_{up}(q,a,q^*)]$, for any $q^{*}\in Q$.
The adversaries obtained from the procedures above determine a series of actions that lead to maximum and minimum probabilities satisfying path formula $\psi$ for each state.
The above recursive computation of the probability bounds can be performed in a finite number of steps for specifications with bounded until ($\mathcal{U}^{\leq k}$). The number of steps can be tuned with respect to any desired accuracy for specifications with unbounded until ($\mathcal{U}$) \cite{haddad2018interval}. Once the satisfaction of formulas of the form $P_{\bowtie p}[\psi]$ is checked, the satisfaction of other state formulas does not involve a probability operator and can be checked using usual binary techniques on subsets of the state space. \textcolor{black}{Due to these reasons, we focus on Markov policies that are sufficient for checking specifications of the form $P_{\bowtie p}[\phi_1\mathcal{U}\phi_2]$ and $P_{\bowtie p}[\phi_1\mathcal{U}^{\le k}\phi_2]$ with state formulas $\phi_1$ and $\phi_2$.} 

\subsection{Problem Formulation}\label{pro_formula}

\textcolor{black}{When the DTSCS $\Sigma_{ss}=(\mathcal S, U,w,f)$ is known, an IMDP $\bar \Sigma_{ss}=(Q, S_{\mathfrak a}, P_{lo},P_{up}, AP, L )$ is constructed as a finite abstraction of the system.
Define $Q$ as a partition of the state space $\mathcal S$ with partition sets denoted by $q\in Q$, where $n_{Q}:=|Q|.$
Define the transition probabilities $P_{ij,a}: q_{i}\to \mathbb R ,$ such that $P_{ij,a} (\bs x):=\textsf{Prob}_w(f(\bs x,a,w)\in q_j)$, where
$\textsf{Prob}_{w}(\cdot)$ denotes probability with respect to the distribution of the random noise $w$, 
$\bs x\in q_i$ and $q_{i}, q_{j}\in Q,$ $i,j\in\{1,\ldots,n_Q\}.$
Define $P_{lo,a}(q_{i},q_{j})=\min_{\bs x\in q_{i}}P_{ij,a}(\bs x)$ and $P_{up,a}(q_{i},q_{j})=\max_{\bs x\in q_{i}}P_{ij,a}(\bs x).$
The \textcolor{black}{action} space $S_{\mathfrak a} = U$. 
}
The length of the interval on transition probabilities of $\bar \Sigma_{ss}$ and its effect on the probability of satisfying the specification can be bounded through the Lipschitz constant of $\Sigma_{ss}$ as formally stated in Theorem~\ref{SA13_bound} in Section~\ref{supp_pro_formula}.
When the system $\Sigma_{ss}$ is unknown, it needs to be approximated by an IMDP $\hat \Sigma_{ss}=(Q, S_{\mathfrak a}, P_{lo},P_{up}, AP, L )$ based on the data from sampled trajectories. For this purpose, the transition probabilities $P_{ij,a}(\bs x)$ can be estimated through the non-parametric estimation based on the samples $\left\{\left(\textcolor{black}{\hat{X}_{i},\hat{ Y}_{i}} \right),~i=1,\ldots,n \right\}$, \textcolor{black}{where $\textcolor{black}{\hat{X}_{i}},~i=1,2,\ldots,n$ are i.i.d. random samples, and samples $\textcolor{black}{\hat{Y}_{i}}$ are generated from $(\textcolor{black}{ Y|\hat{ X}_{i}})$ for all $i=1,2,\ldots,n$}.
Let $ \hat{P}_{ij,a} (\bs x)$ represent the probability estimated using non-parametric methods, and refer to $\hat \Sigma_{ss}$ as the NPE abstraction of $\Sigma_{ss}$.
However, it is unclear whether the formal verification and synthesis based on this data-driven abstraction $\hat \Sigma_{ss}$ can guarantee probabilistic asymptotic convergence to the behavior of the model-based finite abstraction $\bar{\Sigma}_{ss}$. 
Such problem gives rise to a fundamental question: whether the probability of satisfying the specification $\psi$ under $\hat \Sigma_{ss}$ can converge, with high confidence, to that of 
$\Sigma_{ss}$ within a predefined threshold.
This problem can also be found in other data-driven formal verification frameworks, such as those presented in \cite{jackson2021formal, nazeri2025data}.
Thus, this paper aims at proposing a non-parametric framework for the formal verification and synthesis of unknown stochastic systems, ensuring asymptotic convergence guarantees.




\begin{tcolorbox}[colback=gray!10, colframe=gray!50, coltitle=black, sharp corners, enhanced]

\noindent\textbf{Main Problem.}
Let $\psi$ be a PCTL specification and $\Sigma_{ss}=(\mathcal S, U,w,f)$ a DTSCS as in \textcolor{black}{the system}~\eqref{dynamic_evolu}, where $f$, the distribution of $w$, and the Lipschitz constant of $\Sigma_{ss}$ are unknown.
Synthesize a control policy and verify the unknown DTSCS $\Sigma_{ss}$ against the specification $\psi$ through NPE abstraction $\hat \Sigma_{ss}$.
Provide guarantees that, as the data scale $n$ increases, the formal verification and control synthesis results obtained from the NPE abstraction $\hat \Sigma_{ss}$ converge, with high probability, to those of $\Sigma_{ss}$ within a predefined threshold.
\end{tcolorbox}



\section{Main Results}
\label{main}
In this section, we first present a systematic theoretical analysis of the asymptotic convergence properties of transition probabilities estimated through the NPE. Such property can ensure that the data-driven NPE abstraction $\hat{\Sigma}_{ss}$ asymptotically converges to the finite abstraction $\bar{\Sigma}_{ss}$ of the system \eqref{dynamic_evolu}, with a high probability. 
Next, we derive an asymptotic upper bound of the LC of CoDF of the stochastic system $\Sigma_{ss}$ using the NPE.
According to Theorem \ref{SA13_bound} \cite{SA13}, this bound guides the choice of partition parameters for constructing the finite abstraction of system \eqref{dynamic_evolu}, ensuring probabilities of satisfying specifications remain close between the original system and its abstraction, within a predefined threshold. 
Together, these results guarantee that the probability of satisfying the specification $\psi$ under $\hat \Sigma_{ss}$ can converge, with high probability, to that of  $\Sigma_{ss}$ within a predefined threshold. 
This provides a theoretical guarantee for the reliability of data-driven verification and synthesis.



\subsection{Preparation of the Main Results}

In this subsection, we theoretically analyze the asymptotic convergence of the estimated transition probabilities and the LC of the system \eqref{dynamic_evolu} on a given domain $D_{X}\times D_{Y}$ using \textcolor{black}{equation}~\eqref{condi_densityestima}.
In the following discussion throughout the rest of the paper, we assume that the bandwidth matrices $H_{\mathsf x}$ and $H_{\mathsf y}$ are selected as
$
H_{\mathsf x} = H_{\mathsf y} = h I_{\mathsf d},
$
where $h > 0$ is a scalar bandwidth parameter and $I_d$ denotes the $\mathsf d \times \mathsf d$ identity matrix.
\textcolor{black}{Also, for the sake of simplicity in our discussions, we assume that the state vector $ X=( X_{1},\ldots, X_{\mathsf d})$ is sampled uniformly at random from $D_{X}$, which means the corresponding samples $\{\hat{X}_{i},i=1,\ldots,n,n\in \mathbb{N}\}$ are i.i.d. with uniform distribution on $D_{X}$, whose probability density function is given by 
$f_X(\bs x)=\frac{\mathbf{1}_{D_X}(\bs x)}{\Vol(D_X)}$, where $\mathbf{1}_{D_X}(x)$ denotes the indicator function of $D_X$, and 
$\Vol(\cdot)$ denotes the volume (Lebesgue measure) of a set.
}


We use the symbol $\lesssim$ to denote the asymptotic bound when $n\to +\infty$ by eliminating higher order terms.
In addition, our results require the following assumption that assumes existence of bounds on higher order partial derivatives of the CoDF.

\begin{assumption}\label{asstwod_1}
\textbf{{\normalfont(a)}}
There exists a constant $C_{f}\!>\!0$ such that $| f_{Y|X}(\bs{y},\bs{x})|\!\leq\! C_{f}$, for all $(\bs{x},\bs{y})\in D_{\bs X}\!\times\! D_{\bs Y}$.
\textbf{{\normalfont(b)}}
There exist constants $C_{\textcolor{black}{j}},C^{*}_{i}\!>\!0$, $i,\textcolor{black}{j}\in \{1,\ldots,\mathsf d\}$, such that $|\frac{\partial^{2}}{\partial y_{j}^{2}}f_{Y|X}(\bs{y},\bs{x})|\!\leq\! C_{i}$ and $| \frac{\partial^{2}}{\partial x^{2}_{i}}f_{Y|X}(\bs{y},\bs{x})|\!\leq\! C^{*}_{i}$, for all $(\bs{x},\bs{y})\in D_{\bs X}\!\times\! D_{\bs Y}$.
\textbf{{\normalfont(c)}}
There exist constants $C_{ij}\!>\!0$, $i,j\in \{1,\ldots,\mathsf d\}$, such that $|\frac{\partial^{3}}{\partial x_{i}\partial y_{j}^{2}}f_{Y|X}(\bs{y},\bs{x})|\!\leq\! C_{ij}$, for all $(\bs{x},\bs{y})\in D_{\bs X}\!\times\! D_{\bs Y}$.
\textbf{{\normalfont(d)}}
There exist constants $\textcolor{black}{\bar{C}_{i\varsigma}}\!>\!0$, $i,\varsigma\in \{1,\ldots,\mathsf d\}$, such that $| \frac{\partial^{3}}{\partial x_{i}\partial x^{2}_{\varsigma}}f_{Y|X}(\bs{y},\bs{x})|\!\leq\! \bar{C}_{i\varsigma}$, for all $(\bs{x},\bs{y})\in D_{\bs X}\!\times\! D_{\bs Y}$.
\end{assumption}

\begin{remark}
\textcolor{black}{
    Assumption~\ref{asstwod_1} requires that the noise $w(k)$ in the system dynamics \eqref{dynamic_evolu} induces a one-step transition kernel that admits a sufficiently smooth conditional density $f_{Y \mid X}$. In particular, we require the conditional density function 
    $f_{Y \mid X}(\bs y|\bs x)$ to \textcolor{black}{admit bounded second-order partial derivatives with respect to both $\bs x$ and $\bs y$, together with bounded mixed third-order partial derivatives.}
    This can be satisfied by requiring a sufficiently smooth density function of the noise on its support and a sufficiently smooth vector field $f$ in \eqref{dynamic_evolu}. 
    The running example in \eqref{run_examp} already satisfies this requirement since $f_r(\bs x)$ is a rational function with a strictly positive polynomial in its denominator, $w(k)$ is an additive noise, and either of the density functions in \eqref{run_examp_codf1}-\eqref{run_examp_codf2} are \textcolor{black}{sufficiently smooth and satisfy the derivative bounds required by
Assumption~\ref{asstwod_1}.}
    More generally, our framework accommodates a broad class of noise distributions and system dynamics beyond the additive case, provided that \textcolor{black}{the induced transition density admits the required bounded second-order and mixed third-order derivatives.}
    }
\end{remark}

For the sake of notational simplification, we introduce the convolution operator $*.$
Let us denote $\bs k_{\bs x,h}(\cdot):\mathbb{R}^{\mathsf d}\to \mathbb{R}$ as 
$\bs k_{\bs x,h}(\bs z):=\bs k_{h}(\bs x-\bs z)=h^{-\mathsf d}K(\textcolor{black}{(}\bs x-\bs z \textcolor{black}{)}/h)=h^{-\mathsf d}\prod^{\mathsf d}_{i=1}k\left((x_{i}-\bs z_{i})/h\right).$
To carry out our theoretical discussion, we define the following terms for density functions $f_{X}(\bs x)$ and $f_{YX}(\bs y, \bs x)$: 
\begin{align*}
    &\Bar{f}(\bs x) 
    := \bs k_{\bs x,h} * f_{X}(\bs x) 
     = \int f_{X}(\bs x_{\tau}) \bs k_{h}(\bs x - \bs x_{\tau}) \, d\bs x_{\tau}, \\
    &\Bar{f}_{x}(\bs y, \bs x) 
    := \bs k_{\bs x,h} * f_{YX}(\bs y, \bs x) 
     = \int f_{YX}(\bs y, \bs x_{\tau}) \bs k_{h}(\bs x - \bs x_{\tau}) \, d\bs x_{\tau}, \\
    &\Bar{f}(\bs y, \bs x) 
    := \bs k_{\bs y,h} * \Bar{f}_{x}(\bs y, \bs x)
     = \int \Bar{f}_x(\bs y_{\tau}, \bs x) \bs k_{h}(\bs y - \bs y_{\tau}) \, d\bs y_{\tau}
    \phantom{:}= \iint f_{YX}(\bs y_{\tau}, \bs x_{\tau}) \bs k_h(\bs x - \bs x_{\tau}) \bs k_h(\bs y - \bs y_{\tau}) \, d\bs x_{\tau} d\bs y_{\tau}.
\end{align*}



Thus, we can have 
\begin{align*}
    \Bar{f}(\bs y,\bs x)=\E \left[  \bs k_{\bs x,\bs y,h}(\textcolor{black}{ X,  Y})\right] \text{ and }
    \Bar{f}(\bs x)=\E\left[ \bs k_{\bs x,h}(\textcolor{black}{ X}) \right], 
\end{align*}
where $\bs k_{\bs x,\bs y,h}(\textcolor{black}{ X,Y}):=\bs k_{\bs x,h}(\textcolor{black}{ X})\bs k_{\bs y,h}(\textcolor{black}{ Y}) .$



We need to introduce the Bernstein inequality \citep{bernstein1924modification} and the following lemma.

\begin{theorem} [\textbf{Bernstein Inequality} \citep{bernstein1924modification}]\label{Bern_ineq}
    Let $X_{1},\cdots,X_{n}$ be i.i.d. random variables  with $\mu=\E[X_{i}]$ and $\delta^2=\Var [X_{i}]$, and for any $i\in \{1,\cdots,n\},$ there exists a positive real number $R>0$, such that $|X_{i}-\mu |\leq R,$ then
    \begin{align*}
        P\left(\left| \frac{1}{n}\sum^{n}_{i=1} X_{i}-\mu \right|\leq \epsilon \right)\ge 1-2\exp(-\tau),
    \end{align*}
    where 
    \begin{align*}
        \epsilon:=\max\left\{\frac{\tau R}{3n}+\sqrt{ \frac{\tau^2 R^2}{9n^2}+\frac{2\tau\delta^2}{n}  }, \frac{\tau R}{3n}-\sqrt{ \frac{\tau^2R^2}{9n^2}+\frac{2\tau\delta^2}{n} }  \right\}.
    \end{align*}
\end{theorem}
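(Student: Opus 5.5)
The plan is to derive the two-sided Bernstein tail bound
\[
P\Bigl(\Bigl|\tfrac1n\textstyle\sum_{i=1}^n X_i-\mu\Bigr|>t\Bigr)\le 2\exp\Bigl(-\frac{n t^2}{2\delta^2+\frac{2}{3}Rt}\Bigr),\qquad t>0,
\]
and then invert it. We choose $t$ so that the exponent equals exactly $\tau$, and identify this $t$ with $\epsilon$. Note first that the second entry in the $\max$ defining $\epsilon$ is nonpositive, since the square root dominates $\frac{\tau R}{3n}$. Hence $\epsilon=\frac{\tau R}{3n}+\sqrt{\frac{\tau^2R^2}{9n^2}+\frac{2\tau\delta^2}{n}}$, and the $\max$ only selects the relevant root.

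\emph{Step 1 (moment generating function bound).} Set $Z_i=X_i-\mu$, so that $\E[Z_i]=0$, $\E[Z_i^2]=\delta^2$ and $|Z_i|\le R$. For $k\ge2$ we have $\E|Z_i|^k\le R^{k-2}\delta^2$. We also use the elementary inequality $k!\ge 2\cdot 3^{k-2}$. Expanding the exponential, for $0<\lambda<3/R$ this gives
\[
\E e^{\lambda Z_i}\le 1+\frac{\lambda^2\delta^2}{2}\sum_{k\ge2}\Bigl(\frac{\lambda R}{3}\Bigr)^{k-2}=1+\frac{\lambda^2\delta^2}{2(1-\lambda R/3)}.
\]
Combining this with $1+x\le e^x$ yields $\E e^{\lambda Z_i}\le \exp\bigl(\frac{\lambda^2\delta^2}{2(1-\lambda R/3)}\bigr)$.

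\emph{Step 2 (Chernoff bound and choice of $\lambda$).} By independence and Markov's inequality,
\[
P\Bigl(\textstyle\sum_i Z_i> nt\Bigr)\le\exp\Bigl(-\lambda nt+\frac{n\lambda^2\delta^2}{2(1-\lambda R/3)}\Bigr).
\]
We choose $\lambda=t/(\delta^2+Rt/3)$, which lies in $(0,3/R)$. With this choice $1-\lambda R/3=\delta^2/(\delta^2+Rt/3)$, and the exponent simplifies to $-\frac{nt^2}{2(\delta^2+Rt/3)}$. Applying the same argument to $-Z_i$ and taking a union bound gives the displayed two-sided tail with the factor $2$.

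\emph{Step 3 (inversion).} We solve $\frac{nt^2}{2\delta^2+\frac23 Rt}=\tau$. This is the quadratic equation $t^2-\frac{2\tau R}{3n}t-\frac{2\tau\delta^2}{n}=0$, whose unique positive root is exactly $\epsilon$. Substituting $t=\epsilon$ gives $P(|\frac1n\sum X_i-\mu|>\epsilon)\le 2e^{-\tau}$, and taking complements yields the claim.

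The only delicate point is Step 1: the moment bound $\E|Z|^k\le R^{k-2}\delta^2$ must be combined with the factorial estimate to obtain a geometric series, and one must keep track of the constraint $\lambda<3/R$ so that the series converges. The remaining steps are routine algebra. The degenerate case $\delta=0$ is trivial, since then $Z_i=0$ almost surely.
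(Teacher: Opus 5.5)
Your proof is correct. The paper gives no proof of this statement; it cites Bernstein (1924) and uses the result as a black box, so there is no competing argument to compare against.

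What you give is the standard Chernoff-type derivation that the citation stands in for:
\begin{itemize}
\item The moment bound $\E|Z_i|^k\le R^{k-2}\delta^2$, combined with $k!\ge 2\cdot 3^{k-2}$, gives the geometric-series estimate on the moment generating function for $0<\lambda<3/R$.
\item The choice $\lambda=t/(\delta^2+Rt/3)$ yields the familiar two-sided tail $2\exp\bigl(-nt^2/(2\delta^2+\tfrac23 Rt)\bigr)$.
\item Solving $nt^2=\tau\bigl(2\delta^2+\tfrac23 Rt\bigr)$ recovers exactly the paper's $\epsilon$.
\end{itemize}

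Your inversion step also explains why the paper writes $\epsilon$ as a $\max$. Its two entries are the two roots of that quadratic, and for $\tau\ge 0$ the second root is nonpositive, so the $\max$ just selects the positive root. The statement implicitly needs $\tau>0$, both for the square root to be well defined and for the probability bound to have content.

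You also correctly separate off the case $\delta=0$. There your choice of $\lambda$ would equal $3/R$, where the series diverges, but the claim holds trivially because $Z_i=0$ almost surely.
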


Following a similar approach to that used in \cite{hyndman1996estimating}, we get the next lemma.

\begin{lemma}\label{multi_expect}
    Suppose that $X$ and $Y$ are $\mathsf d$-dimensional random variables, there is a CoDF $f_{Y|X}(\bs y,\bs x)$ for all $(\bs x,\bs y)\in D_X\times D_Y$ and a \textcolor{black}{smooth} density function $f_{X}(\bs x)$ for random variable $X$, and $k:\mathbb{R} \to \mathbb{R}$ is the Gaussian kernel function with bandwidth $h,$ 
    where $\bs x:=(x_{1},\ldots,x_{\mathsf d})^{T}$, and $\bs y:=(y_{1},\ldots,y_{\mathsf d})^{T}$. In addition, $\hat{X}$ is the random sample of $X$.
We have that if $h\to 0$ as $n\to+\infty$, then
\begin{align*}
&\E_{p}\left[ \bs k_{\bs{x},\bs{h} }(X) \right] 
= f_{X}( \bs{x} )+\frac{1}{2} h^{2} G_{12}(k)\sum^{\mathsf d}_{i=1}\frac{\partial^2}{\partial \bs x_{i}^2}f_{X}(\bs x) +O(h^{4}), \\[1ex]
&\E_{p}\left[ \frac{X_{i}-x_{i}}{h}\bs k_{\bs x,\bs h}(X) \right] 
= hG_{12}(k)\frac{\partial}{\partial\bs x_i}f_{X}(\bs x)+O(h^3), \\[1ex]
&\E_{p}\left[ \bs k^2_{\bs x,\bs h}(X) \right]
= h^{-\mathsf d} \left\{ G_{20}(k)f_{X}(\bs x)
    +\frac{1}{2}h^{2} G_{22}(k)\sum^{\mathsf d}_{i=1}\frac{\partial^{2}}{\partial\bs x_{i}^{2}}f_{X}(\bs x)+O(h^{4}) \right\}, \\[1ex]
&\E_{p}\left[ \left(\frac{X_{i}-x_{i}}{h}\right)^2 \bs k^2_{\bs x,\bs h}(X) \right]
= h^{-\mathsf d}\left\{  G_{22}(k)f_{X}(\bs x)+\frac{1}{2}h^2 G_{24}(k)\sum^{\mathsf d}_{i=1}\frac{\partial^{2}}{\partial\bs x_{i}^{2}}f_{X}(\bs x)+O(h^4) \right\}, \\[1ex]
&\E_{p}\left[ \bs k^4_{\bs x,\bs h}(X) \right]
= h^{-3\mathsf d}\left\{ f_{X}(\bs x)G_{40}(k)+\frac{1}{2}h^{2}G_{42}(k)\sum^{\mathsf d}_{i=1}\frac{\partial^2}{\partial\bs x_{i}^2}f_{X}(\bs x)
    +O(h^{4}) \right\}, \\[1ex]
&\E_{p}\left[\left(\frac{X_{i}-x_{i}}{h}\right)^4 \bs k^4_{\bs x,\bs h}(X) \right]
= h^{-3\mathsf d}\left\{ f_{X}(\bs x)G_{44}(k) +\frac{1}{2} h^{2}G_{46}(k)\sum^{\mathsf d}_{i=1}\frac{\partial^2}{\partial\bs x_{i}^2}f_{X}(\bs x)+O(h^{4}) \right\}, \\[1ex]
&\E_{p}\left[ \bs k_{\bs x,\bs h}(X)\bs k_{\bs y,\bs h}(Y) \right]
= f_{X}(\bs x)f_{Y|X}(\bs y,\bs x)+\frac{1}{2} h^2 G_{12}(k)f_{X}(\bs x)\bigg[ \sum^{\mathsf d}_{i=1}\frac{\partial^{2}}{\partial\bs y_{i}^{2}}f_{Y|X}(\bs y,\bs x)
     +\sum^{\mathsf d}_{i=1}\frac{\partial^{2}}{\partial\bs x_{i}^{2}}f_{Y|X}(\bs y,\bs x) \bigg] +O(h^{4}), \\[1ex]
&\E_{p}\left[ \bs k^{2}_{x,h}(X)\bs k^{2}_{y,h}(Y) \right]
= h^{-2\mathsf d} \Bigg\{ G^2_{20}(k)f_{X}(\bs x)f_{Y|X}(\bs y,\bs x)+\frac{1}{2}h^2 G_{20}(k)G_{22}(k)f_{X}(\bs x) \bigg[ \sum^{\mathsf d}_{i=1}\frac{\partial^{2}}{\partial\bs y_{i}^{2}}f_{Y|X}(\bs y,\bs x)\\
    &\quad +\sum^{\mathsf d}_{i=1}\frac{\partial^{2}}{\partial\bs x_{i}^{2}}f_{Y|X}(\bs y,\bs x) \bigg] +O(h^4) \Bigg\}, 
\end{align*}
\begin{align*}
&\E_{p}\left[ \left(\frac{X_{i}-x_{i}}{h}\right)^2 \bs k^{2}_{x,h}(X)\bs k^{2}_{y,h}(Y) \right]
= h^{-2} \left\{ G_{20}(k)G_{22}(k)f_{X}(\bs x)f_{Y|X}(\bs y,\bs x)
     +\frac{1}{2}h^{2}G^{2}_{22}(k)f_{X}(\bs x)\sum^{\mathsf d}_{i=1} \frac{\partial^{2}}{\partial\bs y_{i}^{2}}f_{Y|X}(\bs y,\bs x) \right. \\
&\quad \left. + \frac{1}{2}h^{2}G_{20}(k)G_{24}(k)\sum^{\mathsf d}_{i=1} \frac{\partial^{2}}{\partial\bs x_{i}^{2}}f_{Y|X}(\bs y,\bs x)+O(h^{3}) \right\}.
\end{align*}
where $G_{ji}(k)=\int \nu^{i}k^{j}(\nu)d\nu$, $j\in \{1,\dots,4\}$ and $i\in \{0,\dots,6\}$.
\end{lemma}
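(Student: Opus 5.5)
The plan is to derive every identity in Lemma~\ref{multi_expect} the same way: write the expectation as a convolution integral, change variables to the scaled displacement, and Taylor-expand the density around the evaluation point. Take the first identity, $\E_p[\bs k_{\bs x,h}(X)]=\int h^{-\mathsf d}\prod_{i}k((x_i-z_i)/h) f_X(\bs z)\,d\bs z$. Substituting $\bs z=\bs x+h\bs \nu$ gives $d\bs z=h^{\mathsf d}d\bs\nu$, so the integral becomes $\int \prod_i k(\nu_i) f_X(\bs x+h\bs\nu)\,d\bs\nu$, using that the Gaussian kernel is symmetric. We then expand $f_X(\bs x+h\bs\nu)$ to fourth order with remainder. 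The product structure of $K$ makes each moment factor:
\begin{itemize}
\item $\int\prod_i k(\nu_i)\,d\bs\nu=G_{10}(k)^{\mathsf d}=1$;
\item odd moments vanish by symmetry, so the first- and third-order terms disappear;
\item mixed second moments $\nu_i\nu_j$ with $i\neq j$ vanish;
\item the pure second moments give $G_{12}(k)$.
\end{itemize}
This yields $f_X(\bs x)+\tfrac12h^2G_{12}(k)\sum_i\partial^2_{x_i}f_X(\bs x)$. The fourth-order remainder is $O(h^4)$, because the Gaussian has all moments finite and the derivatives are bounded, as assumed for smooth $f_X$.

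The other single-variable identities follow the same template.
\begin{itemize}
\item For $\E_p[\tfrac{X_i-x_i}{h}\bs k_{\bs x,h}(X)]$, the extra factor $\nu_i$ kills the even-order terms. Only the first-order term $h\nu_i^2\partial_{x_i}f_X$ survives, giving $hG_{12}\partial_{x_i}f_X$, and the next surviving term is the cubic one, which is $O(h^3)$.
\item For the squared kernel, $\bs k^2_{\bs x,h}=h^{-2\mathsf d}\prod_i k^2(\cdot)$. After the substitution one factor $h^{-\mathsf d}$ remains, and the moments become $\int\nu^m k^2$, i.e.\ $G_{2m}$. For a multivariate product, the zeroth-order factor is really $G_{20}^{\mathsf d}$; I will follow the paper's notation and absorb these product constants into $G_{20},G_{22}$, noting this convention explicitly.
\item The fourth powers work the same way, with $h^{-4\mathsf d}h^{\mathsf d}=h^{-3\mathsf d}$ and constants $G_{4m}$.
\item The weighted versions with $(\tfrac{X_i-x_i}{h})^2$ or $(\cdot)^4$ simply shift the moment index by $2$ or $4$ in coordinate $i$. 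This produces $G_{22},G_{24}$ and $G_{44},G_{46}$.
\end{itemize}

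The joint identities need one extra step: condition first. Write $\E_p[\bs k_{\bs x,h}(X)\bs k_{\bs y,h}(Y)]=\iint \bs k_h(\bs x-\bs z)\bs k_h(\bs y-\bs u)f_{Y|X}(\bs u,\bs z)f_X(\bs z)\,d\bs u\,d\bs z$. Substitute in both variables and Taylor-expand the product $f_{Y|X}(\bs y+h\bs\mu,\bs x+h\bs\nu)f_X(\bs x+h\bs\nu)$ jointly to second order, using the bounds of Assumption~\ref{asstwod_1}. The cross terms $\nu_i\mu_j$ and $\nu_i\nu_j$ vanish by symmetry. Since $f_X$ is the uniform density in the interior of $D_X$, its derivatives vanish there, so the $h^2$ term reduces to $\tfrac12h^2G_{12}f_X[\sum\partial^2_{y_i}f_{Y|X}+\sum\partial^2_{x_i}f_{Y|X}]$. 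This matches the stated form; equivalently, the terms involving derivatives of $f_X$ are treated as dropped. The squared-kernel joint moments follow identically, giving $h^{-2\mathsf d}$ and products such as $G_{20}G_{22}$. The last identity carries the extra weight $(\tfrac{X_i-x_i}{h})^2$ and is handled the same way; its stated prefactor $h^{-2}$ should read $h^{-2\mathsf d}$, and its remainder should be checked.

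I expect the main obstacle to be bookkeeping rather than analysis. Three points need care:
\begin{itemize}
\item making the constants consistent in $\mathsf d$ dimensions (powers $G^{\mathsf d-1}$ from the coordinates without weights);
\item justifying that boundary effects of $D_X$ are negligible, i.e.\ restricting to interior points at distance $\gg h$ so that Gaussian tails contribute $o(h^4)$;
\item controlling the Taylor remainder uniformly, using the bounded third- and higher-order derivatives together with the finite Gaussian moments.
\end{itemize}
The first thing I will try is to state the interior-point restriction explicitly, so that the remaining derivations stay routine.
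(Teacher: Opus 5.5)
Your plan is correct and matches the paper's route. The paper writes out no proof and only says the lemma follows the approach of \cite{hyndman1996estimating}, which is exactly your substitution $\bs z=\bs x+h\bs\nu$, Taylor expansion with vanishing odd Gaussian moments, and the tower rule for the joint moments. The discrepancies you flag are genuine defects of the stated formulas rather than of your argument: product constants such as $G_{20}^{\mathsf d}$ for $\mathsf d>1$, the prefactor $h^{-2\mathsf d}$ instead of $h^{-2}$, and the $f_X$-derivative terms that drop out only because $f_X$ is uniform and $\bs x$ is interior. One caveat: the $O(h^4)$ remainders need fourth-order smoothness of $f_X$ and $f_{Y|X}$ (e.g.\ Lipschitz third derivatives), which Assumption~\ref{asstwod_1} alone does not provide, so state it as an explicit hypothesis.
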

\textcolor{black}{In Lemma 3.3, $\mathbb{E}_p[\cdot]$ denotes the expectation under the true data-generating process, evaluated iteratively as
$\mathbb{E}_p[\cdot] = \mathbb{E}_{X} \Bigl[ \mathbb{E}_{Y \mid X} \bigl[ \cdot \mid X\bigr] \Bigr]
$ (i.e., using tower rule \cite{hyndman1996estimating}).
Expectations involving only $X$ are taken with respect to its marginal density $f_{X}$, while expectations involving both $X$ and $Y$ are taken with respect to the conditional distribution $f_{Y \mid X}$ followed by averaging with respect to $f_{X}$ which is assumed to be a uniform distribution over $D_X$.
}

\textcolor{black}{We next establish concentration bounds for the kernel-based estimators constructed from the sampled data. 
To this end, we consider the randomness induced by the data generation process, where $\hat{ X}_i$ are i.i.d. samples drawn according to the uniform density function $f_X$, and $\hat{ Y}_i$ are generated according to the conditional density function $f_{Y|X}$.
The uniform density function $f_X$ characterizes the data-generating process and serves as the reference distribution for the kernel-based estimators and their concentration analysis.}


Then, we can have the following results. 

\begin{lemma}\label{multi_lemma_midd}
    Let $\textcolor{black}{X=( X_{1},\ldots,X_{\mathsf d})}$ and $\textcolor{black}{Y=( Y_{1},\ldots, Y_{\mathsf d})}$ be $\mathsf d$-dimensional random variables with a CoDF $f_{Y|X}(\bs y,\bs x)$ for all $(\bs x,\bs y)\in D_{\bs X}\times D_{\bs Y}$, and $X$ is uniformly distributed over $D_{\bs X}$ with density $f_{X}(\bs x)$. Suppose that $k:\mathbb{R}\to \mathbb{R}$ is the Gaussian kernel function with bandwidth $h.$ In addition, samples $\{\textcolor{black}{\hat{ X}_{i}},~i=1,\ldots,n \}$ are selected uniformly from $D_{\bs X}$ with density function $f_{X}(\bs x)$, and for each $\textcolor{black}{\hat{ X}_{i}}$, sample $\textcolor{black}{\hat{Y}_{i}}$ is generated from $\textcolor{black}{(Y|\hat{ X}_{i})}$ with the CoDF $f_{Y|X}(\bs y,\bs x)$, $i\in \{1,\ldots,n \}$, where $\textcolor{black}{\hat{ X}_{i}=(\hat{ X}_{i1},\ldots,\hat{ X}_{i\,\mathsf d})}$ and $\textcolor{black}{\hat{ Y}_{i}=(\hat{ Y}_{i1},\ldots,\hat{ Y}_{i\,\mathsf d})}$. Then, for all $\tau_{\bs x},\tau_{\bs x,\bs y},\tau_{\bs x,d},\tau_{\bs x,\bs y, d}>0$, for any $j\in \{ 1,\ldots,\mathsf d \}$,
    we have 
\begin{align}
    &P\left(\left| \bar{f}(\bs x)-\hat{f}_{X}(\bs x) \right|\leq \epsilon_{\bs x} \right) 
    \ge 1 - 2\exp(-\tau_{\bs x}), \label{mult_mid_bernstain_1} \\
    &P\left(\left| \hat{f}_{YX}(\bs y,\bs x)\Bar{f}(\bs x)-\Bar{f}(\bs y,\bs x)\hat{f}_{X}(\bs x) \right| 
    \lesssim \textcolor{black}{\bar{\epsilon}}_{\bs x,\bs y} \right) 
    \ge 1 - 2\exp(-\tau_{\bs x,\bs y}), \label{mult_mid_bernstain_2} \\
    &P\left( \left| \frac{\partial}{\partial \textcolor{black}{x}_{j}}\bar{f}(\bs x) - \frac{\partial}{\partial \textcolor{black}{x}_{j}}\hat{f}_{X}(\bs x) \right|
    \leq \epsilon_{\bs x,d} \right) 
    \ge 1 - 2\exp(-\tau_{\bs x,d}), \text{ and }\label{mult_mid_berstain_3} \\
    &P\left( \left| \frac{\partial}{\partial\textcolor{black}{x}_{j}}\hat{f}_{YX}(\bs y,\bs x) - \frac{\partial}{\partial\textcolor{black}{x}_{j}}\bar{f}(\bs y,\bs x) \right|
    \lesssim \epsilon_{\bs x,\bs y,d} \right) 
    \ge 1 - 2\exp(-\tau_{\bs x,\bs y,d}), \label{mult_mid_berstain_4}
\end{align}
with the constants
\begin{align}
    &\epsilon_{\bs x} = \frac{\tau_{\bs x} R_{\bs x}}{3n} + \sqrt{ \frac{\tau_{\bs x}^2 R_{\bs x}^2}{9n^2} + \frac{2\tau_{\bs x}\delta_{\bs x}^2}{n} }, \label{epsi_mult_mid_bernstain_1} \\
    &\bar{\epsilon}_{\bs x,\bs y} = \frac{\tau_{\bs x,\bs y} \Bar{R}_{\bs x,\bs y}}{3n} + \sqrt{ \frac{\tau_{\bs x,\bs y}^2 \Bar{R}_{\bs x,\bs y}^2}{9n^2} + \frac{2\tau_{\bs x,\bs y}\bar{\delta}_{\bs x,\bs y}^2}{n} }, \label{epsi_mult_mid_bernstain_2} \\
    &\epsilon_{\bs x,d} = \frac{\tau_{\bs x,d} R_{\bs x,d}}{3n} + \sqrt{ \frac{\tau_{\bs x,d}^2 R_{\bs x,d}^2}{9n^2} + \frac{2\tau_{\bs x,d}\delta_{\bs x,d}^2}{n} }, \label{epsi_mult_mid_berstain_3} \\
    &\epsilon_{\bs x,\bs y,d} = \frac{\tau_{\bs x,\bs y,d} R_{\bs x,\bs y,d}}{3n} + \sqrt{ \frac{\tau_{\bs x,\bs y,d}^2 R_{\bs x,\bs y,d}^2}{9n^2} + \frac{2\tau_{\bs x,\bs y,d}\bar{\delta}_{\bs x,\bs y,d}^2}{n} }, \label{epsi_mult_mid_berstain_4}
\end{align}
where $\delta^2_{\bs x}=h^{-\mathsf d}  G_{20}(k)f_{X}(\bs x)-f^{2}_{X}(\bs x),$ $R_{\bs x}=\max_{u} h^{-\mathsf d}k^{\mathsf d}(u)+\E_{p}\bs k_{\bs x,h},$ $\bar{\delta}^2_{\bs x,\bs y}=h^{-2\mathsf d} G^2_{20}(k)f^{3}_{X}(\bs x)f_{Y|X}(\bs y,\bs x)$, 
\begin{align*}
    \bar{R}_{\bs x,\bs y}= 2\max_{u} h^{-2\mathsf d} k^{2\mathsf d}(u)\E_{p}\bs k_{\bs x,h},~ \delta^2_{\bs x,d}=h^{-(\mathsf d+2)}G_{22}(k)f_{X}(\bs x),~R_{\bs x,d}=h^{-(\mathsf d+1)} \max_{u}uk^{\mathsf d}(u),
\end{align*}    
\begin{align*}
    \Bar{\delta}^2_{\bs x,\bs y,d}=h^{-4}G_{20}(k)G_{22}(k)f_{X}(\bs x)C_{f},~R_{\bs x,\bs y,d}=h^{-2\mathsf d+1} \max_{u}u k^{2\mathsf d}(u)  +h^{-\frac{2\mathsf d+1}{2}}G^{\frac{1}{2}}_{22}(k)f^{\frac{1}{2}}_{X}(\bs x)h^{-\mathsf d}\max_{u}k^{\mathsf d}(u).
\end{align*}
\end{lemma}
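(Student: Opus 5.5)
Each of the four inequalities will come from the Bernstein inequality (Theorem~\ref{Bern_ineq}) applied to a suitable sample average of i.i.d.\ bounded random variables built from the pairs $(\hat X_i,\hat Y_i)$. The moments needed to identify $\mu$, $\delta^2$ and $R$ are read off from Lemma~\ref{multi_expect}, keeping only the leading terms as $h\to 0$; this truncation is what produces $\lesssim$ in \eqref{mult_mid_bernstain_2} and \eqref{mult_mid_berstain_4}.

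\textbf{Bound \eqref{mult_mid_bernstain_1}.} Set $Z_i=\bs k_{\bs x,h}(\hat X_i)$, so that $\hat f_X(\bs x)=\frac1n\sum_i Z_i$ and $\E Z_i=\bar f(\bs x)$. Since $k$ is the Gaussian kernel, $0\le Z_i\le h^{-\mathsf d}\max_u k^{\mathsf d}(u)$, and hence $|Z_i-\mu|\le R_{\bs x}$. The variance is $\E Z_i^2-(\E Z_i)^2$. By the third identity of Lemma~\ref{multi_expect} this equals $h^{-\mathsf d}G_{20}(k)f_X(\bs x)-f_X^2(\bs x)$ up to higher-order terms, which gives $\delta_{\bs x}^2$. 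Bernstein's inequality with $\tau=\tau_{\bs x}$ then yields \eqref{mult_mid_bernstain_1} and \eqref{epsi_mult_mid_bernstain_1}.

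\textbf{Bound \eqref{mult_mid_berstain_3}.} Differentiate under the sum: $\partial_{x_j}\hat f_X(\bs x)=\frac1n\sum_i h^{-1}\frac{\hat X_{ij}-x_j}{h}\bs k_{\bs x,h}(\hat X_i)$, where we use $k'(u)=-uk(u)$ for the Gaussian kernel. The expectation of this average is $\partial_{x_j}\bar f(\bs x)$. Each summand is bounded by $h^{-(\mathsf d+1)}\max_u u k^{\mathsf d}(u)=R_{\bs x,d}$; the mean is $O(1)$ by the second identity, so it can be absorbed. The second moment is $h^{-2}$ times the fourth identity, namely $h^{-(\mathsf d+2)}G_{22}(k)f_X(\bs x)$ to leading order, which is $\delta_{\bs x,d}^2$.

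\textbf{Bounds \eqref{mult_mid_bernstain_2} and \eqref{mult_mid_berstain_4}.} These involve products and are the main obstacle. For \eqref{mult_mid_bernstain_2}, write
\[
\hat f_{YX}\bar f(\bs x)-\bar f(\bs y,\bs x)\hat f_X=\frac1n\sum_i W_i,\qquad W_i:=\bs k_{\bs x,h}(\hat X_i)\bigl[\bs k_{\bs y,h}(\hat Y_i)\bar f(\bs x)-\bar f(\bs y,\bs x)\bigr].
\]
The $W_i$ are i.i.d.\ with mean exactly zero, because $\E[\bs k_{\bs x,h}\bs k_{\bs y,h}]=\bar f(\bs y,\bs x)$ and $\E\bs k_{\bs x,h}=\bar f(\bs x)$. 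This centered linearization is the key trick: it avoids handling a ratio. The range $\bar R_{\bs x,\bs y}$ follows from the sup-norm bounds on the two kernels times $\bar f(\bs x)=\E_p\bs k_{\bs x,h}$, with a factor $2$ from the triangle inequality. For the variance, expand $\E W_i^2$ using the identities for $\E[\bs k^2_{\bs x}\bs k^2_{\bs y}]$, $\E[\bs k_{\bs x}^2]$ and $\E[\bs k_{\bs x}\bs k_{\bs y}]$, and replace $\bar f(\bs x)$ by $f_X$. The dominant order is $h^{-2\mathsf d}G_{20}^2f_X^3f_{Y|X}$; the cross and square terms are only $O(h^{-\mathsf d})$ and are dropped, which gives $\bar\delta^2_{\bs x,\bs y}$ asymptotically.

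For \eqref{mult_mid_berstain_4}, take summands $h^{-1}\frac{\hat X_{ij}-x_j}{h}\bs k_{\bs x,h}(\hat X_i)\bs k_{\bs y,h}(\hat Y_i)$, whose mean is $\partial_{x_j}\bar f(\bs y,\bs x)$. Their second moment is $h^{-2}$ times the last identity of Lemma~\ref{multi_expect}, bounded using $f_{Y|X}\le C_f$ from Assumption~\ref{asstwod_1}(a), which yields $\bar\delta^2_{\bs x,\bs y,d}$. The range bound $R_{\bs x,\bs y,d}$ is the sup of the summand plus a bound on its mean via Cauchy--Schwarz; this is where the $G_{22}^{1/2}f_X^{1/2}$ term comes from.

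The delicate part throughout is careful bookkeeping of the powers of $h$ and the justification for discarding lower-order terms. I would verify that each discarded term is $o(\cdot)$ of the retained one under $h\to0$, and that the mean contributions in the $R$'s are of lower order. Applying Theorem~\ref{Bern_ineq} with the respective $\tau$'s then completes all four statements.
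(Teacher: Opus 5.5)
Your proposal is correct and is essentially the paper's proof: your $W_i$ is exactly the paper's centered variable $g_{\bs x,\bs y}(\hat X_i,\hat Y_i)=\bs k_{\bs x,\bs y,h}(\hat X_i,\hat Y_i)\,\E_p\bs k_{\bs x,h}-\bs k_{\bs x,h}(\hat X_i)\,\E_p\bs k_{\bs x,\bs y,h}$. You use the same range $\bar R_{\bs x,\bs y}$ and the same leading-order variance $h^{-2\mathsf d}G_{20}^2f_X^3f_{Y|X}$ in Bernstein's inequality; the paper bounds the variance in one step by $\E_p\bs k^2_{\bs x,\bs y,h}\,\E_p^2\bs k_{\bs x,h}$ rather than expanding, and your cross term (which needs $\E_p[\bs k^2_{\bs x,h}\bs k_{\bs y,h}]$, not listed in Lemma~\ref{multi_expect}) is nonpositive, so it can simply be dropped. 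The paper writes out only \eqref{mult_mid_bernstain_2} and calls the rest ``similar.'' Your treatments of \eqref{mult_mid_bernstain_1}, \eqref{mult_mid_berstain_3} and \eqref{mult_mid_berstain_4} are the natural analogues, but like the paper's constants they hold only to leading order, so \eqref{mult_mid_bernstain_1} and \eqref{mult_mid_berstain_3} actually come out with $\lesssim$ rather than $\le$.
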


\begin{proof}
     Here, we only give the proof for \textcolor{black}{inequality}~\eqref{mult_mid_bernstain_2}. The rest of the results have similar proofs. Rearranging $\hat{f}_{YX}(\bs y,\bs x)\Bar{f}(\bs x)-\Bar{f}(\bs y,\bs x)\hat{f}_{X}(\bs x)$ we can have 
    \begin{align}\label{mult_part_1}
        \hat{f}(\bs y,\bs x)\Bar{f}(\bs x)\!-\!\Bar{f}(\bs y,\bs x)\hat{f}(\bs x)\!=\!(\hat{f}(\bs y,\bs x)-\Bar{f}(\bs y,\bs x))\Bar{f}(\bs x) +\Bar{f}(\bs y,\bs x)(\Bar{f}(\bs x)-\hat{f}(\bs x)). 
    \end{align}
    We provide the probabilistic asymptotic bound for the right hand side of \textcolor{black}{equation}~\eqref{mult_part_1}.
    To simplify our discussion, we need to define a random variable in the following form 
    \begin{align*}
    g_{\bs x,\bs y}(\textcolor{black}{ X, Y}):=&\left[ \bs k_{\bs x,\bs y,h}(\textcolor{black}{ X,Y})-\E_{p} \bs k_{\bs x,\bs y,h}(\textcolor{black}{ X,Y})\right]\E_{p} \bs k_{\bs x,h}(\textcolor{black}{ X})
    +\left[ \E_{p} \bs k_{\bs x,h}(\textcolor{black}{ X})-\bs k_{\bs x,h}(\textcolor{black}{ X}) \right]\E_{p}\bs k_{\bs x,\bs y,h}(\textcolor{black}{ X,Y}).
    \end{align*}
    In fact, \textcolor{black}{equation}~\eqref{mult_part_1} can be treated as the expectation of random variable $g_{\bs x,\bs y}(\textcolor{black}{ X,Y})$ under samples $\hat{X}_{i}$ and $\hat{Y}_{i},$ $i\in\{1,\ldots,n\}$. Then, according to Bernstein inequality, we need to compute the bound and variance of $g_{\bs x,\bs y}(\textcolor{black}{ X,Y})$ for obtaining the bound of \textcolor{black}{equation}~\eqref{mult_part_1}.

First of all, we give the upper bound of $g_{\bs x,\bs y}(\textcolor{black}{\hat{X}_{i}},\textcolor{black}{\hat{Y}_{i}})$, as follows:
\begin{align*}
    |g_{\bs x,\bs y}(\textcolor{black}{\hat{X}_{i}},\textcolor{black}{\hat{Y}_{i}})|=&| \bs k_{\bs x,\bs y,h}(\textcolor{black}{\hat{X}_{i}},\textcolor{black}{\hat{Y}_{i}})\E_{p}\bs k_{\bs x,h}-\bs k_{\bs x,h}(\textcolor{black}{\hat{X}_{i}})\E_{p} \bs k_{\bs x,\bs y,h} |\\
    \leq& | \max_{(\textcolor{black}{X},\textcolor{black}{Y})\in D_X\times D_Y}\bs k_{\bs x,\bs y,h}(\textcolor{black}{X},\textcolor{black}{Y})\E_{p}\bs k_{\bs x,h}  |+| \max_{\textcolor{black}{X}\in D_X}k_{\bs x,h}\E_{p}k_{\bs x,\bs y,h}|\\
    =&\max_{(\textcolor{black}{X},\textcolor{black}{Y})\in D_X\times D_Y}h^{-2\mathsf d}\prod^{\mathsf d}_{i=1}k\left((x_{i}- X_{i})/h\right) k\left((y_{i}- Y_{i})/h\right)\\
    &+\max_{\textcolor{black}{X}\in D_X}h^{-\mathsf d}  \prod^{\mathsf d}_{i=1}k\left((x_{i}- X_{i})/h\right) \E_{p}\bs k_{\bs x,\bs y,h}\\
    =&\max_{u}  h^{-2\mathsf d} k^{2\mathsf d}(u)\E_{p}\bs k_{\bs x,h}+\max_{u}h^{-\mathsf d} k^{\mathsf d}(u){\color{black}\E_{p}\bs k_{\bs x,\bs y,h}}\\
    \leq&2\max_{u} h^{-2\mathsf d} k^{2\mathsf d}(u)\E_{p}\bs k_{\bs x,h}.
\end{align*}
Thus, the upper bound of $g_{\bs x,\bs y}(\textcolor{black}{\hat{X}_{i}},\textcolor{black}{\hat{Y}_{i}})$ is $ \Bar{R}_{\bs x,\bs y}:=2\max_{u} h^{-2\mathsf d} k^{2\mathsf d}(u)\E_{p}\bs k_{\bs x,h},$
such that for all $(\bs x,\bs y)\in \mathbb R^{2\mathsf d},$ $|g_{\bs x,\bs y}(\textcolor{black}{\hat{X}_{i}},\textcolor{black}{\hat{Y}_{i}})|\leq \Bar{R}_{\bs x,\bs y}$.

Secondly, we need to give the upper bound of the variance of $g_{\bs x,\bs y}(\textcolor{black}{\hat{X}_{i}},\textcolor{black}{\hat{Y}_{i}}).$
%
Since
\begin{align*}
    \delta^2_{\bs x,\bs y}:=&\Var_{p}[ g_{\bs x,\bs y}(\textcolor{black}{\hat{X}_{i}},\textcolor{black}{\hat{Y}_{i}})]\\
    =&\E_{p}\left[\left[ k_{\bs x,\bs y,h}(\textcolor{black}{\hat{X}_{i}},\textcolor{black}{\hat{Y}_{i}})\E_{p}k_{\bs x,h}-k_{\bs x,h}(\textcolor{black}{\hat{X}_{i}})\E_{p}k_{\bs x,\bs y,h} \right]^2\right]
    \leq\E_{p}k^2_{\bs x,\bs y,h}(\textcolor{black}{\hat{X}_{i}},\textcolor{black}{\hat{Y}_{i}})\E^2_{p}k_{\bs x,h},
\end{align*}
we can have 
$
    \delta^2_{\bs x,\bs y}\lesssim h^{-2\mathsf d} G^2_{20}(k)f^{3}_{X}(\bs x)f_{Y|X}(\bs y,\bs x)
,$ based on Assumption~\ref{asstwod_1} and Lemma.~\ref{multi_expect}.
Thus, we can obtain $\bar{\delta}^2_{\bs x,\bs y}=h^{-2\mathsf d} G^2_{20}(k)f^{3}_{X}(\bs x)f_{Y|X}(\bs y,\bs x).$

According to Bernstein inequality, 
we can obtain the confidence of the upper bound of $\hat{f}(\bs y,\bs x)\Bar{f}(\bs x)-\Bar{f}(\bs y,\bs x)\hat{f}(\bs x)$, as follows.
For any  $n$ and $\tau_{\bs x,\bs y}$, there exists $\epsilon_{\bs x,\bs y}$ such that 

\begin{align*}
    P\left( \left|  (\hat{f}(\bs y,\bs x)-\Bar{f}(\bs y,\bs x))\Bar{f}(\bs x) +\Bar{f}(\bs y,\bs x)(\Bar{f}(\bs x)-\hat{f}(\bs x)) \right|\leq \epsilon_{x,y}\right)\ge 1-2\exp(-\tau_{\bs x,\bs y}),
\end{align*}
where 
\begin{align*}
    \epsilon_{\bs x,\bs y}=&\max\left\{\frac{\tau_{\bs x,\bs y} \Bar{R}_{\bs x,\bs y}}{3n}+\sqrt{ \frac{\tau_{\bs x,\bs y}^{2} \Bar{R}_{\bs x,\bs y}^2}{9n^2}+\frac{2\tau_{\bs x,\bs y}\delta_{\bs x,\bs y}^2}{n}  }, \frac{\tau_{\bs x,\bs y} \Bar{R}_{\bs x,\bs y}}{3n}-\sqrt{ \frac{\tau_{\bs x,\bs y}^{2}\Bar{R}_{\bs x,\bs y}^2}{9n^2}+\frac{2\tau_{\bs x,\bs y}\delta_{\bs x,\bs y}^2}{n} }  \right\}\\
    =&\frac{\tau_{\bs x,\bs y} \Bar{R}_{\bs x,\bs y}}{3n}+\sqrt{ \frac{\tau_{\bs x,\bs y}^{2} \Bar{R}_{\bs x,\bs y}^2}{9n^2}+\frac{2\tau_{\bs x,\bs y}\delta_{\bs x,\bs y}^2}{n}  }.
\end{align*}
Since $\delta^2_{\bs x,\bs y}\lesssim \bar{\delta}^2_{\bs x,\bs y},$ it still holds that 
\begin{align*}
    P\left( \left|  (\hat{f}(\bs y,\bs x)-\Bar{f}(\bs y,\bs x))\Bar{f}(\bs x) +\Bar{f}(\bs y,\bs x)(\Bar{f}(\bs x)-\hat{f}(\bs x)) \right|\lesssim \bar{\epsilon}_{x,y}\right)\ge 1-2\exp(-\tau_{\bs x,\bs y}),
\end{align*}
where $\bar{\epsilon}_{\bs x,\bs y}$ is provided in \eqref{epsi_mult_mid_bernstain_2} and is obtained by replacing $\delta_{\bs x,\bs y}$ with $\bar{\delta}_{\bs x,\bs y}$ in the above expression for $\epsilon_{\bs x,\bs y}$.
\end{proof}

Note that selecting the same values $\tau_{\bs x}=\tau_{\bs x,\bs y}=\tau_{\bs x,d}=\tau_{\bs x,\bs y,d}$ will result in having the same probability for the inequalities in~\eqref{mult_mid_bernstain_1},~\eqref{mult_mid_bernstain_2},~\eqref{mult_mid_berstain_3} and \eqref{mult_mid_berstain_4}.
Also note that if Assumption~\ref{asstwod_1} is not satisfied, we can have a weaker result shown in Appendix~\ref{supp_main}, in the sense that the values of $\textcolor{black}{\bar{\epsilon}}^{\textcolor{black}{c}}_{\bs x,\bs y}$ and $\epsilon^{\textcolor{black}{c}}_{\bs x,\bs y,d}$ are larger than $\textcolor{black}{\bar{\epsilon}}_{\bs x,\bs y}$ and $\epsilon_{\bs x,\bs y,d}$ in \textcolor{black}{inequality}~\eqref{mult_mid_bernstain_2} and \eqref{mult_mid_berstain_4}, respectively. 


\subsection{Asymptotic Convergence of Transition Probabilities and the Estimated LC}

Next we formulate a confidence for the asymptotic upper bound on the discrepancy
between the true transition probabilities and their non-parametric estimates.
\textcolor{black}{For the following analysis,  we need to assume that, for every fixed $\bs x\in D_X$ and every compact set $A\subseteq D_Y$,
\begin{equation}
\label{eq:lip_residue}
    R_n(\bs x,\bs y) = \left|
\bigl(\hat f_{YX}(\bs y,\bs x)-\bar f(\bs y,\bs x)\bigr)
\bar f(\bs x)
+
\bar f(\bs y,\bs x)
\bigl(\bar f(\bs x)-\hat f_X(\bs x)\bigr)
\right|
\end{equation}
is Lipschitz continuous with respect to $\bs y$ on $A$, with a deterministic Lipschitz
constant independent of $n$. 
This condition is mild in the present setting.
Under Assumption~\ref{asstwod_1}, the underlying density functions have the required smoothness, while $\hat f_{YX}(\bs y,\bs x)$ is constructed using a sufficiently smooth kernel, such as the Gaussian kernel considered below.
Hence, on every compact set $A$, the Lipschitz continuity of
$R_n(\bs x,\bs y)$ with respect to $\bs y$ is natural.
This condition enables a uniform concentration bound over
$\bs y\in A$ via a finite-covering argument; see
Proposition~\ref{prop:uniform_y_transition}.}


\begin{theorem}\label{multi_Bound_TranProbi}
   Suppose $\textcolor{black}{ X=( X_{1},\ldots, X_{\mathsf d})}$ and $\textcolor{black}{Y=( Y_{1},\ldots, Y_{\mathsf d})}$ are $\mathsf d$-dimensional random variables, which have a CoDF $f_{Y|X}(\bs y,\bs x)$ for all $(\bs x,\bs y)\in D_{\bs X}\times D_{\bs Y}$, and $X$ is uniformly distributed over $D_{\bs X}$ with density $f_{X}(\bs x)$. Suppose that $k:\mathbb{R}\to \mathbb{R}$ is the Gaussian kernel function with bandwidth $ h$.
   In addition, samples $\{\textcolor{black}{\hat{ X}_{i}},~i=1,\ldots,n \}$ are selected uniformly from $D_{\bs X}$ with density function $f_{X}(\bs x)$, and for each $\textcolor{black}{\hat{ X}_{i}}$, sample $\textcolor{black}{\hat{ Y}_{i}}$ is generated from $(\textcolor{black}{Y}|\textcolor{black}{\hat{ X}_{i}})$ with the CoDF $f_{Y|X}(\bs y,\bs x)$, $i\in \{1,\ldots,n \}$, where $\textcolor{black}{\hat{ X}_{i}=(\hat{ X}_{i1},\ldots,\hat{ X}_{i\,\mathsf d})}$ and $\textcolor{black}{\hat{ Y}_{i}=(\hat{ Y}_{i1},\ldots,\hat{ Y}_{i\,\mathsf d})}$.
   \textcolor{black}{Suppose further that the Lipschitz condition on
$R_n(\bs x,\bs y)$ in \eqref{eq:lip_residue} holds.}
   Then, for all $\tau_{p}>0$, any $ \bs x \in D_{\bs X}$, and any set $A\subseteq D_{\bs Y},$
   we have 
\begin{align}\label{ineq_multi_Bound_TranProbi}
\left| \int_{A} \hat{f}_{Y|X}(\bs y,\bs x)d\bs y \!-\! \int_{A} f_{Y|X}(\bs y,\bs x)d\bs y  \right|\!\lesssim\! \epsilon_{p} ,
\end{align}
with a probability at least $1-2\exp{(-\tau_{p})},$
where 
\begin{align*}
    \epsilon_{p}=&\int_{A}  \frac{1}{\hat{f}(\bs x)\Bar{f}(\bs x)} \left[ \frac{\tau_{\bs x,\bs y} \Bar{R}_{\bs x,\bs y}}{3n}+\sqrt{ \frac{\tau_{\bs x,\bs y}^{2} \Bar{R}_{\bs x,\bs y}^2}{9n^2}+\frac{2\tau_{\bs x,\bs y}\bar{\delta}_{\bs x,\bs y}^2}{n}  }\right] d\bs y\\
    &+\int_{A}  \bigg\{ \frac{1}{2} h^{2}G_{12}(k)\bigg| \sum^{\mathsf d}_{i=1}\frac{\partial^{2}}{\partial y_{i}^{2}}f_{Y|X}(\bs y,\bs x)
    +\sum^{\mathsf d}_{i=1}\frac{\partial^{2}}{\partial x_{i}^{2}}f_{Y|X}(\bs y,\bs x)  \bigg|+O(h^{4}) \bigg\} dy,
\end{align*}
with
$ \bar{R}_{\bs x,\bs y}= 2\max_{u} h^{-2\mathsf d} k^{2\mathsf d}(u)\E_{p}\bs k_{\bs x,h}$ and $\bar{\delta}^2_{\bs x,\bs y}=h^{-2\mathsf d} G^2_{20}(k)f^{3}_{X}(\bs x)f_{Y|X}(\bs y,\bs x).$
\end{theorem}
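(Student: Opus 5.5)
We split the pointwise estimation error of the conditional density into a stochastic part and a bias part, and integrate each over $A$. For fixed $(\bs x,\bs y)$ we write
\begin{align*}
\hat f_{Y|X}(\bs y,\bs x)-f_{Y|X}(\bs y,\bs x)
=\Bigl[\frac{\hat f_{YX}(\bs y,\bs x)}{\hat f_X(\bs x)}-\frac{\bar f(\bs y,\bs x)}{\bar f(\bs x)}\Bigr]
+\Bigl[\frac{\bar f(\bs y,\bs x)}{\bar f(\bs x)}-f_{Y|X}(\bs y,\bs x)\Bigr].
\end{align*}
Over a common denominator, the first bracket equals
$\bigl(\hat f_{YX}(\bs y,\bs x)\bar f(\bs x)-\bar f(\bs y,\bs x)\hat f_X(\bs x)\bigr)/\bigl(\hat f_X(\bs x)\bar f(\bs x)\bigr)$.
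Its numerator is exactly the quantity whose absolute value is $R_n(\bs x,\bs y)$ in \eqref{eq:lip_residue}, by the rearrangement \eqref{mult_part_1}. Inequality \eqref{mult_mid_bernstain_2} of Lemma~\ref{multi_lemma_midd} then bounds it by $\bar\epsilon_{\bs x,\bs y}$ with probability at least $1-2\exp(-\tau_{\bs x,\bs y})$. After dividing by $\hat f_X(\bs x)\bar f(\bs x)$ and integrating over $A$, this gives the first integral in $\epsilon_p$.

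For the second (bias) bracket we use Lemma~\ref{multi_expect}. Since $f_X$ is uniform, its derivatives vanish in the interior of $D_X$. Hence $\bar f(\bs x)=f_X(\bs x)+O(h^4)$, and
\[
\bar f(\bs y,\bs x)=f_X(\bs x)f_{Y|X}(\bs y,\bs x)+\tfrac12 h^2G_{12}(k)f_X(\bs x)\Bigl[\sum_i\partial^2_{y_i}f_{Y|X}+\sum_i\partial^2_{x_i}f_{Y|X}\Bigr]+O(h^4).
\]
Dividing and discarding higher-order terms gives
$\bigl|\bar f(\bs y,\bs x)/\bar f(\bs x)-f_{Y|X}(\bs y,\bs x)\bigr|\lesssim \tfrac12 h^2G_{12}(k)\bigl|\sum_i\partial^2_{y_i}f_{Y|X}+\sum_i\partial^2_{x_i}f_{Y|X}\bigr|+O(h^4)$.
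This bound is deterministic, and Assumption~\ref{asstwod_1}(b) ensures the constants are uniformly bounded, so its integral over $A$ is finite. It produces the second integral in $\epsilon_p$. The two parts are then combined with the triangle inequality, using $\left|\int_A g\right|\le\int_A|g|$.

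\textbf{Main obstacle.} The probability statement must hold simultaneously for all $\bs y\in A$, since we integrate a pointwise high-probability bound. Lemma~\ref{multi_lemma_midd} only gives a bound for each fixed $\bs y$. We will therefore use the Lipschitz assumption on $R_n(\bs x,\cdot)$ in \eqref{eq:lip_residue}, whose constant $L$ is deterministic and independent of $n$. We cover the compact set $A$ (for unbounded $A$, we first restrict to a compact subset) by a finite $\eta$-net $\{\bs y_1,\dots,\bs y_N\}$. We apply \eqref{mult_mid_bernstain_2} at each net point with $\tau_{\bs x,\bs y}=\tau_p+\log N$, and use a union bound to get probability at least $1-2\exp(-\tau_p)$. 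We then extend to every $\bs y\in A$ at an additional cost $L\eta$.

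Choosing $\eta$ of smaller order than $\bar\epsilon_{\bs x,\bs y}$, e.g.\ $\eta=n^{-1}$, makes $L\eta$ a higher-order term. The added $\log N$ in $\tau$ grows only like $\log n$, which we absorb into the asymptotic relation $\lesssim$ in the same way the paper treats the $\tau$-dependence. We will also need $\bar\epsilon_{\bs x,\bs y}$ to be continuous in $\bs y$, which holds through $f_{Y|X}$, so that the net values control the integrand. The remaining step is routine: integrate the uniform pointwise bound and verify that the extra terms vanish at higher order. Finally, we identify the resulting confidence level with $\tau_p$.
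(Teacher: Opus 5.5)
Your proposal follows the paper's proof essentially step for step: you use the same split into the Bernstein-controlled numerator from Lemma~\ref{multi_lemma_midd}, divided by $\hat f_X(\bs x)\bar f(\bs x)$, plus the deterministic bias term from Lemma~\ref{multi_expect}, and you get uniformity over $\bs y\in A$ from the finite-cover and union-bound argument of Proposition~\ref{prop:uniform_y_transition} under the Lipschitz condition \eqref{eq:lip_residue}. One refinement: the $\log N_\eta=O(\log n)$ inflation of the confidence parameter is not a higher-order term, so rather than absorbing it into $\lesssim$ you should take $\tau_{\bs x,\bs y}=\tau_p+\log N_\eta$ inside $\epsilon_p$ (as the paper's proposition does), and bounding by the maximum of $\bar\epsilon_{\bs x,\bs y}$ over the net points removes the need for continuity in $\bs y$.
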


\begin{proof}
   Since 
   \begin{align*}
    \hat{f}_{Y|X}(\bs y,\bs x)-f_{Y|X}(\bs y,\bs x)
    &= \frac{\hat{f}_{YX}(\bs y,\bs x)}{\hat{f}_{X}(\bs x)}-\frac{\Bar{f}(\bs y,\bs x)}{\hat{f}_{X}(\bs x)}+\frac{\Bar{f}(\bs y,\bs x)}{\hat{f}_{X}(\bs x)}-\frac{\Bar{f}(\bs y,\bs x)}{\Bar{f}(\bs x)}+\frac{\Bar{f}(\bs y,\bs x)}{\Bar{f}(\bs x)}-f_{Y|X}(\bs y,\bs x)\\
    &=\frac{1}{\hat{f}_{X}(\bs x)\Bar{f}(\bs x)}\left[  (\hat{f}_{YX}(\bs y,\bs x)-\Bar{f}(\bs y,\bs x))\Bar{f}(\bs x) +\Bar{f}(\bs y,\bs x)(\Bar{f}(\bs x)-\hat{f}_{X}(\bs x))\right]
    +\left( \frac{\Bar{f}(\bs y,\bs x)}{\Bar{f}(\bs x)}-f_{Y|X}(\bs y,\bs x) \right),
\end{align*}
   we can obtain
   \begin{align}\label{main_ineq}
       D(\bs x):=&\int_{A} \hat{f}_{Y|X}(\bs y,\bs x)d\bs y - \int_{A} f_{Y|X}(\bs y,\bs x)d\bs y
    =\int_{A} \hat{f}_{Y|X}(\bs y,\bs x)-f_{Y|X}(\bs y,\bs x) d\bs y\nonumber\\ 
    =&\int_{A}\Bigg[\frac{1}{\hat{f}_{X}(\bs x)\Bar{f}(\bs x)}\left[  (\hat{f}_{YX}(\bs y,\bs x)-\Bar{f}(\bs y,\bs x))\Bar{f}(\bs x) +\Bar{f}(\bs y,\bs x)(\Bar{f}(\bs x)-\hat{f}_{X}(\bs x))\right]
    +\left( \frac{\Bar{f}(\bs y,\bs x)}{\Bar{f}(\bs x)}-f_{Y|X}(\bs y,\bs x) \right) \Bigg] d\bs y\nonumber\\
    \leq&\int_{A} \frac{1}{\hat{f}_{X}(\bs x)\Bar{f}(\bs x)}\left|  (\hat{f}_{YX}(\bs y,\bs x)-\Bar{f}(\bs y,\bs x))\Bar{f}(\bs x) +\Bar{f}(\bs y,\bs x)(\Bar{f}(\bs x)-\hat{f}_{X}(\bs x))\right| d\bs y
    +\int_{A} \left| \frac{\Bar{f}(\bs y,\bs x)}{\Bar{f}(\bs x)}-f_{Y|X}(\bs y,\bs x) \right|d\bs y.
   \end{align}

   Thus, our next tasks include calculating the upper bounds for $ \left| \frac{\Bar{f}(\bs y,\bs x)}{\Bar{f}(\bs x)}-f_{Y|X}(\bs y,\bs x) \right|$ and 
   $$\frac{1}{\hat{f}_{X}(\bs x)\Bar{f}(\bs x)}\left|  (\hat{f}_{YX}(\bs y,\bs x)-\Bar{f}(\bs y,\bs x))\Bar{f}(\bs x) +\Bar{f}(\bs y,\bs x)(\Bar{f}(\bs x)-\hat{f}_{X}(\bs x))\right|.$$

   For $\left| \frac{\Bar{f}(\bs y,\bs x)}{\Bar{f}(\bs x)}-f_{Y|X}(\bs y,\bs x)\right|,$
   it is easy to obtain
   \begin{align}\label{multi_sec_term}
     &\frac{\Bar{f}(\bs y,\bs x)}{\Bar{f}(\bs x)}-f_{Y|X}(\bs y,\bs x)
     =\frac{1}{2} h^2 G_{12}(K)\bigg[ \sum^{\mathsf d}_{i=1}\frac{\partial^{2}}{\partial y_{i}^{2}}f_{Y|X}(\bs y,\bs x)
    +\sum^{\mathsf d}_{i=1}\frac{\partial^{2}}{\partial x_{i}^{2}}f_{Y|X}(\bs y,\bs x) \bigg]
    +O(h^{4}),
\end{align}
since $\frac{\Bar{f}(\bs y,\bs x)}{\Bar{f}(\bs x)}=\frac{\E \left[  k_{\bs x,h}(\textcolor{black}{X})k_{\bs y,h}(\textcolor{black}{Y}) \right]}{\E\left[ k_{\bs x,h}(\textcolor{black}{X}) \right]}$, where $\E \left[  k_{\bs x,h}(\textcolor{black}{X})k_{\bs y,h}(\textcolor{black}{Y}) \right]$ and $\E\left[ k_{\bs x,h}(\textcolor{black}{X}) \right]$ can be found in Lemma \ref{multi_expect}.

For the upper bound of $\frac{1}{\hat{f}_{X}(\bs x)\Bar{f}(\bs x)}\left|  (\hat{f}_{YX}(\bs y,\bs x)-\Bar{f}(\bs y,\bs x))\Bar{f}(\bs x) +\Bar{f}(\bs y,\bs x)(\Bar{f}(\bs x)-\hat{f}_{X}(\bs x))\right|,$ based on Lemma \ref{multi_lemma_midd}, we can have that for any $\tau_{\bs x,\bs y}>0$,
\begin{align}\label{multi_confi_MiddSecond}
    &P\left( \frac{1}{\hat{f}_{X}(\bs x)\Bar{f}(\bs x)} \left|  (\hat{f}_{YX}(\bs y,\bs x)-\Bar{f}(\bs y,\bs x))\Bar{f}(\bs x) +\Bar{f}(\bs y,\bs x)(\Bar{f}(\bs x)-\hat{f}_{X}(\bs x))\right|\lesssim \frac{1}{\hat{f}_{X}(\bs x)\Bar{f}(\bs x)}\bar{\epsilon}_{\bs x,\bs y}   \right)
    \ge 1-2\exp{(-\tau_{\bs x,\bs y})},
\end{align}
where 
$\bar{\epsilon}_{\bs x,\bs y}=\frac{\tau_{\bs x,\bs y} \Bar{R}_{\bs x,\bs y}}{3n}+\sqrt{ \frac{\tau_{\bs x,\bs y}^{2} \Bar{R}_{\bs x,\bs y}^2}{9n^2}+\frac{2\tau_{\bs x,\bs y}\bar{\delta}_{\bs x,\bs y}^2}{n}  },$ $ \bar{R}_{\bs x,\bs y}= 2\max_{u} h^{-2\mathsf d} k^{2\mathsf d}(u)\E_{p}\bs k_{\bs x,h},$ and 
$$\bar{\delta}^2_{\bs x,\bs y}=h^{-2\mathsf d} G^2_{20}(k)f^{3}_{X}(\bs x)f_{Y|X}(\bs y,\bs x).$$

\textcolor{black}{By Proposition~\ref{prop:uniform_y_transition},
the pointwise concentration inequality
\eqref{multi_confi_MiddSecond}
holds uniformly over
$\bs y\in A$
with probability at least
$1-2\exp(-\tau_p)$.}
By substituting the results from \textcolor{black}{equation}~\eqref{multi_sec_term} and \textcolor{black}{inequality}~\eqref{multi_confi_MiddSecond} into the inequality \eqref{main_ineq}, we can have 
\begin{align*}
    D(\bs x)\lesssim & \int_{A}  \frac{1}{\hat{f}(\bs x)\Bar{f}(\bs x)} \left[ \frac{\tau_{\bs x,\bs y} \Bar{R}_{\bs x,\bs y}}{3n}+\sqrt{ \frac{\tau_{\bs x,\bs y}^{2} \Bar{R}_{\bs x,\bs y}^2}{9n^2}+\frac{2\tau_{\bs x,\bs y}\bar{\delta}_{\bs x,\bs y}^2}{n}  }\right] d\bs y\\
    &+\int_{A}  \bigg\{ \frac{1}{2} h^{2}G_{12}(k)\bigg| \sum^{\mathsf d}_{i=1}\frac{\partial^{2}}{\partial y_{i}^{2}}f_{Y|X}(\bs y,\bs x)
    +\sum^{\mathsf d}_{i=1}\frac{\partial^{2}}{\partial x_{i}^{2}}f_{Y|X}(\bs y,\bs x)  \bigg|+O(h^{4}) \bigg\} dy
\end{align*}
with a probability at least $1-2\exp{(-\tau_{p})}$, where $\tau_{p}=\tau_{\bs x,\bs y}$. 
\end{proof}

\begin{remark}
    The above theorem can provide a confidence for the asymptotic upper bound on the discrepancy between the true transition probabilities and their non-parametric estimates, provided that coarse bounds on the higher-order derivatives of the conditional density function are available.
    In addition, if $h$ is the order of $O(n^{-1/(6+2\mathsf d)})$ to guarantee the smoothness of the transition kernel (see the line of reasoning in Section~\ref{b_lc_estima} below), then the convergence rate of $\epsilon_{p}$ is $O(n^{-1/(3+ \mathsf d)})$.
\end{remark}

\textcolor{black}{
For the running example in \eqref{run_examp}, Theorem~\ref{multi_Bound_TranProbi} shows that the estimated probability of transitioning to any set $A$ from any state $\bs{x}$, obtained via the non-parametric estimator, converges asymptotically to the corresponding transition probability in the original system. This result is useful when constructing an IMDP abstraction for this system.
}

\textcolor{black}{
\begin{remark}
\label{rem:uniform_transition_probability}
Suppose that the transition-probability estimation error $
\hat P_{i j,a}(\bs x)
-
P_{i j,a}(\bs x)$
is uniformly Lipschitz continuous on every compact abstract state
$q_i$, with a Lipschitz constant $L_{\mathrm{tr}}$ independent of
$n$.
This condition is mild in the present setting.
Under Assumption \ref{asstwod_1}, the true transition probability
$P_{i j,a}(\bs x)$ is Lipschitz continuous.
Moreover, $\hat P_{i j,a}(\bs x)$ is constructed using a Lipschitz continuous kernel (e.g., the Gaussian or Epanechnikov kernel), which naturally yields a Lipschitz continuous estimator on every compact partition.
For any $\eta>0$, let
$\mathcal N_\eta$
be a finite $\eta$-cover of each compact abstract state
$q_i$, and define 
$\bar{\epsilon}_{p}\!\left(\tau_p+O(\log n)\right)
:=
\max_{1\leq j\leq n_{Q}}\max_{
\bs x\in\mathcal N_\eta
}
\epsilon_{p,j} (\bs x,\tau_p+O(\log n)),$
where $\epsilon_{p,j}$ denotes the pointwise error bound for the
transition probability from $q_i$ to $q_j$ in
Theorem~\ref{multi_Bound_TranProbi} evaluated at $\bs x\in q_i$, and the
$O(\log n)$ term is introduced by the covering number through the
standard union-bound argument.
Then, the pointwise bound in
Theorem~\ref{multi_Bound_TranProbi} extends simultaneously over all
abstract state $q_i$ as
\begin{align*}
\max_{\substack{
1\leq j\leq n_Q
}}
\sup_{\bs x\in q_i}
\left|
\hat P_{i j,a}(\bs x)
-
P_{i j,a}(\bs x)
\right|
\lesssim
\bar\epsilon_p\bigl(\tau_p+O(\log n)\bigr)
\end{align*}
with probability at least $1-2\exp(-\tau_p)$.
The result follows by applying Theorem~\ref{multi_Bound_TranProbi}
on a finite cover of each compact abstract state and using a union
bound. A detailed statement and proof are provided in Proposition \ref{prop:uniform_transition_probability} in 
Appendix~\ref{supp_main}.
For ease of notation, throughout the remainder of the paper, we write
$\bar{\epsilon}*_{p}
:=
\bar{\epsilon}_{p}
\!\left(\tau_p+O(\log n)\right).$
\end{remark}
}

Next theorem provides a confidence for the asymptotic upper bound on the discrepancy
between partial derivatives of the CoDF and their non-parametric estimates, which are needed in the analysis of LC estimation.

\begin{theorem}\label{multi_Bound_FirstDeriva}
Suppose $\textcolor{black}{ X=( X_{1},\ldots, X_{\mathsf d})}$ and $\textcolor{black}{ Y=( Y_{1},\ldots, Y_{\mathsf d})}$ are $\mathsf d$-dimensional random variables, which have a CoDF $f_{Y|X}(\bs y,\bs x)$ for all $(\bs x,\bs y)\in D_{\bs X}\times D_{\bs Y}$, and $X$ is uniformly distributed over $D_{\bs X}$ with density $f_{X}(\bs x)$.
Suppose that $k:\mathbb{R}\to \mathbb{R}$ is the Gaussian kernel function with bandwidth $ h$.
In addition, samples $\{\textcolor{black}{\hat{ X}_{i}},~i=1,\ldots,n \}$ are selected uniformly from $D_{\bs X}$ with density function $f_{X}(\bs x)$, and for each $\textcolor{black}{\hat{ X}_{i}}$, sample $\textcolor{black}{\hat{Y}_{i}}$ is generated from $(\textcolor{black}{ Y|\hat{ X}_{i}})$ with the CoDF $f_{Y|X}(\bs y,\bs x)$, $i\in \{1,\ldots,n \}$, where $\textcolor{black}{\hat{ X}_{i}=(\hat{ X}_{i1},\ldots,\hat{ X}_{i\,\mathsf d})}$ and $\textcolor{black}{\hat{ Y}_{i}=(\hat{ Y}_{i1},\ldots,\hat{ Y}_{i\,\mathsf d})}$.
Then, for all $\tau_{D}>0$ and $ (\bs x,\bs y)\in D_{\bs X}\times D_{\bs Y}$,
   we have 
   \begin{align*}
   &\left| \left|\frac{\partial}{\partial \textcolor{black}{x}_{j}}\left[\hat{f}_{YX}(\bs y,\bs x)\hat{f}_{X}^{-1}(\bs x) \right]\right|-\left|\frac{\partial}{\partial \textcolor{black}{x}_{j}}f_{Y|X}(\bs y,\bs x)\right|\right|\lesssim \epsilon_d, \text{with a probability at least $1-2\exp(-\tau_{D}),$ \textcolor{black}{$\forall j\in\{1,\ldots,\mathsf d\}$} }
   \end{align*}
   where 
   \begin{align}\label{pac_bound_lc}
       \epsilon_d= &\hat{f}_{X}^{-1}(\bs x)\epsilon_{\bs x,\bs y,d}+\hat{f}_{YX}(\bs y,\bs x)\hat{f}_{X}^{-2}(\bs x)\epsilon_{\bs x,d}+\hat{f}_{X}^{-1}(\bs x)\bar{f}^{-1}(\bs x)\epsilon_{\bs x}\nonumber\\
    &+\frac{1}{2} h^2 G_{12}(k)\bigg| \sum^{\mathsf d}_{i=1}\frac{\partial^{3}}{\partial\textcolor{black}{x}_{j} \partial\textcolor{black}{y}_{i}^{2}}f_{Y|X}(\bs y,\bs x)
    +\sum^{\mathsf d}_{i=1}\frac{\partial^{3}}{\partial\textcolor{black}{x}_{i}^{2}\partial \textcolor{black}{x}_{j}}f_{Y|X}(\bs y,\bs x) \bigg|.
   \end{align}
\end{theorem}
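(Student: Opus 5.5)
The plan is to mirror the proof of Theorem~\ref{multi_Bound_TranProbi}, with derivatives in place of values. By the reverse triangle inequality, it suffices to bound $\bigl|\partial_{x_j}[\hat f_{YX}\hat f_X^{-1}] - \partial_{x_j} f_{Y|X}\bigr|$. The quotient rule gives
\begin{align*}
\frac{\partial}{\partial x_j}\left[\frac{\hat f_{YX}(\bs y,\bs x)}{\hat f_X(\bs x)}\right]
=\hat f_X^{-1}(\bs x)\,\partial_{x_j}\hat f_{YX}(\bs y,\bs x)-\hat f_{YX}(\bs y,\bs x)\hat f_X^{-2}(\bs x)\,\partial_{x_j}\hat f_X(\bs x).
\end{align*}
The same identity holds for the smoothed quantities $\bar f(\bs y,\bs x)/\bar f(\bs x)$. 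I would then telescope the difference into three pieces:
\begin{itemize}
\item[(i)] $\hat f_X^{-1}\bigl(\partial_{x_j}\hat f_{YX}-\partial_{x_j}\bar f(\bs y,\bs x)\bigr)$;
\item[(ii)] $\hat f_{YX}\hat f_X^{-2}\bigl(\partial_{x_j}\hat f_X-\partial_{x_j}\bar f(\bs x)\bigr)$, plus the analogous swaps of $\hat f_{YX}\to\bar f(\bs y,\bs x)$ and $\hat f_X\to \bar f(\bs x)$ in the prefactors;
\item[(iii)] the deterministic bias $\partial_{x_j}[\bar f(\bs y,\bs x)/\bar f(\bs x)]-\partial_{x_j}f_{Y|X}$.
\end{itemize}

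Piece (i) is controlled directly by \eqref{mult_mid_berstain_4}, which gives the term $\hat f_X^{-1}\epsilon_{\bs x,\bs y,d}$. Piece (ii) is controlled by \eqref{mult_mid_berstain_3}, which gives $\hat f_{YX}\hat f_X^{-2}\epsilon_{\bs x,d}$.

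The remaining prefactor-swap terms involve $\hat f_X^{-1}-\bar f^{-1}=(\bar f-\hat f_X)/(\hat f_X\bar f)$. This is bounded by \eqref{mult_mid_bernstain_1}, which produces $\hat f_X^{-1}\bar f^{-1}\epsilon_{\bs x}$, multiplied by derivative factors of $\bar f$. These factors are $O(1)$ by Lemma~\ref{multi_expect}, and in fact vanish to leading order for uniform $f_X$ in the interior, since $\partial_{x_j}f_X=0$ there. The swap $\hat f_{YX}\to\bar f(\bs y,\bs x)$ multiplies $\partial_{x_j}\bar f(\bs x)=O(h^2)$, so it is of higher order and is absorbed by $\lesssim$. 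Products of two stochastic errors are likewise higher order and dropped.

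For (iii), I would differentiate the expansions of Lemma~\ref{multi_expect} under the integral sign. The expansion of $\bar f(\bs y,\bs x)/\bar f(\bs x)$ in \eqref{multi_sec_term} holds uniformly in $\bs x$, because its remainder is governed by the third-order bounds in Assumption~\ref{asstwod_1}(c),(d). Differentiating it in $x_j$ gives
\begin{align*}
\tfrac12 h^2G_{12}(k)\Bigl[\textstyle\sum_i\partial_{x_j}\partial^2_{y_i}f_{Y|X}+\sum_i\partial_{x_j}\partial^2_{x_i}f_{Y|X}\Bigr]+O(h^3\text{ or }h^4).
\end{align*}
Since $f_X$ is constant on $D_X$, the ratio of $f_X$ factors cancels exactly. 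This yields the last term of $\epsilon_d$.

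Finally, I would choose $\tau_{\bs x}=\tau_{\bs x,d}=\tau_{\bs x,\bs y,d}=\tau_D$, as suggested after Lemma~\ref{multi_lemma_midd}. This step I expect to be delicate. A union bound over the three events gives only probability $1-6e^{-\tau_D}$. To obtain the stated $1-2e^{-\tau_D}$, I would mimic the proof of \eqref{mult_mid_bernstain_2} and combine the three linearized fluctuation terms into one i.i.d. centered variable $g(\hat X_i,\hat Y_i)$. This variable is a linear combination of $\partial_{x_j}\bs k_{\bs x,\bs y,h}$, $\partial_{x_j}\bs k_{\bs x,h}$ and $\bs k_{\bs x,h}$ with deterministic (smoothed) coefficients. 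I would apply Theorem~\ref{Bern_ineq} once to this variable, bounding its range and variance by the triangle inequality and Minkowski, using the $R$'s and $\delta$'s of Lemma~\ref{multi_lemma_midd}. That bounds the deviation by the sum of the three $\epsilon$'s, which is justified since the square root is subadditive. The main obstacle is making this linearization rigorous, namely replacing random prefactors such as $\hat f_X^{-1}$ by deterministic ones up to $\lesssim$. If it fails, I would fall back on the union bound, with the constant adjusted to $\tau_D+\log 3$.
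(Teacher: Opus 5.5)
Your decomposition is essentially the paper's. It uses the quotient rule and telescopes against $\bar f(\bs y,\bs x)/\bar f(\bs x)$. The three stochastic pieces are handled by \eqref{mult_mid_bernstain_1}, \eqref{mult_mid_berstain_3} and \eqref{mult_mid_berstain_4}, the bias by Lemma~\ref{multi_expect}, and $\partial_{x_j}\bar f(\bs x)=0$ comes from the uniform design.

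One correction concerns the factor multiplying $\hat f_X^{-1}-\bar f^{-1}$. That factor is $\partial_{x_j}\bar f(\bs y,\bs x)\approx f_X(\bs x)\,\partial_{x_j}f_{Y|X}(\bs y,\bs x)$, and it does \emph{not} vanish for uniform $f_X$. This is exactly why the $\epsilon_{\bs x}$ term survives in $\epsilon_d$ at all. The paper keeps this factor in its intermediate inequality and then silently drops it in $\epsilon_d$. You have to do the same, treating it as an $O(1)$ constant, to land on the stated bound.

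You genuinely differ from the paper in the probability bookkeeping. The paper simply sets $\tau_{\bs x}=\tau_{\bs x,d}=\tau_{\bs x,\bs y,d}=\tau_D$ and asserts $1-2e^{-\tau_D}$ without any union bound. Your alternative applies Bernstein once to the linearized combined variable. The range is bounded by the triangle inequality and the standard deviation by Minkowski. Sub-additivity then follows from the Euclidean triangle inequality applied to the vectors $(\tau R_k/3n,\sqrt{2\tau/n}\,\delta_k)$. This does produce a single event of probability $1-2e^{-\tau_D}$ with the sum of the three scaled $\epsilon$'s. The price is deterministic prefactors such as $\bar f^{-1}$ in place of $\hat f_X^{-1}$, which agree only to leading order.

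Even so, the linearization remainder is a product of two fluctuations, and it is small only on a further high-probability event. So a fully rigorous version still needs some slack, such as your $\tau_D+\log 3$ fallback. At the paper's $\lesssim$ level of rigor, however, your treatment is strictly more careful than the original.
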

\begin{proof}
Through the product rule of \textcolor{black}{differentiation}, we can write  
    \begin{align}\label{differece_derivative}
    \left|\frac{\partial}{\partial\textcolor{black}{x}_{j}}\left[\hat{f}_{YX}(\bs y,\bs x)\hat{f}_{X}^{-1}(\bs x) \right]-\frac{\partial}{\partial\textcolor{black}{x}_{j}}f_{Y|X}(\bs y,\bs x) \right|
    =&\bigg| \hat{f}_{X}^{-1}(\bs x)\frac{\partial}{\partial\textcolor{black}{x}_{j}}\hat{f}_{YX}(\bs y,\bs x)-\hat{f}_{YX}(\bs y,\bs x)\hat{f}_{X}^{-2}(\bs x)\frac{\partial}{\partial\textcolor{black}{x}_{j}}\hat{f}_{X}(\bs x)\nonumber\\
    &-\frac{\partial}{\partial\textcolor{black}{x}_{j}}f_{Y|X}(\bs y,\bs x) \bigg|.
    \end{align}
    Next we do rearrangements for $\hat{f}_{X}^{-1}(\bs x)\frac{\partial}{\partial\textcolor{black}{x}_{j}}\hat{f}_{YX}(\bs y,\bs x)-\hat{f}_{YX}(\bs y,\bs x)\hat{f}_{X}^{-2}(\bs x)\frac{\partial}{\partial\textcolor{black}{x}_{j}}\hat{f}_{X}(\bs x)$, as follows:
\begin{align}\label{multi_ineq_LC_firsTer}
    &\hat{f}_{X}^{-1}(\bs x)\frac{\partial}{\partial\textcolor{black}{x}_{j}}\hat{f}_{YX}(\bs y ,\bs x)-\hat{f}_{YX}(\bs y,\bs x)\hat{f}_{X}^{-2}(\bs x)\frac{\partial}{\partial\textcolor{black}{x}_{j}}\hat{f}_{X}(\bs x)\nonumber\\
=&\hat{f}_{X}^{-1}(\bs x)\left[ \frac{\partial}{\partial\textcolor{black}{x}_{j}}\hat{f}_{YX}(\bs y,\bs x)-\frac{\partial}{\partial\textcolor{black}{x}_{j}}\bar{f}(\bs y,\bs x)  \right]+\hat{f}_{YX}(\bs y,\bs x)\hat{f}_{X}^{-2}(\bs x)\left[  \frac{\partial}{\partial\textcolor{black}{x}_{j}}\bar{f}(\bs x)-\frac{\partial}{\partial\textcolor{black}{x}_{j}}\hat{f}_{X}(\bs x) \right]\nonumber\\
&+\left[ \hat{f}_{X}^{-1}(\bs x)-\bar{f}^{-1}(\bs x)  \right]\frac{\partial}{\partial\textcolor{black}{x}_{j}}\bar{f}(\bs y,\bs x)+\left[ \Bar{f}(\bs y,\bs x)-\hat{f}_{YX}(\bs y,\bs x) \right]\hat{f}_{X}^{-2}(\bs x)\frac{\partial}{\partial\textcolor{black}{x}_{j}}\bar{f}(\bs x)\nonumber\\
&+\Bar{f}^{-1}(\bs x)\frac{\partial}{\partial\textcolor{black}{x}_{j}}\bar{f}(\bs y,\bs x).
\end{align}

Since $f_{X}(\bs x)$ is the density function of the uniform distribution, i.e., $\frac{\partial}{\partial\textcolor{black}{x}_{j}}\bar{f}(\bs x)=0,$  
we have 
\begin{align}\label{multi_Deriv_MainIneq}
    &\left| \frac{\partial}{\partial\textcolor{black}{x}_{j}}\left[\hat{f}_{YX}(\bs y,\bs x)\hat{f}_{X}^{-1}(\bs x) \right]-\frac{\partial}{\partial\textcolor{black}{x}_{j}}f_{Y|X}(\bs y,\bs x)\right|\nonumber\\
    \leq&\hat{f}_{X}^{-1}(\bs x)\left| \frac{\partial}{\partial\textcolor{black}{x}_{j}}\hat{f}_{YX}(\bs y,\bs x)-\frac{\partial}{\partial\textcolor{black}{x}_{j}}\bar{f}(\bs y,\bs x)  \right|+\hat{f}_{YX}(\bs y,\bs x)\hat{f}_{X}^{-2}(\bs x)\left|  \frac{\partial}{\partial\textcolor{black}{x}_{j}}\bar{f}(\bs x)-\frac{\partial}{\partial\textcolor{black}{x}_{j}}\hat{f}_{X}(\bs x) \right|\nonumber\\
    &+ \hat{f}_{X}^{-1}(\bs x)\bar{f}^{-1}(\bs x)\left| \hat{f}_{X}(\bs x)-\bar{f}(\bs x)  \right|\frac{\partial}{\partial\textcolor{black}{x}_{j}}\bar{f}(\bs y,\bs x)+\left|\Bar{f}^{-1}(\bs x)\frac{\partial}{\partial\textcolor{black}{x}_{j}}\bar{f}(\bs y,\bs x)-\frac{\partial}{\partial\textcolor{black}{x}_{j}}f_{Y|X}(\bs y,\bs x)\right|.
\end{align}

According to Lemma \ref{multi_expect}, we can have 
\begin{align}\label{multi_deriv_inequ}
    &\Bar{f}^{-1}(\bs x)\frac{\partial}{\partial\textcolor{black}{x}_{j}}\bar{f}(\bs y,\bs x)-\frac{\partial}{\partial\textcolor{black}{x}_{j}}f_{Y|X}(\bs y,\bs x)\nonumber\\
    =&\frac{1}{2} h^2 G_{12}(k)\bigg[ \sum^{\mathsf d}_{i=1}\frac{\partial^{3}}{\partial\textcolor{black}{x}_{j} \partial\textcolor{black}{y}_{i}^{2}}f_{Y|X}(\bs y,\bs x)
    +\sum^{\mathsf d}_{i=1}\frac{\partial^{3}}{\partial\textcolor{black}{x}_{i}^{2}\partial \textcolor{black}{x}_{j}}f_{Y|X}(\bs y,\bs x) \bigg]
    +O(h^{4}).
\end{align}

By plugging \textcolor{black}{equation}~\eqref{multi_deriv_inequ} into \textcolor{black}{inequality}~\eqref{multi_Deriv_MainIneq} and applying Lemma \ref{multi_lemma_midd}, for $\tau_{D}:=\tau_{\bs x}=\tau_{\bs x,d}=\tau_{\bs x,\bs y,d},$ we can deduce that 
\begin{align*}
    &\left| \frac{\partial}{\partial\textcolor{black}{x}_{j}}\left[\hat{f}_{YX}(\bs y,\bs x)\hat{f}_{X}^{-1}(x) \right]-\frac{\partial}{\partial\textcolor{black}{x}_{j}}f_{Y|X}(\bs y,\bs x)\right|\\
    \lesssim & \hat{f}_{X}^{-1}(\bs x)\epsilon_{\bs x,\bs y,d}+\hat{f}_{YX}(\bs y,\bs x)\hat{f}_{X}^{-2}(\bs x)\epsilon_{\bs x,d}+\hat{f}_{X}^{-1}(\bs x)\bar{f}^{-1}(\bs x)\epsilon_{\bs x}\\
    &+\frac{1}{2} h^2 G_{12}(k)\bigg| \sum^{\mathsf d}_{i=1}\frac{\partial^{3}}{\partial\textcolor{black}{x}_{j} \textcolor{black}{y}_{i}^{2}}f_{Y|X}(\bs y,\bs x)
    +\sum^{\mathsf d}_{i=1}\frac{\partial^{3}}{\partial\textcolor{black}{x}_{i}^{2}\partial \textcolor{black}{x}_{j}}f_{Y|X}(\bs y,\bs x) \bigg|,
\end{align*}
with a probability at least $1-2\exp(-\tau_{D}).$

\end{proof}

\begin{remark}
    If $h=n^{-1/(6+2\mathsf d)}$, the convergence rate of $\epsilon_{d}$ can be $O(n^{-1/(3+ \mathsf d)})$.
\end{remark}

\textcolor{black}{
\begin{remark}
\label{rem:uniform_first_derivative}
Suppose that the first-derivative estimation error is uniformly
Lipschitz continuous on the compact domain
$D_{\bs X}\times D_{\bs Y}$, with a Lipschitz constant independent of
$n$.
This assumption is also mild. Similar to
Remark~\ref{rem:uniform_transition_probability},
Assumption~\ref{asstwod_1} implies the Lipschitz continuity of
$\frac{\partial}{\partial x_j}f_{Y|X}(\bs y,\bs x)$, while
$\frac{\partial}{\partial x_j}
\!\left[
\hat f_{YX}(\bs y,\bs x)\hat f_X^{-1}(\bs x)
\right]$
is induced by a sufficiently smooth kernel estimator.
For any $\delta>0$, let
$\mathcal N_\delta$
be a finite $\delta$-cover of
$D_{\bs X}\times D_{\bs Y}$, and let $\tau_\delta
:=
\tau_D+\log(\mathsf dN_\delta),$
where $N_\delta=|\mathcal N_\delta|$.
Define $\bar\epsilon_d(\tau_\delta)
:=
\max_{\substack{
1\le j\le\mathsf d\\
(\bs x,\bs y)\in\mathcal N_\delta
}}
\epsilon_{d}(\bs x,\bs y,j;\tau_\delta),$
where $\epsilon_{d}$ denotes the bound given by
\eqref{pac_bound_lc}
evaluated at
$(\bs x,\bs y)$
for the $j$-th derivative.
Then, Theorem~\ref{multi_Bound_FirstDeriva} admits the
uniform extension
\begin{align*}
    \max_{1\le j\le \mathsf d}
\sup_{(\bs x,\bs y)\in D_{\bs X}\times D_{\bs Y}}
\left|
\left|\frac{\partial}{\partial x_j}
\!\left[
\hat f_{YX}(\bs y,\bs x)\hat f_X^{-1}(\bs x)
\right]\right|
-
\left|\frac{\partial}{\partial x_j}f_{Y|X}(\bs y,\bs x)\right|
\right|
\lesssim
\bar\epsilon_d(\tau_D+O(\log n)),
\end{align*}
with probability at least
$1-2\exp(-\tau_D)$.
The proof follows the same covering and union-bound argument as
Proposition~\ref{prop:uniform_transition_probability} and is omitted.
For ease of notation, throughout the remainder of the paper, we
write $\bar{\epsilon}^*_d
:=
\bar{\epsilon}_d(\tau_D+O(\log n)).$
\end{remark}
}

A corollary of the above theorem is a bound on the discrepancy between the LC of the CoDF and that of its non-parametric estimate, as stated next. 

\begin{corollary}
\label{LC_confide_estim}
Let $\textcolor{black}{ X=( X_{1},\ldots, X_{\mathsf d})}$ and $\textcolor{black}{ Y=( Y_{1},\ldots, Y_{\mathsf d})}$ \textcolor{black}{be} $\mathsf d$-dimensional random variables, which have a CoDF $f_{Y|X}(\bs y,\bs x)$ for all $(\bs x,\bs y)\in D_{\bs X}\times D_{\bs Y}$, and $X$ is uniformly distributed over $D_{\bs X}$ with density $f_{X}(\bs x)$.
Suppose that $k:\mathbb{R}\to \mathbb{R}$ is the Gaussian kernel function with bandwidth $h$.
In addition, samples $\{\textcolor{black}{\hat{X}_{i}},~i=1,\ldots,n \}$ are selected uniformly from $D_{\bs X}$ with density function $f_{X}(\bs x)$, and for each $\textcolor{black}{\hat{X}_{i}}$, sample $\textcolor{black}{\hat{Y}_{i}}$ is generated from $(\textcolor{black}{Y|\hat{ X}_{i}})$ with the CoDF $f_{Y|X}(\bs y,\bs x)$, $i\in \{1,\ldots,n \}$, where $\textcolor{black}{\hat{ X}_{i}=(\hat{ X}_{i1},\ldots,\hat{ X}_{i\,\mathsf d})}$ and $\textcolor{black}{\hat{ Y}_{i}=(\hat{ Y}_{i1},\ldots,\hat{ Y}_{i\,\mathsf d})}$.
\textcolor{black}{Suppose further that the Lipschitz condition in
Remark~\ref{rem:uniform_first_derivative} holds.}
Then, for any $\tau_{\textcolor{black}{D}}$, there exists $\textcolor{black}{\bar{\epsilon^*}_d}$ such that
\begin{align*}\label{uniform_tranPro_error_first_der}
    \left|\hat{L}-L\right|\lesssim \textcolor{black}{\bar{\epsilon}^*_d}, \text{ with a probability at least $1-2\exp{(-\tau_{{\textcolor{black}{D}}})}$,}
\end{align*}
where $\hat{L}=\max\{ \hat{L}_{1},\ldots,\hat{L}_{\mathsf d} \}$ and $L=\max\{ L_{1},\ldots,L_{\mathsf d} \}$. Assuming that $\hat{L}_{i}=\max_{(\bs x,\bs y)\in D_{\bs X}\times D_{\bs Y}}\left| \frac{\partial}{\partial\textcolor{black}{x}_{i}}\hat{f}_{Y|X}(\bs y,\bs x) \right|$ and 
$L_{i}=\max_{(\bs x,\bs y)\in D_{\bs X}\times D_{\bs Y}}\left| \frac{\partial}{\partial\textcolor{black}{x}_{i}}f_{Y|X}(\bs y,\bs x) \right|,$ $ i\in \{ 1,\ldots,\mathsf d \}.$
\end{corollary}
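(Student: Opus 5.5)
The plan is to derive the corollary from the uniform extension of Theorem~\ref{multi_Bound_FirstDeriva} stated in Remark~\ref{rem:uniform_first_derivative}, combined with the elementary fact that taking suprema and maxima is $1$-Lipschitz in the sup-norm. We work on the event $\mathcal E$ on which the uniform bound of Remark~\ref{rem:uniform_first_derivative} holds. Under the assumed Lipschitz condition, this event has probability at least $1-2\exp(-\tau_D)$. On $\mathcal E$, for every $i\in\{1,\ldots,\mathsf d\}$ and every $(\bs x,\bs y)\in D_{\bs X}\times D_{\bs Y}$ we have
\begin{align*}
\left|\,\left|\tfrac{\partial}{\partial x_i}\hat f_{Y|X}(\bs y,\bs x)\right|-\left|\tfrac{\partial}{\partial x_i}f_{Y|X}(\bs y,\bs x)\right|\,\right|\lesssim \bar\epsilon^*_d .
\end{align*}
This uses the identification $\hat f_{Y|X}=\hat f_{YX}\hat f_X^{-1}$ from \eqref{condi_densityestima}.

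The first step is to pass from this pointwise-uniform bound to the coordinatewise constants $\hat L_i$ and $L_i$. Write $\hat g_i(\bs x,\bs y)=|\partial_{x_i}\hat f_{Y|X}|$ and $g_i=|\partial_{x_i}f_{Y|X}|$. Both are continuous on the compact set $D_{\bs X}\times D_{\bs Y}$: the estimate is built from the smooth Gaussian kernel, and $g_i$ is continuous by Assumption~\ref{asstwod_1}. Hence both maxima are attained. Let $(\bs x^*,\bs y^*)$ maximize $g_i$. Then
\begin{align*}
\hat L_i\ge \hat g_i(\bs x^*,\bs y^*)\gtrsim g_i(\bs x^*,\bs y^*)-\bar\epsilon^*_d=L_i-\bar\epsilon^*_d .
\end{align*}
Taking a maximizer of $\hat g_i$ instead gives $L_i\gtrsim \hat L_i-\bar\epsilon^*_d$ in the same way. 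Together these yield $|\hat L_i-L_i|\lesssim\bar\epsilon^*_d$ for each $i$, all on the single event $\mathcal E$.

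The second step repeats the same argument over the finite index set. Suppose $\hat L=\hat L_{i_1}$ and $L=L_{i_2}$. Then
\begin{align*}
\hat L\ge \hat L_{i_2}\gtrsim L-\bar\epsilon^*_d \quad\text{and}\quad L\ge L_{i_1}\gtrsim \hat L-\bar\epsilon^*_d,
\end{align*}
so $|\hat L-L|\lesssim\bar\epsilon^*_d$ on $\mathcal E$. Because $\mathcal E$ already carries the maximum over $j$ and the supremum over the domain, no further union bound is needed and the probability remains $1-2\exp(-\tau_D)$.

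The main obstacle is making sure that the uniform-in-$(\bs x,\bs y)$ bound, and not only the pointwise one, holds with the stated probability. This is exactly what Remark~\ref{rem:uniform_first_derivative} provides through the $\delta$-cover, the inflated confidence level $\tau_D+\log(\mathsf d N_\delta)$, and the $n$-independent Lipschitz constant. I would check that the $\lesssim$ relation survives taking suprema. This requires that the neglected higher-order terms, such as $O(h^4)$ and the $O(n^{-1})$ pieces, are uniform over the compact domain. That uniformity follows from the uniform derivative bounds in Assumption~\ref{asstwod_1}(a)--(d) together with the fact that $f_X$ is uniform, hence bounded away from zero on $D_{\bs X}$.
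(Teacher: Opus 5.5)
Your proposal is correct and follows essentially the same route as the paper: both work on the single high-probability event supplied by Remark~\ref{rem:uniform_first_derivative}, evaluate the uniform bound at the data-dependent maximizers of $|\partial_{x_i}\hat f_{Y|X}|$ and $|\partial_{x_j} f_{Y|X}|$, and discard the remaining difference because it is nonpositive by maximality. The only cosmetic difference is the bookkeeping. The paper compares the joint maximizers over index and domain in one step, with a case split on the sign of $\hat L-L$. You instead first bound $|\hat L_i-L_i|$ for each coordinate and then pass to the maximum over $i$.
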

\begin{proof}
    It is known that there exist $(\bs x^{*},\bs y^{*}), (\bar{\bs x},\bar{\bs y})\in D_{\bs X}\times D_{\bs Y}$ and $i,j\in \{1,\ldots,\mathsf d\}$ such that $\hat{L}=\left| \frac{\partial}{\partial\textcolor{black}{x}_{i}}\hat{f}_{Y|X}(\bs y^{*},\bs x^{*}) \right|$ and $L=\left| \frac{\partial}{\partial\textcolor{black}{x}_{j}}f_{Y|X}(\bar{\bs y},\bar{\bs x}) \right|.$ 
    According to Theorem \ref{multi_Bound_FirstDeriva} and \textcolor{black}{Remark \ref{rem:uniform_first_derivative}}, there exists a $\textcolor{black}{\bar{\epsilon}^*_d}>0$ and $n^{*}$ such that $\Big| \frac{\partial}{\partial\textcolor{black}{x}_{i}}\hat{f}_{Y|X}(\bs y^{*},\bs x^{*})-\frac{\partial}{\partial\textcolor{black}{x}_{i}}f_{Y|X}(\bs y^{*},\bs x^{*})\Big|\lesssim \textcolor{black}{\bar{\epsilon}^*_d},$ with a probability at least $1-2\exp(-\tau_{\textcolor{black}{D}}).$
    If $\hat{L}-L\ge 0,$ we can have 
    \begin{align*}
        &\left| \frac{\partial}{\partial\textcolor{black}{x}_{i}}\hat{f}_{Y|X}(\bs y^{*},\bs x^{*}) \right|-\left| \frac{\partial}{\partial\textcolor{black}{x}_{j}}f_{Y|X}(\bar{\bs y},\bar{\bs x}) \right|\\
        =&\left| \frac{\partial}{\partial\textcolor{black}{x}_{i}}\hat{f}_{Y|X}(\bs y^{*},\bs x^{*}) \right|-\left| \frac{\partial}{\partial\textcolor{black}{x}_{i}}f_{Y|X}(\bs y^{*},\bs x^{*}) \right|+\left| \frac{\partial}{\partial\textcolor{black}{x}_{i}}f_{Y|X}(\bs y^{*},\bs x^{*}) \right| -\left| \frac{\partial}{\partial\textcolor{black}{x}_{j}}f_{Y|X}(\bar{\bs y},\bar{\bs x}) \right|,
    \end{align*}
    implying that $\left| \frac{\partial}{\partial\textcolor{black}{x}_{i}}\hat{f}_{Y|X}(\bs y^{*},\bs x^{*}) \right|-\left| \frac{\partial}{\partial\textcolor{black}{x}_{j}}f_{Y|X}(\bs y^{\textcolor{black}{*}},\bs x^{\textcolor{black}{*}}) \right|-\left| \frac{\partial}{\partial\textcolor{black}{x}_{i}}f_{Y|X}(\bs y^{*},\bs x^{*}) \right| +\left| \frac{\partial}{\partial\textcolor{black}{x}_{j}}f_{Y|X}(\bar{\bs y},\bar{\bs x}) \right|\lesssim \textcolor{black}{\bar{\epsilon}^*_d}$.
Since $\left| \frac{\partial}{\partial\textcolor{black}{x}_{i}}f_{Y|X}(\bs y^{*},\bs x^{*}) \right|-\left| \frac{\partial}{\partial\textcolor{black}{x}_{j}}f_{Y|X}(\bar{\bs y},\bar{\bs x}) \right|\leq 0,$ we then can obtain 
$
\left| \frac{\partial}{\partial\textcolor{black}{x}_{i}}\hat{f}_{Y|X}(\bs y^{*},\bs x^{*}) \right|-\left| \frac{\partial}{\partial\textcolor{black}{x}_{j}}f_{Y|X}(\bar{\bs y},\bar{\bs x}) \right|\lesssim \textcolor{black}{\bar{\epsilon}^*_d}
$ with a probability at least $1-2\exp(-\tau_{\textcolor{black}{D}}).$
Similarly, we can guarantee that if $\hat{L}-L< 0,$ $\left| \frac{\partial}{\partial\textcolor{black}{x}_{j}}f_{Y|X}(\bar{\bs y},\bar{\bs x}) \right|-\left| \frac{\partial}{\partial\textcolor{black}{x}_{i}}\hat{f}_{Y|X}(\bs y^{*},\bs x^{*}) \right|\lesssim \textcolor{black}{\bar{\epsilon}^*_d},$
with a probability at least $1-2\exp(-\tau_{\textcolor{black}{D}}).$

\end{proof}

\textcolor{black}{For the running example \eqref{run_examp}, the LC is the maximum absolute value of the partial derivatives of its CoDF in \eqref{run_examp_codf}. This LC is estimated via the partial derivative of its kernel-based non-parametric estimator \eqref{condi_densityestima} computed from the samples. By Corollary~\ref{LC_confide_estim}, the resulting estimate $\hat{L}$ converges asymptotically to the true LC $L$ with high probability.}

\begin{remark}
In the above, we showed through $\epsilon_d$ the expression on the upper bound of the difference between the estimated and original Lipschitz constant of the density function of the transition kernel.
A precise bound can be obtained by proving a closeness guarantee for the estimates of the third-derivative terms in \textcolor{black}{equation}~\eqref{pac_bound_lc}.
In addition, a coarse asymptotic upper bound can be derived when loose bounds on the higher-order derivatives of the conditional density function are available.
\end{remark}


\subsection{LC Estimation Algorithm and Closeness Guarantee}




Building on Corollary~\ref{LC_confide_estim}, we propose a novel algorithm using NPE to estimate the LC of the CoDF of the stochastic system $\Sigma_{ss}$, and we provide a probabilistic asymptotic bound for the estimation of the LC.
This guarantee is crucial for designing a partitioning strategy that ensures the probabilities of satisfying the \textcolor{black}{specifications} 
on $\Sigma_{ss}$ remain close to those on its finite abstraction. The partitioning strategy will be discussed in Section~\ref{sec:IMDP_abstraction_NPE}. In this section, we focus on deriving this new bound.


We propose a method to estimate the LC of a CoDF $f_{Y|X}$ on a given domain $D_X\times D_Y$ using \textcolor{black}{equation}~\eqref{condi_densityestima}. The estimation method is presented in Algorithm~\ref{algo_enviro}, which is based on taking samples of $X$ uniformly on $D_X$, then taking samples of $Y$ from $(Y|X)$ associated with samples of $X$, constructing $\hat f_{Y|X}$ according to \textcolor{black}{equation}~\eqref{condi_densityestima},
taking partial derivative of $\hat f_{Y|X}$, and finally computing the maximum absolute values of derivatives on the domain $D_X\times D_Y$ and across dimensions.
The algorithm also iterates over these steps and compute\textcolor{black}{s} the empirical mean of the results in Step~\ref{step_emp}. Note that the max operator in Step~\ref{step_max} corresponds to using infinity norm in the definition of the LC. Other norms could be used similarly.

\begin{algorithm}
\caption{(\textbf{LC\_Estimation}) Estimating the Lipschitz constant of $f_{Y|X}(\bs{y},
\bs{x})$}
\label{algo_enviro}
      
    \begin{algorithmic}[1]
    \REQUIRE  Domain $D_X\times D_Y$, sample generators of $(Y|X)$, number of iterations $m$
    \FOR{$\mu=1:m$}
    \STATE\label{step_H} Select bandwidths $H_{\mathsf x}$ and $H_{\mathsf y}$ and kernel $K(\cdot)$
    \STATE\label{step_sample} Select samples $ \{\hat{X}_i, i=1,\ldots,n\}$ uniformly from $D_X$
    \STATE For each $\hat X_i$, generate a sample $ \hat{Y}_i\in D_Y$ from $(Y|X_i)$, $i\in\{1,2,\ldots,n\}$
    \STATE Construct $\hat{f}_{Y|X}(\bs{y},
     \bs{x})$ using \textcolor{black}{equation}~\eqref{condi_densityestima}, samples $(\hat Y_i,\hat X_i)$, and kernel $K(\cdot)$
    \STATE For each $j\in\{1,\ldots,\mathsf d\}$, compute $\hat L_{\mu j}$ as
    \begin{equation}
    \label{eq:deriv_hat}
    \hat L_{\mu j} := \max_{(\bs{x},\bs{y})\in D_X\times D_Y}\left|\frac{\partial}{\partial x_{j}}\hat{f}_{Y|X}(\bs{y},\bs{x})\right|
    \end{equation}
    \ENDFOR
    \STATE\label{step_emp}Compute the empirical means $\hat L_j:=\frac{1}{m}\sum^{m}_{\mu=1}\hat L_{\mu j}$
    \STATE\label{step_max} Compute $\hat L=\max\{\hat L_{1},\hat L_{2},\ldots,\hat L_{\mathsf d}\}$
    \ENSURE Estimated LC $\hat L$
    \end{algorithmic}
\end{algorithm}

\textcolor{black}{
For the running example \eqref{run_examp}, the random variable $Y\in D_{Y}$ can be treated as the next state $X_{k+1}\in D_{X}$ of \textcolor{black}{equation}~\eqref{run_examp} under a fixed action $a$. Thus, $f_{Y|X}$ is the conditional density function $f_{X_{k+1}|X_{k}}$ in \textcolor{black}{equation}~\eqref{run_examp_codf}, and its LC can be represented as
\begin{equation*}
\max_{(\bs{x}(k),\bs{x}(k+1))\in D_X\times D_X}|\frac{\partial}{\partial x_{i}(k)}f_w(x(k+1)-f_r(x(k))-a(k))|.
\end{equation*}
Algorithm~\ref{algo_enviro} estimates the LC of this conditional density function. The asymptotic upper bound of this estimation is provided by Theorem~\ref{LC_main} in the sequel.
}


In the rest of this section, we formulate bounds on the bias and variance of the estimator $\hat L_\mu$ and discuss how to select the bandwidths $H_{\mathsf x}$ and $H_{\mathsf y}$ (cf. Step~\ref{step_H} of Algorithm~\ref{algo_enviro}) to tune the asymptotic bias of the estimation.
The total variance of the estimation can be reduced by increasing the iteration number $m$ of the algorithm.
Our theoretical results are established for the uniform distribution in Step~\ref{step_sample} of the algorithm. Similar results can be obtained for other distributions.

\subsubsection{Bias of LC Estimation}\label{b_lc_estima}

Consider the original and estimated LC across each dimension for a fixed $\mu$ defined as
\begin{align*}
L_{i}=\max_{(\bs{x},\bs{y})\in D_X\times D_Y}|\frac{\partial}{\partial x_{i}}f_{Y|X}(\bs{y},\bs{x})|,\qquad
\hat{L}_{i}=\max_{(\bs{x},\bs{y})\in D_X\times D_Y}|\frac{\partial}{\partial x_{i}}\hat{f}_{Y|X}(\bs{y},\bs{x})|,
\end{align*}
for $i\in \{1,\ldots,\mathsf d\}$ with $D_X$, $D_Y\subset \mathbb R^{\mathsf d}$.
Thus, the original and estimated LC are $L:=\max_{i=1,\ldots,\mathsf d}\{L_i\}$ and $\hat{L}:=\max_{i=1,\ldots, \mathsf d}\{\hat L_i\}$.

\begin{lemma}
    Suppose that $f_{Y|X}(\bs y,\bs x)$ satisfies {\normalfont Assumption~\ref{asstwod_1}}.
For any $(\bs x,\bs y)\in D_X\times D_Y$, $h>0$, we have that for large $n$ if $nh^{2\mathsf d+2}\to +\infty$ and $h\to 0$ as $n\to +\infty$, then
\begin{align*}
    \Var[& \frac{\partial}{\partial x_{i}} \hat{f}_{ Y|X }(\bs{y},\bs{x}) ]
    \lesssim \frac{1}{nh^{2\mathsf d+2}}G^{2\mathsf d-1\!}_{20}(k)\Vol(D_{X})C_{f},
\end{align*}
\begin{align*}
    \left|\E[\frac{\partial}{\partial x_{i}} \hat{f}_{ Y|X }(\bs{y},\bs{x})]-\frac{\partial}{\partial x_{i}} f_{ Y|X }(\bs{y},\bs{x})\right|\lesssim 
    \frac{1}{2}h^{2} \bigg| \sum^{\mathsf d}_{j=1}C_{ij}
    +\sum^{\mathsf d}_{\varsigma\ne i} \bar{C}_{i\varsigma}\bigg|,~i\in\{1,\ldots,\mathsf d\}
\end{align*}
where $\Vol(\cdot)$ \textcolor{black}{denotes} 
the volume (Lebesgue measure) of a set.
\end{lemma}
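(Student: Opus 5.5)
The plan is to linearize the ratio estimator around its mean and then treat the derivative of the numerator as an i.i.d. kernel average. Because $X$ is uniform on $D_X$, we have $f_X(\bs x)=1/\Vol(D_X)$ in the interior, and $\frac{\partial}{\partial x_i}\bar f(\bs x)=0$ up to boundary effects that are negligible as $h\to0$. Write
\begin{align*}
\frac{\partial}{\partial x_i}\hat f_{Y|X}(\bs y,\bs x)=\hat f_X^{-1}(\bs x)\frac{\partial}{\partial x_i}\hat f_{YX}(\bs y,\bs x)-\hat f_{YX}(\bs y,\bs x)\hat f_X^{-2}(\bs x)\frac{\partial}{\partial x_i}\hat f_X(\bs x).
\end{align*}
As in the decomposition \eqref{multi_ineq_LC_firsTer} used in the proof of Theorem~\ref{multi_Bound_FirstDeriva}, replace $\hat f_X$ by $\bar f(\bs x)=f_X(\bs x)+O(h^2)$; by Lemma~\ref{multi_expect} the correction $\bar f-f_X$ is $O(h^2)$, and in fact $O(h^4)$ for uniform $f_X$ in the interior. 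The stochastic fluctuation of $\hat f_X$ is $O_p((nh^{\mathsf d})^{-1/2})$ by \eqref{mult_mid_bernstain_1}, and that of $\frac{\partial}{\partial x_i}\hat f_X$ is $O_p((nh^{\mathsf d+2})^{-1/2})$ by \eqref{mult_mid_berstain_3}. Both are of lower order than the leading term $(nh^{2\mathsf d+2})^{-1/2}$ under $nh^{2\mathsf d+2}\to\infty$. The leading random term is therefore $\Vol(D_X)\,\frac{\partial}{\partial x_i}\hat f_{YX}(\bs y,\bs x)$, up to terms that vanish after eliminating higher orders, which the symbol $\lesssim$ permits.

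For the variance, $\frac{\partial}{\partial x_i}\hat f_{YX}$ is the average of the i.i.d. variables
\begin{align*}
Z_\ell=h^{-1}\frac{\hat X_{\ell i}-x_i}{h}\,\bs k_{\bs x,h}(\hat X_\ell)\bs k_{\bs y,h}(\hat Y_\ell),
\end{align*}
using $k'(u)=-uk(u)$ for the Gaussian kernel. Hence $\Var\le n^{-1}\E_p[Z^2]$. The last expansion in Lemma~\ref{multi_expect} gives
\begin{align*}
\E_p\Bigl[\bigl(\tfrac{X_i-x_i}{h}\bigr)^2\bs k^2_{\bs x,h}\bs k^2_{\bs y,h}\Bigr]\approx h^{-2\mathsf d}G_{20}^{2\mathsf d-1}(k)G_{22}(k)f_X(\bs x)f_{Y|X}(\bs y,\bs x).
\end{align*}
I read the prefactor $h^{-2}$ printed in the lemma as the product-kernel normalization $h^{-2\mathsf d}$: the $2\mathsf d-1$ coordinates other than $x_i$ each contribute $G_{20}$, and coordinate $i$ contributes $G_{22}$. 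Multiplying by $h^{-2}$ and by $\Vol(D_X)^2$, using $f_X=\Vol(D_X)^{-1}$ and $f_{Y|X}\le C_f$ from Assumption~\ref{asstwod_1}(a), yields
\begin{align*}
\Var \lesssim (nh^{2\mathsf d+2})^{-1}\Vol(D_X)\,C_f\,G_{20}^{2\mathsf d-1}(k)\,G_{22}(k).
\end{align*}
The factor $G_{22}(k)=\int u^2k^2(u)\,du=1/(4\sqrt\pi)$ is at most $G_{20}(k)=1/(2\sqrt\pi)$, so this is at most $(nh^{2\mathsf d+2})^{-1}G_{20}^{2\mathsf d}(k)\Vol(D_X)C_f$. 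Matching the stated constant $G_{20}^{2\mathsf d-1}$ exactly requires the bookkeeping for the $x_i$ coordinate to absorb $G_{22}$, and I would verify this carefully.

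For the bias, take expectations and use $\E[\hat f_{YX}]=\bar f(\bs y,\bs x)$ together with the ratio linearization, which contributes only higher-order terms. This gives
\begin{align*}
\E\Bigl[\frac{\partial}{\partial x_i}\hat f_{Y|X}\Bigr]\approx\bar f^{-1}(\bs x)\frac{\partial}{\partial x_i}\bar f(\bs y,\bs x).
\end{align*}
By \eqref{multi_deriv_inequ}, the difference from $\frac{\partial}{\partial x_i}f_{Y|X}$ equals $\frac12h^2G_{12}(k)\bigl[\sum_j\partial_{x_i}\partial^2_{y_j}f_{Y|X}+\sum_\varsigma\partial^2_{x_\varsigma}\partial_{x_i}f_{Y|X}\bigr]+O(h^4)$. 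For the Gaussian kernel $G_{12}(k)=1$, and bounding termwise by Assumption~\ref{asstwod_1}(c),(d) gives the stated bound. The $\varsigma=i$ term should either be grouped into the $C_{ij}$ sum or excluded when the expansion is organized by differentiating the convolution in $x_i$ directly. I would check which, and otherwise add $\bar C_{ii}$, since the statement omits $\varsigma=i$.

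I expect the main obstacle to be the rigorous justification of the ratio linearization under expectation. Because $\hat f_X$ appears in the denominator, $\E[\cdot]$ of the ratio need not equal the ratio of expectations. I would handle this with a delta-method argument on the high-probability event from Lemma~\ref{multi_lemma_midd} where $\hat f_X\ge f_X/2$, and by noting that the complement has exponentially small probability. The remaining work, the interior/boundary behaviour of the uniform density and the exact constant, is secondary.
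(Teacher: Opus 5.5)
The paper states this lemma without a separate proof. Your route uses exactly the ingredients it relies on elsewhere: Lemma~\ref{multi_expect}, the decomposition \eqref{multi_ineq_LC_firsTer} with $\frac{\partial}{\partial x_j}\bar f(\bs x)=0$, and the expansion \eqref{multi_deriv_inequ}. So it is the natural argument. Your two hedges can be settled, one in your favour and one against the statement as printed.

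\textbf{The constant.} There is nothing left to check. Your leading variance $(nh^{2\mathsf d+2})^{-1}G_{20}^{2\mathsf d-1}(k)G_{22}(k)\Vol(D_X)f_{Y|X}(\bs y,\bs x)$ is the correct one. Your reading of the prefactor in Lemma~\ref{multi_expect} as $h^{-2\mathsf d}$ is also right: $\mathsf d-1$ $x$-factors and $\mathsf d$ $y$-factors each give $G_{20}$, and coordinate $i$ gives $G_{22}$. Since $G_{22}(k)=1/(4\sqrt{\pi})<1$, your bound is already below the stated one, which simply drops $G_{22}$. Your detour through $G_{20}^{2\mathsf d}$ also closes, because $G_{20}(k)=1/(2\sqrt{\pi})<1$.

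\textbf{The $\varsigma=i$ term.} This term is genuinely present and cannot be absorbed. Expand $\E[\partial_{x_i}\hat f_{YX}]=h^{-1}\E[\frac{X_i-x_i}{h}\bs k_{\bs x,h}(X)\bs k_{\bs y,h}(Y)]$ directly, with $X=\bs x+h\bs s$ and $Y=\bs y+h\bs t$. The $s_i^3$ term of the Taylor expansion contributes $\frac{h^2}{6}G_{14}(k)\,\partial^3_{x_i}f_{Y|X}=\frac{h^2}{2}\partial^3_{x_i}f_{Y|X}$, using $G_{14}=3$ for the Gaussian kernel. This is precisely the $\varsigma=i$ summand of \eqref{multi_deriv_inequ}. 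The constants $C_{ij}$ only control the mixed derivatives $\partial_{x_i}\partial^2_{y_j}f_{Y|X}$, so they cannot cover it.

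Your argument therefore proves the bias bound with $\sum_{\varsigma=1}^{\mathsf d}\bar C_{i\varsigma}$. The printed version with $\varsigma\neq i$ is false in general. Take $\mathsf d=1$ and $f_{Y|X}(y,x)=\phi(y-ax)$ with $\phi$ the standard normal density. The leading bias is $\frac{h^2}{2}|a|(1+a^2)|\phi'''(y-ax)|$. The printed bound is $\frac{h^2}{2}C_{11}$ with admissible $C_{11}=|a|\sup|\phi'''|$, which is smaller at a maximiser of $|\phi'''|$ whenever $a\neq0$. Commit to your fallback and include $\bar C_{ii}$; the same correction carries over to $\epsilon_{b,i}$ in Theorem~\ref{LC_main}.

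\textbf{The rigor step.} On the complement of $\{\hat f_X\ge f_X/2\}$ you need a bound on the integrand, not only on the probability. One is available:
\begin{itemize}
\item Write $\hat f_{Y|X}(\bs y,\bs x)=\sum_\ell w_\ell(\bs x)\bs k_{\bs y,h}(\hat Y_\ell)$ with Gaussian weights.
\item Then $\partial_{x_i}w_\ell=w_\ell h^{-2}[(\hat X_{\ell i}-x_i)-\sum_m w_m(\hat X_{m i}-x_i)]$.
\item This gives the deterministic bound $|\partial_{x_i}\hat f_{Y|X}|\le 2\,\mathrm{diam}(D_X)(2\pi)^{-\mathsf d/2}h^{-\mathsf d-2}$.
\item The complement has probability at most $2e^{-cnh^{\mathsf d}}$, from \eqref{mult_mid_bernstain_1} with $\tau_{\bs x}\asymp nh^{\mathsf d}$. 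Against this, the polynomial bound is negligible for both the mean and the second moment.
\end{itemize}

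\textbf{Boundary points.} The identities $\bar f(\bs x)=f_X(\bs x)$ and $\partial_{x_i}\bar f(\bs x)=0$ hold only at fixed interior $\bs x$, up to exponentially small Gaussian tails. Within $O(h)$ of $\partial D_X$ the derivative bias is $O(1)$. So ``for any $(\bs x,\bs y)\in D_X\times D_Y$'' must be read as interior points, a restriction the paper's own argument shares.
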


\begin{theorem}[\textbf{Bias of the Estimation}]\label{LC_main}
    Suppose that $f_{Y|X}(\bs y,\bs x)$ satisfies {\normalfont Assumption~\ref{asstwod_1}}.
For any $(\bs x,\bs y)\in D_X\times D_Y$, $h>0$, we have that for large $n$ if $nh^{2\mathsf d+2}\to +\infty$ and $h\to 0$ as $n\to +\infty$, then
\begin{equation*}
    \label{main_twodimen}
     \left|\E[\hat{L}_{i}] - L_{i}\right|\lesssim \epsilon_{b,i}^{\frac{1}{2}}, 
\end{equation*}
where $\epsilon_{b,i}\!:=\frac{1}{nh^{2\mathsf d+2}}G^{2\mathsf d-1\!}_{20}(k)\Vol(D_{X})C_{f}+\frac{1}{4}h^{4} \bigg( \sum^{\mathsf d}_{j=1}C_{ij}
    +\sum^{\mathsf d}_{\varsigma\ne i} \bar{C}_{i\varsigma}\bigg)^2,~i\in \{1,\ldots,\mathsf d\}.$
\end{theorem}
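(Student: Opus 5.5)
Write $g(\bs x,\bs y):=\frac{\partial}{\partial x_i}f_{Y|X}(\bs y,\bs x)$ and $\hat g(\bs x,\bs y):=\frac{\partial}{\partial x_i}\hat f_{Y|X}(\bs y,\bs x)$. The plan is to reduce the bias of the maximum to a mean-squared-error statement about $\hat g$ and then apply the preceding lemma (variance and bias of $\hat g$).

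\textbf{Step 1: reduce to a sup-norm error.} For the max functional $\phi(u)=\max_{D_X\times D_Y}|u|$ the reverse triangle inequality gives $|\phi(\hat g)-\phi(g)|\le\sup_{D_X\times D_Y}|\hat g-g|$. Jensen's inequality (Cauchy--Schwarz) then gives
\begin{equation*}
\bigl|\E[\hat L_i]-L_i\bigr|\le \E\bigl|\hat L_i-L_i\bigr|\le \Bigl(\E\bigl[\sup_{D_X\times D_Y}|\hat g-g|^2\bigr]\Bigr)^{1/2}.
\end{equation*}

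\textbf{Step 2: pointwise bias--variance bound.} At any fixed $(\bs x,\bs y)$,
\begin{equation*}
\E\bigl[(\hat g-g)^2\bigr]=\Var[\hat g]+\bigl(\E[\hat g]-g\bigr)^2 .
\end{equation*}
The preceding lemma bounds the first term by $\frac{1}{nh^{2\mathsf d+2}}G^{2\mathsf d-1}_{20}(k)\Vol(D_X)C_f$. It bounds the squared bias by $\frac14 h^4\bigl(\sum_j C_{ij}+\sum_{\varsigma\ne i}\bar C_{i\varsigma}\bigr)^2$, using the fact that $\bigl|\sum_j C_{ij}+\sum_{\varsigma\ne i}\bar C_{i\varsigma}\bigr|$ squared is the same quantity. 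The sum of the two is exactly $\epsilon_{b,i}$. Both bounds hold uniformly in $(\bs x,\bs y)$, because they depend only on the constants of Assumption~\ref{asstwod_1} and on the uniform density, whose value $1/\Vol(D_X)$ produces the $\Vol(D_X)$ factor. The conditions $nh^{2\mathsf d+2}\to\infty$ and $h\to0$ make both terms vanish and justify dropping higher-order terms, which is what $\lesssim$ records.

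\textbf{Step 3 (main obstacle): the supremum inside the expectation.} The pointwise bound controls $\sup_{(\bs x,\bs y)}\E[(\hat g-g)^2]$, not $\E[\sup(\hat g-g)^2]$. I would first try to handle this by chaining:
\begin{itemize}
\item use the Lipschitz regularity of the error assumed in Remark~\ref{rem:uniform_first_derivative} to pass to a finite $\delta$-cover;
\item take a union bound over the cover, which only adds logarithmic factors; these are absorbed at the asymptotic level $\lesssim$ (leading order, up to logs).
\end{itemize}

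If the union bound proves too lossy, I would argue via the argmax instead. Let $(\bar{\bs x},\bar{\bs y})$ be a maximizer of $|g|$. Then
\begin{equation*}
L_i-\E\hat L_i\le \E\bigl|g(\bar{\bs x},\bar{\bs y})-\hat g(\bar{\bs x},\bar{\bs y})\bigr|\le\epsilon_{b,i}^{1/2},
\end{equation*}
which is a genuinely pointwise statement. For the other direction I would use the random maximizer $(\bs x^*,\bs y^*)$ of $|\hat g|$, mirroring the two-case argument in the proof of Corollary~\ref{LC_confide_estim}. Controlling the error at $(\bs x^*,\bs y^*)$ is exactly where the uniform extension is needed. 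Combining the two directions then gives $|\E[\hat L_i]-L_i|\lesssim\epsilon_{b,i}^{1/2}$.
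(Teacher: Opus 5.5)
Your Steps 1 and 2 are sound. They are exactly what the theorem is assembled from: the preceding lemma's variance and bias bounds give the pointwise mean-squared error $\Var+\Bias^2\lesssim\epsilon_{b,i}$, uniformly in $(\bs x,\bs y)$, and Jensen turns this into $\epsilon_{b,i}^{1/2}$. Your argmax route also settles one direction rigorously, because $(\bar{\bs x},\bar{\bs y})$ is deterministic: $L_i-\E[\hat L_i]\le\E\bigl|g(\bar{\bs x},\bar{\bs y})-\hat g(\bar{\bs x},\bar{\bs y})\bigr|\le\epsilon_{b,i}^{1/2}$. The genuine gap is the other direction, $\E[\hat L_i]-L_i\le\E\sup_{D_X\times D_Y}|\hat g-g|$, which is the same as controlling the error at the data-dependent maximizer $(\bs x^*,\bs y^*)$. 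Applying the pointwise MSE there as though $(\bs x^*,\bs y^*)$ were fixed, as the two-case argument of Corollary~\ref{LC_confide_estim} invites, is not justified. Your covering route does not repair it at the stated rate.

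The lemma gives only second moments. A union bound on second moments over cover points $\bs p_r$ yields $\E\max_r|\hat g-g|^2(\bs p_r)\le\sum_r\E|\hat g-g|^2(\bs p_r)\le N_\delta\,\epsilon_{b,i}$, which costs a factor polynomial in $n$, not logarithmic. A logarithmic price requires exponential tails. You could get these from the Bernstein bounds of Lemma~\ref{multi_lemma_midd}, combined through the decomposition in the proof of Theorem~\ref{multi_Bound_FirstDeriva}, and then pass back to expectations. That passage is available: for the Gaussian kernel on bounded $D_X$ one has $|\hat g|\lesssim h^{-\mathsf d-2}$ deterministically, so the failure event at $\tau\asymp\log n$ is negligible. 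The same deterministic smoothness replaces the $n$-independent Lipschitz hypothesis of Remark~\ref{rem:uniform_first_derivative}, which this theorem does not assume anyway. But the outcome is $\sqrt{\log n/(nh^{2\mathsf d+2})}+O(h^2)$. The $\sqrt{\log n}$ multiplies the leading term, so it is not a higher-order term that $\lesssim$ discards; the paper itself keeps it explicitly as $\tau_D+O(\log n)$.

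Nor is this loss an artifact of the method. In the paper's linear additive-noise examples, $\partial_{x_i}f_{Y|X}$ depends only on $\bs y-A\bs x$, so $|g|=L_i$ on a $\mathsf d$-dimensional set. That set carries order $h^{-\mathsf d}$ nearly independent fluctuations of size $(nh^{2\mathsf d+2})^{-1/2}$. A Bickel--Rosenblatt-type argument then makes $\E[\hat L_i]-L_i$ at least of order $\sqrt{\log(1/h)/(nh^{2\mathsf d+2})}-O(h^2)$. This exceeds any constant multiple of $\epsilon_{b,i}^{1/2}$ unless the bias term dominates, and in particular it does so at the recommended $h\asymp n^{-1/(6+2\mathsf d)}$.

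So what you can actually prove is the downward bound as stated, and the upward bound only with an extra $\sqrt{\log n}$ on the variance part. Removing that factor needs additional structure. One sufficient condition is an isolated nondegenerate maximizer of $|g|$ together with $h$ no smaller than $n^{-1/(6+2\mathsf d)}$; a peeling argument around the maximizer then leaves only $O(1)$ effective fluctuations.
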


\begin{remark}
    The bandwidths should be selected appropriately such that $\epsilon_{b,i}\to 0$ for large $n$. The best convergence rate for $\epsilon_{b,i}$ is $O(n^{-\frac{2}{3+d}})$, where the bandwidth $h$ should be the order of $O(n^{-\frac{1}{6+2d}})$. Hence, $\epsilon^{\frac{1}{2}}_{b,i}$ has the best convergence rate of $O(n^{-\frac{1}{3+d}})$.
\end{remark}


Now that we have established how to estimate transition probabilities and the LC with asymptotic convergence rates, we discuss in the next section the construction of the IMDP abstraction with closeness guarantees on the verification and synthesis.

\section{Construction of IMDP Using Non-parametric Estimation}
\label{sec:IMDP_abstraction_NPE}

In this section, we use the non-parametric estimation to construct an IMDP as a finite abstraction of the system $\Sigma_{ss}=(\mathcal S, U,w,f)$ with formal asymptotic closeness guarantees.
First consider an IMDP $\bar\Sigma_{ss}=(Q, S_{\mathfrak a}, P_{lo},P_{up}, AP, L )$
Next we build an IMDP $\hat \Sigma_{ss}=(Q, S_{\mathfrak a}, P_{lo},P_{up}, AP, L )$ \textcolor{black}{whose} 
transition probabilities $P_{ij,a}(\bs x)$ are estimated through the non-parametric estimation based on the samples $\left\{\left(\textcolor{black}{\hat{X}_{i},\hat{ Y}_{i}} \right),~i=1,\ldots,n \right\}$.
Let $ \hat{P}_{ij,a} (\bs x)$ represent the probability estimated using non-parametric methods. 
The estimated upper and lower bounds of the transition probability from $q_{i}$ to $q_j$, can be represented as  
\begin{align}
    &\hat{P}_{lo,a}(q_{i},q_{j})=\min_{\bs x\in q_{i}}  \hat{P}_{ij,a} (\bs x)=\min_{x\in q_{i}} \int_{q_{j}} \hat{f}^{a}_{Y|X}(\bs y,\bs x) d\bs y, \label{tran_prob_optimal_1}\\
    &\hat{P}_{up,a}(q_{i},q_{j})=\max_{\bs x\in q_{i}}  \hat{P}_{ij,a} (\bs x)=\max_{x\in q_{i}} \int_{q_{j}} \hat{f}^{a}_{Y|X}(\bs y,\bs x) d\bs y,\label{tran_prob_optimal_2}
\end{align}
where $\hat{f}^{a}_{Y|X}$ is the estimator of CoDF \textcolor{black}{ under action $a$}, shown in \textcolor{black}{the equation}~\eqref{condi_densityestima}. We refer to $\hat \Sigma_{ss}$ as the NPE abstraction of $\Sigma_{ss}$.

\textcolor{black}{
In the context of the running example \eqref{run_examp}, $q_i,q_j\in Q$ indicate subsets of the two-dimensional state space. The bounds $\hat{P}_{lo,a}(q_{i},q_{j})$ and $\hat{P}_{up,a}(q_{i},q_{j})$ provide estimates of the pessimistic and optimistic true transition probabilities from any state $\bs x \in q_i$ to the set $q_j$ under action $a$. Specifically:
\begin{itemize}
    \item[1.] $\hat{P}_{lo,a}(q_i,q_j)$ is the estimate for the lowest probability with respect to $\bs x\in q_i$ of
    $$P_{ij,a} (\bs x):=\int f_w(\zeta)\mathbf 1 (\zeta\in (q_j-f_r(\bs x)-a))d\zeta.$$
    %
    \item[2.] $\hat{P}_{up,a}(q_i,q_j)$ is the estimate for the highest probability with respect to $\bs x\in q_i$ of the above quantity $P_{ij,a} (\bs x)$.
\end{itemize}
These interval estimates $  [\hat{P}_{lo,a}(q_i,q_j), \hat{P}_{up,a}(q_i,q_j)]  $ are then used to construct the IMDP abstraction $ \hat{\Sigma}_{ss}$.
}

\subsection{Closeness Guarantees for the IMDP Verification}\label{clos_IMDP}

Here, we establish theoretical guarantees for the closeness between the verification outcomes of the NPE abstraction $\hat \Sigma_{ss}$ and those of the original stochastic system. These guarantees provide a foundation for reliable verification and synthesis based on non-parametric estimation.
\textcolor{black}{The result in this section provides a theoretical correctness and convergence guarantee, showing that as the number of samples increases, the satisfaction probability computed through the proposed data-driven abstraction, converges to that of the original system with a tunable bounded error and with high probability. 
}

\begin{lemma}\label{bound_two_abstraction}
        For any specification $\psi$ that is formally verified via the satisfaction procedure in \eqref{upprob_path}, \textcolor{black}{Suppose further that the uniform Lipschitz condition in
Remark~\ref{rem:uniform_transition_probability} holds. Then,}
        we can have that for all $\tau_{p}$, there exists $ \textcolor{black}{\bar \epsilon^*}_{p}$, such that 
\begin{align*}
    \left| \hat{P}^{\bs T}_{up}(q) - \bar{P}^{\bs T}_{up}(q) \right|\lesssim \textcolor{black}{\bs T}n_{Q}\textcolor{black}{\bar{\epsilon}^*}_{p}\to 0,\text{ as }n\to +\infty,
\end{align*}
with a probability at least $1-2\exp{(-\tau_{p})},$ where $\bs T$ is the horizon of the specification $\psi$, $\hat{P}^{\bs T}_{up}(q)$ is the solution of equation~\eqref{upprob_path} computed for the IMDP $\hat \Sigma_{ss},$ and $\bar{P}^{\bs T}_{up}(q)$ is the solution of equation~\eqref{upprob_path} computed for the IMDP $\bar{\Sigma}_{ss}.$
 
        
\end{lemma}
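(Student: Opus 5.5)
The plan is to first bound the discrepancy between the interval endpoints of the two IMDPs on a single high-probability event, and then propagate it through the value iteration \eqref{upprob_path} by induction on the horizon.

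\emph{Step 1: interval endpoints.} Work on the event $\mathcal E$ of Remark~\ref{rem:uniform_transition_probability}, which has probability at least $1-2\exp(-\tau_p)$. On $\mathcal E$ we have, for every $i,j$ and $\bs x\in q_i$,
\begin{equation*}
\bigl|\hat P_{ij,a}(\bs x)-P_{ij,a}(\bs x)\bigr|\lesssim\bar\epsilon^*_p .
\end{equation*}
Since $\max$ and $\min$ over $\bs x\in q_i$ are $1$-Lipschitz in the sup norm, the definitions \eqref{tran_prob_optimal_1}--\eqref{tran_prob_optimal_2} give
\begin{equation*}
|\hat P_{up,a}(q_i,q_j)-P_{up,a}(q_i,q_j)|\lesssim\bar\epsilon^*_p, \qquad |\hat P_{lo,a}(q_i,q_j)-P_{lo,a}(q_i,q_j)|\lesssim\bar\epsilon^*_p .
\end{equation*}
Everything below is deterministic on $\mathcal E$, so no further union bounds are needed. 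The set of actions is finite, and I would absorb it into the $\log$ term of the covering, or else assume a fixed $a$.

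\emph{Step 2: one-step operator perturbation.} For a vector $V\in[0,1]^{n_Q}$ define
\begin{equation*}
\Phi(V)(q)=\max_a\max_{\theta\in\Theta^a_q}\sum_{q^*}\theta(q^*)V(q^*),
\end{equation*}
and let $\hat\Phi$ be the same operator built with the estimated intervals. I will show two facts. First, $\Phi$ is nonexpansive in the sup norm, because each $\theta$ is a probability distribution. Second, $|\Phi(V)(q)-\hat\Phi(V)(q)|\lesssim n_Q\bar\epsilon^*_p$.

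The second fact is the main obstacle, since the feasible sets $\Theta^a_q$ and $\hat\Theta^a_q$ differ and are coupled through the constraint $\sum\theta=1$. My first attempt is a projection argument. Take an optimal $\theta\in\Theta^a_q$. Clip each coordinate into the estimated interval, which moves each entry by at most $\bar\epsilon^*_p$. Then redistribute the resulting mass defect, which is at most $n_Q\bar\epsilon^*_p$ in total, across coordinates that still have slack; such slack exists because $\sum\hat P_{lo}\le 1\le\sum\hat P_{up}$ holds for the constructed IMDP. This yields $\hat\theta\in\hat\Theta^a_q$ with $\|\theta-\hat\theta\|_1\lesssim 2n_Q\bar\epsilon^*_p$. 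Because $V\in[0,1]$, the objective changes by at most $n_Q\bar\epsilon^*_p$ after centering $V$ around $1/2$. The symmetric argument handles the other direction, and since $\max$ over finitely many $a$ preserves the bound, the fact follows. If redistribution fails on a degenerate case, the fallback is the standard sensitivity of the ordering-based closed-form solution of the inner maximization in IMDP value iteration, which is Lipschitz in the endpoints with constant of order $n_Q$.

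\emph{Step 3: induction on the horizon.} Let $e_k=\|\hat P^{k}_{up}-\bar P^{k}_{up}\|_\infty$. Then $e_0=0$, since the terminal cases are identical ($Q^0$, $Q^1$ and the labels are shared). Splitting the difference as
\begin{equation*}
\hat\Phi(\hat V)-\Phi(\bar V)=\bigl[\hat\Phi(\hat V)-\hat\Phi(\bar V)\bigr]+\bigl[\hat\Phi(\bar V)-\Phi(\bar V)\bigr]
\end{equation*}
gives $e_k\le e_{k-1}+n_Q\bar\epsilon^*_p$. Hence $e_{\bs T}\lesssim\bs T n_Q\bar\epsilon^*_p$ on $\mathcal E$.

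\emph{Step 4: convergence.} Finally, $\bar\epsilon^*_p\to0$ by the rate remark after Theorem~\ref{multi_Bound_TranProbi}: with $h\asymp n^{-1/(6+2\mathsf d)}$ the bound is $O(n^{-1/(3+\mathsf d)})$, and the $O(\log n)$ term in $\tau$ only contributes a polylog factor. Since $\bs T$ and $n_Q$ are fixed, the right-hand side tends to $0$.
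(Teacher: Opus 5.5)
Your proposal is correct and follows essentially the same route as the paper. Both arguments work on the single high-probability event from Remark~\ref{rem:uniform_transition_probability}, with a union bound over source states and actions. Both then induct on the horizon, splitting each step into a one-step operator perturbation of order $n_Q\bar\epsilon^*_p$ plus nonexpansiveness of the value-iteration operator, which gives $e_k\le e_{k-1}+n_Q\bar\epsilon^*_p$.

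The only real difference is how the one-step perturbation is justified. The paper asserts $\ell_1$-closeness of the two O-maximizing optimizers; this holds because both are computed against the same vector $\hat P^{\bs T^*}_{up}$, and hence the same ordering. Your clip-and-redistribute construction instead bounds the distance between the feasible sets directly. That version is more self-contained and does not rely on the ordering-based closed form.
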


\begin{proof}
According to Theorem~\ref{multi_Bound_TranProbi} \textcolor{black}{and the uniform extension stated in Remark~\ref{rem:uniform_transition_probability}, whose covering and union-bound argument is given in the proof of Proposition~\ref{prop:uniform_transition_probability}, 
together with an additional union bound over the finite collections of
source abstract states and actions,
for every
$\tau_p>0$,
}
\begin{align*}
    &\textcolor{black}{\max_{1 \leq j\leq n_{Q}}}\left|
    \max_{\bs x\in q_{i}}
    \hat{P}_{ij,a}(\bs x)
    -
    \max_{\bs x\in q_{i}}
    P_{ij,a}(\bs x)
    \right|
    \nonumber\\
    &\leq
    \textcolor{black}{\max_{1\leq j\leq n_{Q}}}\max_{\bs x\in q_{i}}
    \left|
    \hat{P}_{ij,a}(\bs x)
    -
    P_{ij,a}(\bs x)
    \right|
    \lesssim
    \textcolor{black}{\bar{\epsilon}^*}_{p}
    \to 0,
    \qquad n\to+\infty,
\end{align*}
with probability at least $1-2\exp(-\tau_{p})$.
\textcolor{black}{In the remainder of the proof, all inequalities are understood on this common high-probability event established by the preceding finite union-bound arguments; therefore, no additional union bound is required in the induction argument.}

\textcolor{black}{
We prove the result by induction on the horizon $\bs T$.
Assume that, for some $\bs T=\bs T^*$,
\[
\sup_{q}
\left|
\hat P^{\bs T^*}_{up}(q)
-
\bar P^{\bs T^*}_{up}(q)
\right|
\lesssim
\bs T^* n_Q\textcolor{black}{\bar{\epsilon}^*}_p.
\]
For horizon $\bs T=\bs T^*+1$, using $\left|
\max_{a} f_a-\max_{a} g_a
\right|
\le
\max_{a}
\left|
f_a-g_a
\right|,$
we obtain
\begin{align*}
&
\left|
\hat P^{\bs T^*+1}_{up}(q_i)
-
\bar P^{\bs T^*+1}_{up}(q_i)
\right|
\\
&\le
\max_{a}
\left|
\max_{\hat \theta^{a}_{q_i}}
\sum_{j=1}^{n_Q}
\hat \theta^{a}_{q_i}(q_j)
\hat P^{\bs T^*}_{up}(q_j)
-
\max_{\theta^{a}_{q_i}}
\sum_{j=1}^{n_Q}
\theta^{a}_{q_i}(q_j)
\bar P^{\bs T^*}_{up}(q_j)
\right|
\\
&=
\max_{a}
\left|
\max_{\hat \theta^{a}_{q_i}}\sum_{j=1}^{n_Q}
\hat \theta^{a}_{q_i}(q_j)
\hat P^{\bs T^*}_{up}(q_j)
-
\max_{ \theta^{a}_{q_i}}\sum_{j=1}^{n_Q}
\theta^{a}_{q_i}(q_j)
\hat P^{\bs T^*}_{up}(q_j)
+
\max_{ \theta^{a}_{q_i}}\sum_{j=1}^{n_Q}
 \theta^{a}_{q_i}(q_j)
\hat P^{\bs T^*}_{up}(q_j)
-
\max_{\theta^{a}_{q_i}}
\sum_{j=1}^{n_Q}
\theta^{a}_{q_i}(q_j)
\bar P^{\bs T^*}_{up}(q_j)
\right|
\\
&\le 
\max_{a} \left|\sum_{j=1}^{n_Q}
\hat \theta^{a,*}_{q_i}(q_j)
\hat P^{\bs T^*}_{up}(q_j)- 
\sum_{j=1}^{n_Q}
\theta^{a,*}_{q_i}(q_j)
\hat P^{\bs T^*}_{up}(q_j)\right|
+
\max_{a} \max_{ \theta^{a}_{q_i}}\sum_{j=1}^{n_Q}
\theta^{a}_{q_i}(q_j)\left|
\hat P^{\bs T^*}_{up}(q_j)-\bar P^{\bs T^*}_{up}(q_j)
\right|
\end{align*}
where 
$\hat\theta_{q_i}^{a,*}
\in
\arg\max_{\hat\theta\in\hat\Theta_{q_i}^{a}}
\sum_{j=1}^{n_Q}
\hat\theta(q_j)
\hat P_{up}^{\bs T^*}(q_j)$ and 
$\theta_{q_i}^{a,*}
\in
\arg\max_{\theta\in\Theta_{q_i}^{a}}
\sum_{j=1}^{n_Q}
\theta(q_j)
\hat P_{up}^{\bs T^*}(q_j),$ for each action $a.$ 
}

\textcolor{black}{
Since the transition distributions generated by the O-maximizing procedure
\cite{haddad2018interval}
are stochastic, $\sum_{j=1}^{n_Q}
\theta^{a}_{q_i}(q_j)=1,$ and $0\le
\hat P^{\bs T^*}_{up}(q_j)
\le1.$
Together with Theorem~\ref{multi_Bound_TranProbi}, 
this construction yields
$\sum^{n_{Q}}_{j=1}\left|
\hat\theta^{a,*}_{q_i}(q_j)
-
\theta^{a,*}_{q_i}(q_j)
\right|
\lesssim
n_{Q}\textcolor{black}{\bar{\epsilon}^*}_p.$
Hence,
\[
\max_{a} \left|\sum_{j=1}^{n_Q}
\hat \theta^{a,*}_{q_i}(q_j)
\hat P^{\bs T^*}_{up}(q_j)-
\sum_{j=1}^{n_Q}
\theta^{a,*}_{q_i}(q_j)
\hat P^{\bs T^*}_{up}(q_j)\right|
\lesssim
n_Q\textcolor{black}{\bar{\epsilon}^*}_p.
\]
Therefore,
\begin{align*}
&
\left|
\bar P^{\bs T^*+1}_{up}(q_i)
-
\hat P^{\bs T^*+1}_{up}(q_i)
\right|
\lesssim
\sup_q
\left|
\bar P^{\bs T^*}_{up}(q)
-
\hat P^{\bs T^*}_{up}(q)
\right|
+
n_Q\textcolor{black}{\bar{\epsilon}^*}_p
\lesssim
(\bs T^*+1)n_Q\textcolor{black}{\bar{\epsilon}^*}_p.
\end{align*}
The proof is completed by induction.
}

\end{proof}

\begin{remark}
    The above lemma gives a probabilistic convergence guarantee for a data-driven formal synthesis procedure between the finite abstraction  $\bar \Sigma_{ss}$ and its estimation $\hat\Sigma_{ss}$ from data. 
    The convergence rate is $O(n^{-1/(3+\mathsf d)})$, if $h$ is in the order of $O(n^{-1/(6+2\mathsf d)})$.
\end{remark}

Using Lemma~\ref{bound_two_abstraction} and Theorem~\ref{SA13_bound}, we establish the asymptotic closeness between satisfaction probabilities computed on DTSCS $\Sigma_{ss}$ and its finite abstraction based on non-parametric estimation, as detailed below.

\begin{theorem}\label{abstract_guarantee}
    Let  $\hat \Sigma_{ss}$ be
    the IMDP abstraction of DTSCS $\Sigma_{ss}$ with its transition probabilities are estimated through non-parametric estimation.
     For any specification $\psi$ through a certain strategy $\textcolor{black}{\varpi}\in \Pi$ satisfying procedure \eqref{upprob_path}, then we have that for all $\tau_{p}$, there exists $\epsilon_{p}$ such that 
\begin{align}\label{close_orgi_data_abstra}
    \left| P\left( \Sigma_{ss} \models \psi \right)-  P\left( \hat \Sigma_{ss} \models \psi \right) \right|\lesssim \epsilon_{0}+\epsilon^{*}_{p}, \text{ with }  \epsilon_{0}=\bs T\delta B_{L} \mathcal L \text{ and } \epsilon^{*}_{p}=\textcolor{black}{\bs T n_{Q}\textcolor{black}{\bar \epsilon^*}_{p}}, 
\end{align}
with a probability at least $1-2\exp{(-\tau_{p})},$ where $\bs T$ is the number of steps, $n_{Q}$ is the number of elements of $Q,$ $\delta$ is the state discretizations parameter, $B_{L}$ is the asymptotic upper bound of LC, and $\mathcal{L}$
is the Lebesgue measure of the specification set.     
\end{theorem}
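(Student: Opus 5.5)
The proof is a triangle inequality through the model-based abstraction $\bar\Sigma_{ss}$:
\[
\left| P(\Sigma_{ss}\models\psi)-P(\hat\Sigma_{ss}\models\psi)\right| \le \left| P(\Sigma_{ss}\models\psi)-P(\bar\Sigma_{ss}\models\psi)\right| + \left| P(\bar\Sigma_{ss}\models\psi)-P(\hat\Sigma_{ss}\models\psi)\right|.
\]
Here $P(\bar\Sigma_{ss}\models\psi)$ and $P(\hat\Sigma_{ss}\models\psi)$ are read as the values $\bar P^{\bs T}_{up}(q)$ and $\hat P^{\bs T}_{up}(q)$ obtained from the recursion \eqref{upprob_path}, under the fixed strategy $\varpi$ and at the abstract state $q$ containing the initial state. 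The first term is purely model-based. The second is the data-driven discrepancy that Lemma~\ref{bound_two_abstraction} already controls.

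\textbf{First term (deterministic).} I will invoke Theorem~\ref{SA13_bound} from \cite{SA13}. It states that for a partition with diameter parameter $\delta$ and a transition density that is Lipschitz in $\bs x$ with constant $L$, the satisfaction probability of $\Sigma_{ss}$ and that of $\bar\Sigma_{ss}$ over horizon $\bs T$ differ by at most $\bs T\delta L\mathcal L$. Since $L$ is unknown, I will replace it by the asymptotic upper bound $B_L$. Corollary~\ref{LC_confide_estim} (with Remark~\ref{rem:uniform_first_derivative}) gives $L\lesssim \hat L+\bar\epsilon^*_d$, and Theorem~\ref{LC_main} controls the bias of $\hat L$, so one may take $B_L := \hat L+\bar\epsilon^*_d$. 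Monotonicity of the bound in $L$ then yields $\bs T\delta L\mathcal L\le \bs T\delta B_L\mathcal L=\epsilon_0$.

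A subtlety is that this step holds only on the high-probability event of Corollary~\ref{LC_confide_estim}. I will handle it in one of two ways. Either I treat $B_L$ as a valid a priori upper bound so that the step is deterministic, or I intersect the event with the one used below. The intersection costs at most another $2\exp(-\tau)$, which can be absorbed by adjusting $\tau_p$ in the stated confidence.

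\textbf{Second term (probabilistic).} I will apply Lemma~\ref{bound_two_abstraction} directly. On the common event from the uniform extension in Remark~\ref{rem:uniform_transition_probability}, which has probability at least $1-2\exp(-\tau_p)$, it gives $|\hat P^{\bs T}_{up}(q)-\bar P^{\bs T}_{up}(q)|\lesssim \bs T n_Q\bar\epsilon^*_p=\epsilon^*_p$. The lemma is stated for the maximizing recursion. The argument for a fixed strategy $\varpi$, and for the $\max\min$ recursion \eqref{lowprob_path}, is the same induction with $\max_a$ dropped or $\max_\theta$ replaced by $\min_\theta$, because $|\min f-\min g|\le\max|f-g|$.

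Adding the two bounds gives \eqref{close_orgi_data_abstra} with probability at least $1-2\exp(-\tau_p)$.

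The main obstacle is the first step: making Theorem~\ref{SA13_bound}, which needs the true LC, compatible with an estimated bound $B_L$ while keeping a single confidence level. I would first define $B_L$ so that it dominates $L$ on the same covering event used for the transition probabilities, using a union bound over the finitely many nets. Failing that, I would state the result with confidence $1-4\exp(-\tau_p)$ and rescale $\tau_p$.
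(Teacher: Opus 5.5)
Your proposal is correct and follows the same route as the paper's proof: a triangle inequality through $\bar\Sigma_{ss}$, bounding the first term by Theorem~\ref{SA13_bound} with $L$ replaced by $B_L$, bounding the second by Lemma~\ref{bound_two_abstraction}, and identifying the satisfaction probabilities with $\bar P^{\bs T}_{up}(q)$ and $\hat P^{\bs T}_{up}(q)$. The paper simply takes $B_L\ge L$ as given; it does not address the extra confidence cost of that step or the gap between a fixed strategy and the maximizing recursion, so your refinements spell out what the paper leaves implicit.
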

\textcolor{black}{
\begin{proof}
Based on Lemma~\ref{bound_two_abstraction} and Theorem~\ref{SA13_bound},  we have 
\begin{align*}
    \left|P(\Sigma_{ss}\vDash \psi)-P(\hat{\Sigma}_{ss}\vDash \psi)\right|\leq& \left| P(\Sigma_{ss}\vDash \psi)- P(\bar{\Sigma}_{ss}\vDash \psi)+ P(\bar{\Sigma}_{ss}\vDash \psi)-P(\hat{\Sigma}_{ss}\vDash \psi) \right|\\
    \leq& \left| P(\Sigma_{ss}\vDash \psi)- P(\bar{\Sigma}_{ss}\vDash \psi)\right|+\left|P(\bar{\Sigma}_{ss}\vDash \psi)-P(\hat{\Sigma}_{ss}\vDash \psi) \right|\\
    \lesssim& \epsilon_0+\epsilon^*_p, \text{with $\epsilon_0=\bs T\delta B_L \mathcal{L}$,}
\end{align*}
with a probability at least $1-2\exp{(-\tau_{p})},$ where $P(\Sigma_{ss}\vDash \psi)$ is the probabilities that $\Sigma_{ss}$ satisfies the specification $\psi$ under strategy $\varpi$, 
$P(\bar{\Sigma}_{ss}\vDash \psi)=\bar P^{\bs T}_{up}(q)$, 
$P(\hat{\Sigma}_{ss}\vDash \psi)=\hat P^{\bs T}_{up}(q),~ q\in Q$, $\bs T$ is the number of steps, $n_Q$ is the number of entries of $Q$, $\delta$ is the state discretizations parameter, $B_L$ is the asymptotic upper bound of LC, and $\mathcal{L}$ is the Lebesgue measure of the specification set.
\end{proof}
}

The above theorem provides a quantitative guarantee that the satisfaction probability computed from a data-driven abstraction of an unknown DTSCS is close to the true satisfaction probability, with high probability.
    The closeness between $\Sigma_{ss}$ and its data-driven abstraction $\hat{\Sigma}_{ss}$ in probabilistic setting can asymptotically converge to the closeness between $\Sigma_{ss}$ and its finite abstraction $\bar{\Sigma}_{ss}$ with the order $O(kn_{Q}n^{-1/(3+\mathsf d)}).$
    Also, both errors in inequality~\eqref{close_orgi_data_abstra} can be systematically reduced by refining the partition parameter $\delta$ and collecting more data to reduce $\textcolor{black}{\bar \epsilon^*}_{p}$.
    Thus, the result ensures that the data-driven formal synthesis procedure is sound and asymptotically convergent, making it a reliable approach when the system model is unknown.

\textcolor{black}{
Note that the error bound given in the above theorem increases linearly with the number of time steps $\bs T$. This makes it suitable for the specifications that are time bounded (i.e., their satisfaction can be concluded using finite-horizon paths of the system). This includes the time-bounded until operator $  \mathcal{U}^{\le \bs T}$. Addressing unbounded until operator while using MDPs as the underlying abstract model requires additional structural assumptions on the dynamics of the system \cite{majumdar2020symbolic,soudjani2015quantitative}.
}

\subsection{Non-parametric Formal Verification Algorithm}

In this subsection, we provide the algorithm to perform the non-parametric formal verification and synthesis for the unknown DTSCS with the workflow shown in Fig.~\ref{verification_algorithm} and using Algorithm \ref{algo_enviro}, referred to as \emph{LC\_Estimation}. 
This algorithm requires the coarse values of the parameters in Assumption~\ref{asstwod_1}~(a), (c) and (d).


\begin{algorithm}
\caption{Non-parametric Formal Verification and Policy Synthesis}
\label{algo_npe_fv}
      
    \begin{algorithmic}[1]
    \REQUIRE  Domain $D_X\times D_Y$, data scale $n$, samples $\{\textcolor{black}{\hat{X}_{i}},i=1,\ldots,n \}$, sample generators of $(Y|X)$, number of iterations $m$, Specification $\psi$, parameters $C_{f},$ $C_{ij},$ and \textcolor{black}{$\bar{C}_{i\varsigma}$} $i,j,\textcolor{black}{\varsigma}\in \{1,\ldots,\mathsf d\}$, error $\epsilon_{0}$
    
    \STATE \textcolor{black}{$\hat{L}$} $\leftarrow$ $\emph{LC\_Estimation}(D_{X}\times D_{Y}, Y|X, m)$
    \STATE
\textcolor{black}{Compute
$B_L$
using
$\hat L$,
$C_f$,
$C_{ij}$,
$\bar C_{i\varsigma}$
according to
Theorem~\ref{LC_main}}
    \STATE Determine $\delta$ based on $B_{L}$ and $\epsilon_{0}$
    \STATE Partitioning the state space with discretization parameter $\delta$
    \STATE Generate the transition matrix $M^{up}_{tran}$ and $M^{lo}_{tran}$ using \textcolor{black}{equation}~\eqref{tran_prob_optimal_1} and \eqref{tran_prob_optimal_2}
    \STATE Perform the formal verification and synthesis using $\textcolor{black}{(M^{lo}_{tran},M^{up}_{tran})}$ \textcolor{black}{according to \eqref{lowprob_path}
and
\eqref{upprob_path}}
    \ENSURE Obtain the satisfaction probability of $\psi$ with a control policy
    \end{algorithmic}
\end{algorithm}

This algorithm estimates the upper bound $B_{L}$ of the LC as in Step \textcolor{black}{2} using Algorithm~\ref{algo_enviro}, and
has asymptotic convergence guarantee with the increasing of the data scale $n$.
In addition, it can give the asymptotic upper bound of the formal verification result if the rough values of all parameters in Assumption~\ref{asstwod_1} are known.


\section{Numerical Study}
\label{sec:case_studies}

In this section, we run our numerical studies on a MacBook Pro with an M4 chip and 24 GB of memory, using MATLAB 2024a. \textcolor{black}{First, we demonstrate the estimation of transition probabilities and the LC on two linear systems to demonstrate the asymptotic properties discussed in Section~\ref{main}. Then, we discuss verification and synthesis of policies on a nonlinear stochastic system.
Finally, we compare the proposed framework with an approach based on empirical estimation of transition probabilities.} 

\subsection{Estimation of Transition Probabilities and LC}
\begin{example}
\begin{figure}
    \centering
    \includegraphics[width=0.65\linewidth]{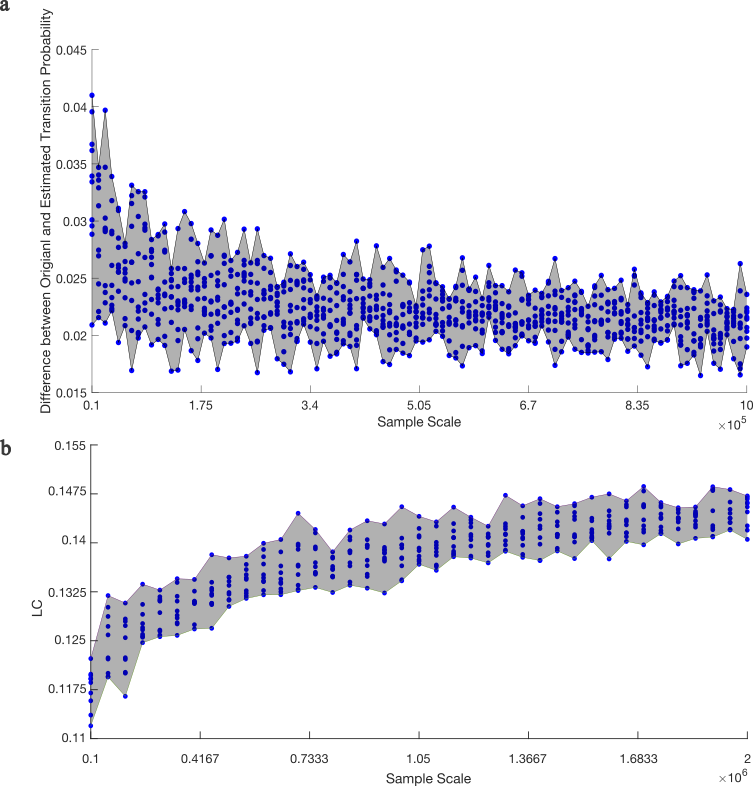}
    \caption{Panel (a) shows the difference between the original and estimated transition probability with varying data scale $n$.
    Panel (b) shows the LC with varying data scale $n$. }
    \label{compar_estim}
\end{figure}
Consider an unknown linear system
$
\textcolor{black}{\bs x}(k+1)=A\textcolor{black}{\bs x}(k)+\textcolor{black}{w}(k)
$
with noise $\textcolor{black}{w}(\cdot)\sim \mathcal{N}(\mu,\Sigma)$, where $
A=\begin{bmatrix}
2 & 0  \\
0 & 1 
\end{bmatrix}$, and $\textcolor{black}{w}$ has Gaussian distribution with mean $\mu=\begin{bmatrix}
0  \\
0\end{bmatrix}$ and variance $\Sigma=\begin{bmatrix}
1 & 0  \\
0 & 1 
\end{bmatrix}$.
Its CoDF is 
\begin{equation*}
   f_{X_{k+1}|X_{k}}(\textcolor{black}{\bs x}(k+1),\textcolor{black}{\bs x}(k))=\frac{1}{ 2\pi\sqrt{|\Sigma|} }\exp{ ( -\frac{1}{2}( \textcolor{black}{x}(k+1)-A\textcolor{black}{\bs x}(k) )^{T} \Sigma^{-1} ( \textcolor{black}{\bs x}(k+1)-A\textcolor{black}{\bs x}(k) )  ) }.
\end{equation*}
Consider the sample space $D_{\bs X}=[-1,1]\times [-1,1].$
The non-parametric estimation is employed to approximate the transition probability starting from the point $\textcolor{black}{\bs x}(k)=\begin{bmatrix}
0.5  \\
0.5\end{bmatrix}$ to the region $D_{I}=[0,1]\times [0,1]$. The true value of the transition probability is $0.1484$. 
For the estimated transition probability, the bandwidth $h$ is selected based on the Scott's formula, and can guarantee the error in \textcolor{black}{inequality}~\eqref{ineq_multi_Bound_TranProbi} is the order of $O(n^{-2/(4+ \mathsf d)})$, with high probability.
Fig.~\ref{compar_estim}(a) shows the difference between the original and estimated transition probability with varying data scale.
By conducting 10 times experiments at each data scale, their differences diminish and become more tightly clustered.
In Fig.~\ref{compar_estim}(b), the estimated LC using non-parametric estimation also become\textcolor{black}{s} closer to the original LC, which is $0.1931$.

\end{example}

\begin{example}\label{examp2}

\begin{figure*}[hbt!]
\centering
\includegraphics[width=\textwidth]{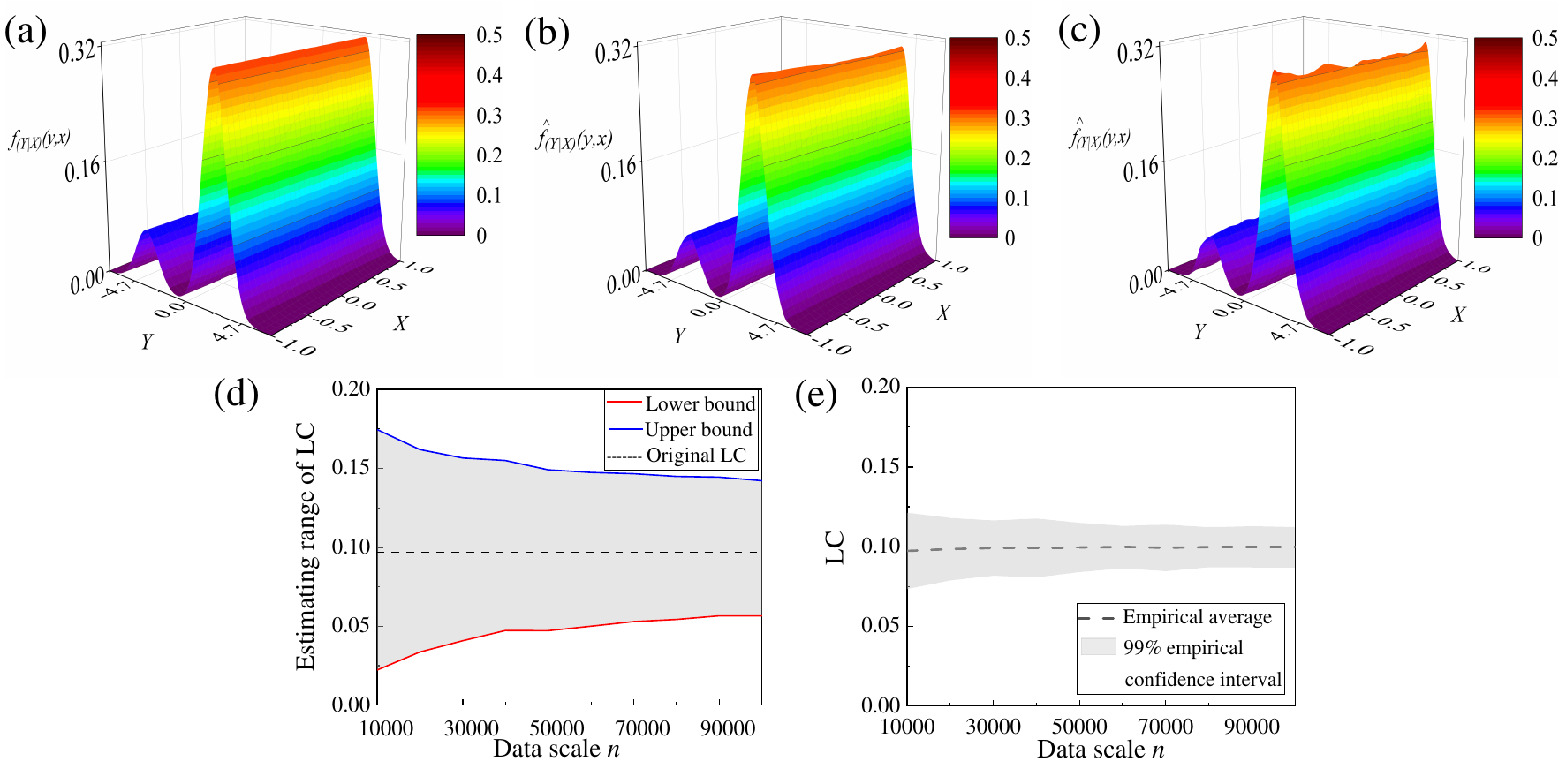}
\vspace{-0.45cm}
\caption{(a) The original CoDF. (b) Estimated CoDF with $h_{\mathsf x}=h_{\mathsf y}=n^{-\frac{1}{8}}$ and data scale $n=6\times 10^{4}$.
(c) Estimated CoDF with bandwidth from the Scott's formula and data scale $n=6\times 10^{4}$.
(d) Asymptotic bound on the original LC provided by Theorem~\ref{main_twodimen} as a function of data scale $n$. The dashed line is the original LC of the CoDF, $L=0.0968$.
(e) The estimated LC averaged over 150 computations with the \textcolor{black}{gray} area indicating the 99\% empirical confidence interval (3 times the empirical standard deviation from the mean).}
\label{case_2}
\end{figure*}
Consider a one-dimensional system with $\textcolor{black}{\bs y}=a\textcolor{black}{\bs x}+\textcolor{black}{\lambda} w_{1}+(1-\textcolor{black}{\lambda})w_{2}$, where
$w_1$ and $w_{2}$ have Gaussian distributions with means $\mu_1$ and $\mu_2$ and variances $\sigma_1^{2}$ and $\sigma_2^{2}$. The variable $\textcolor{black}{\lambda}\in\{0,1\}$ has Bernoulli distribution with success probability 
$P(\textcolor{black}{\textcolor{black}{\lambda}}=1)=p$. 
The CoDF $f_{Y|X}(\textcolor{black}{\bs y},\textcolor{black}{\bs x})$ is
\begin{align}
\label{examp2_density}
    f_{Y|X}(\textcolor{black}{\bs y},\textcolor{black}{\bs x})=& \frac{p}{\sigma_1\sqrt{2\pi}}\!\exp{( -\frac{1}{2\sigma_1^2}(\textcolor{black}{\bs y}-a\textcolor{black}{\bs x}-\mu_1)^{2} ) }
    \!+\!\frac{1-p}{\sigma_2\sqrt{2\pi}}\exp{ ( -\frac{1}{2\sigma_2^2}(\textcolor{black}{\bs y}-a\textcolor{black}{\bs x}-\mu_2)^{2} ) }.
\end{align}
We fix the parameters $a=0.5$, $\mu_1=3$, $\mu_2=-3$, $\sigma_1=\sigma_2=1$, $p=0.8$, and the domain $D_X=[-1,1]$ and  $D_Y=[-7.177,6.965]$.
We assume that Assumption~\ref{asstwod_1} holds with $C_{f}=1$ and the bound of third derivative terms are $ 0.5$. We run Algorithm~\ref{algo_enviro} with $m = 20$.

Fig.~\ref{case_2} shows the original CoDF together with the estimated CoDF using the bandwidth from \textcolor{black}{Theorem~\ref{LC_main}} and from the Scott's formula.
The estimated CoDF based on the bandwidths $h_{\mathsf x}=h_{\mathsf y}=n^{-\frac{1}{8}}$ as shown in Fig.~\ref{case_2}(b), is more similar to the original CoDF in Fig.~\ref{case_2}(a) and shows better smoothness in comparison with the estimation based on Scott's formula in Fig.~\ref{case_2}(c).

Fig.~\ref{case_2}(d) gives the
asymptotic bounds on the original LC $L=0.0968$ provided by \textcolor{black}{Theorem~\ref{LC_main}} as a function of data scale $n\in [10^4, 10^5]$.
For example, $L\in [0.022,0.175]$ using $n=10^4$,
$L\in [0.050,0.147]$ using
$n=6\times 10^4$,
and
$L\in [0.057,0.142]$
using $n=10^5$,
which confirms asymptotic convergence of the bound as a function of $n$.
The empirical mean and the 99\% empirical confidence interval are shown in Fig.~\ref{case_2}(e) using 150 runs of the algorithm for each $n$.
All the values are below the analytical upper bound shown in Fig.~\ref{case_2}(d).



\end{example}

\begin{example}\label{exam_cauchy_lc}
\textcolor{black}{
    Consider a one-dimensional system with $\bs y=a \bs x+w$, with noise $w\sim \mathrm{Cauchy}(x_{0},\gamma)$, where $\bs x_0\in\mathbb{R}$ denotes the location parameter and $\gamma>0$ denotes the scale parameter. We select $a=2$, $\bs x_0=0$, $\gamma=1$, the domain $D_X=[-1,1]$ and $D_Y=[-1,1]$, and the parameters $C_{f}=0.5$, $C_{ij}=12$, $\bar{C}_{i\varsigma}=3$ in Assumption \ref{asstwod_1}. 
    \begin{figure}
        \centering
        \includegraphics[width=0.85\linewidth]{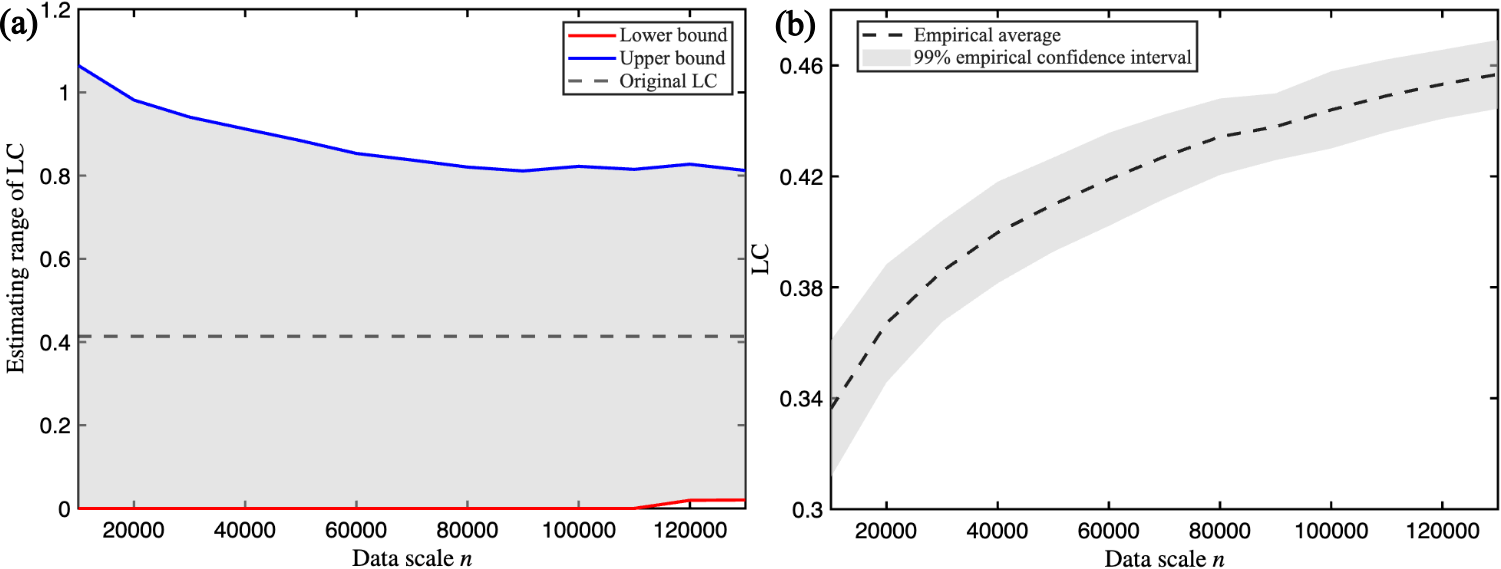}
        \caption{(a) Asymptotic bound on the original LC under Cauchy noise, derived in Theorem~\ref{main_twodimen}, as a function of the data scale $n$. The dashed line is the original LC of the CoDF, $L=0.4135$.
        (b) The estimated LC averaged over $100$ computations with the \textcolor{black}{gray} area indicating the 99\% empirical confidence interval (3 times the empirical standard deviation from the mean).} 
        \label{fig:cauchy_lc}
    \end{figure}
}

\textcolor{black}{Fig.~\ref{fig:cauchy_lc}(a) illustrates the asymptotic bounds on the LC $L=0.4135$ provided by Theorem~\ref{LC_main} as a function of the data scale $n\in[10^4,13\times10^4]$.
As the number of samples increases, the width of the theoretical interval gradually decreases, indicating the asymptotic convergence of the estimated bounds.
The empirical mean and the $99\%$ empirical confidence interval are reported in Fig.~\ref{fig:cauchy_lc}(b) using $100$ independent runs of the algorithm for each value of n.
Furthermore, all the values remain entirely contained within the analytical bounds shown in Fig.~\ref{fig:cauchy_lc}(a).
}
\end{example}

\subsection{Case Studies}\label{sec:case_studies}

\begin{example}
\label{exam_2}
\textcolor{black}{
Consider the two-dimensional stochastic bistable switch in the running example \eqref{run_examp}. The noise has Gaussian distribution with zero mean 
and covariance
$\Sigma=\textsc{diag}([0.1,0.1])$,
which gives the CoDF in \eqref{run_examp_codf2} assumed to be unknown.
We consider the system without action (i.e., $U = \{(0,0)\}$) and verify the temporal specification
$\psi=\neg r_{O} \mathcal{U}^{\leq K} r_D$, where $r_O$ is $[0.4,0.6]\times [0.5,0.6]$
and $r_D$ is $[0.9,1]\times [0.9,1]$.
}
In this case study, we focus on the asymptotic performance of formal verification as the number of samples $n$ increases, as this highlights how the asymptotic behavior of the estimated transition probabilities ensure\textcolor{black}{s} the asymptotic guarantees of the verification process.
We set the partition parameter to a fixed value of 
$\delta=0.1$. 
\textcolor{black}{The asymptotic errors can be found in Fig.~\ref{compar_formal_verif} computed for $K=5$.} With the data $n$ increasing, the formal verification gap between the original and estimated systems narrows, reflected by expanding dark blue areas in Fig.~\ref{compar_formal_verif}. 
\textcolor{black}{The maximum difference between the original satisfaction probability and its approximation is shown in Fig.~\ref{compar_formal_verif_longer}, which increases with $K$ and confirms the bound provided by Theorem~\ref{abstract_guarantee}.}

\begin{figure}
    \centering
    \includegraphics[width=0.75\linewidth]{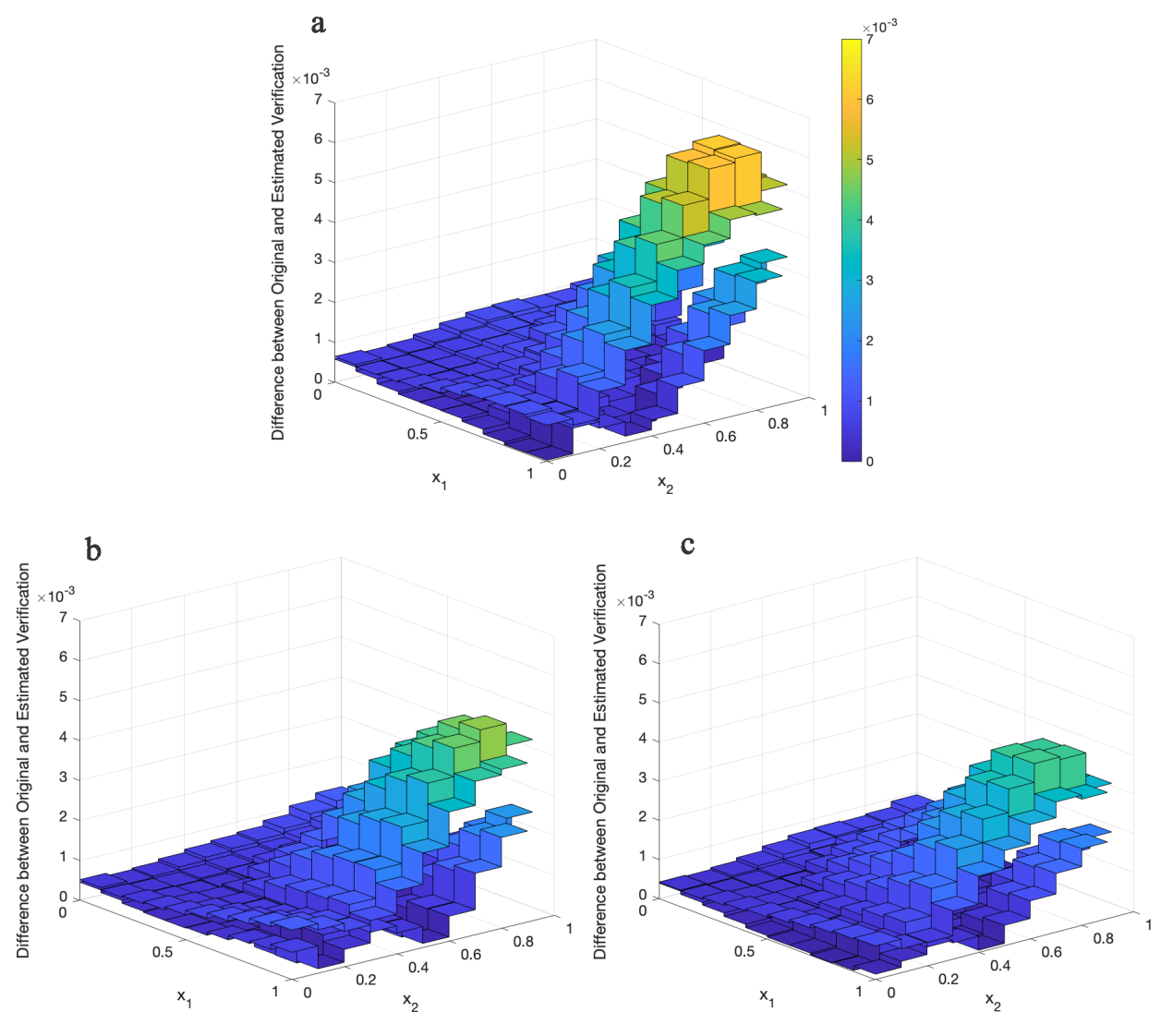}
    \caption{Panel (a), (b) and (c) show the difference between the original and estimated verification with varying data scale $n=2\times 10^4$, $n=5\times 10^4$ and $n=10^{5}$.
    }
    \label{compar_formal_verif}
\end{figure}

\begin{figure}
    \centering
    \includegraphics[width=0.38\linewidth]{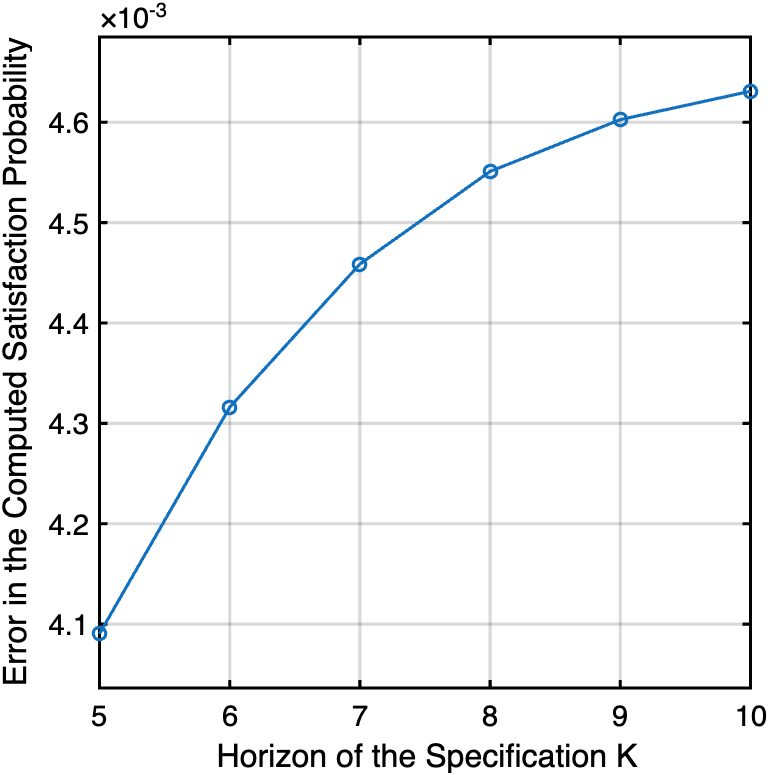}
    \caption{ \textcolor{black}{The maximum difference between the original satisfaction probability and its approximation for the specification $\psi=\neg r_{O} \mathcal{U}^{\leq K} r_D,$ when $n=10^{5}$.}
    }
    \label{compar_formal_verif_longer}
\end{figure}

\textcolor{black}{Consider the same setting with a different noise distribution, namely Cauchy noise $w\sim \mathrm{Cauchy}(\bs x_{0},\gamma),$ where $\bs x_0=\bs 0$ and $\gamma=0.5$.
 Increasing the dataset scale from $2\times 10^4$ to $10^5$ further reduces the estimation error, providing empirical evidence of the convergence behavior predicted by the theoretical analysis, as shown in Fig.~\ref{fig:cauchy_verif}. }

\begin{figure}
    \centering
    \includegraphics[width=0.8\linewidth]{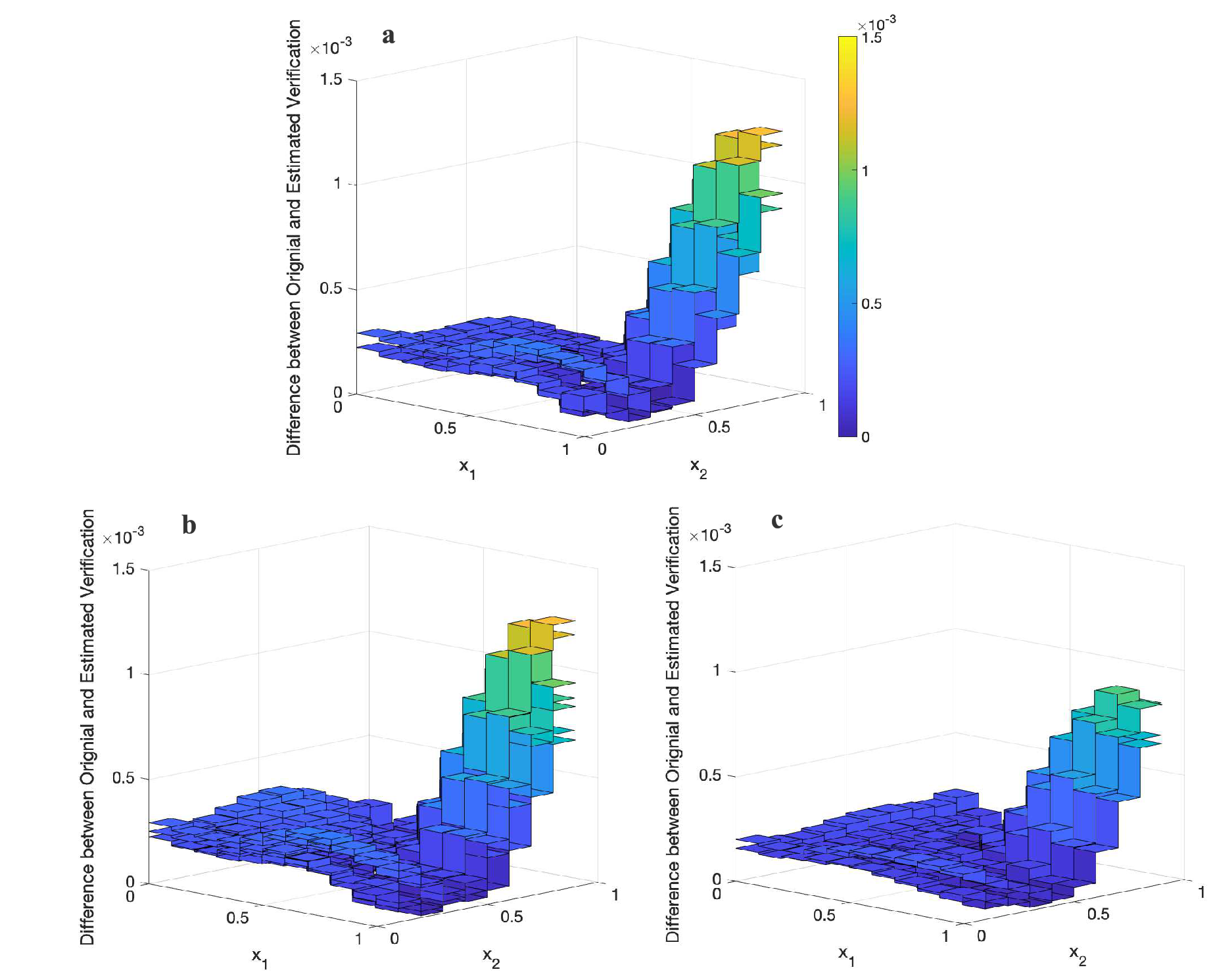}
    \caption{\textcolor{black}{Panel (a), (b) and (c) show the difference between the original and estimated verification under Cauchy noise $w\sim \mathrm{Cauchy}(0,0.5)$ for different data scale: $n=2\times 10^4$, $n=5\times 10^4$ and $n=10^{5}$.}  }
    \label{fig:cauchy_verif}
\end{figure}


\end{example}

\begin{example}
\textcolor{black}{
Consider the two-dimensional stochastic bistable switch in the running example \eqref{run_examp} subject to actions $u_{1}(k)\in \{0,0.15\}$ and $u_{2}(k)=0$.
The noise follows a Gaussian distribution with zero mean and covariance
$\Sigma=\textsc{diag}([0.1,0.1])$,
which gives the CoDF in \eqref{run_examp_codf2} assumed to be unknown.
The rest of parameters and the desired temporal specification are the same as in Example~\ref{exam_2}. The goal is to synthesize a control policy for satisfying the specification with high probability.}
As shown in Fig.~\ref{compar_formal_verif_actions}, the distance between the upper and lower bounds of the satisfaction probabilities decreases with the increasing of data scale. Furthermore, Fig.~\ref{action_synthe} illustrates that the control synthesis with a data scale of $n = 5 \times 10^4$ more closely aligns with the model-based scenario than the case with $n = 2 \times 10^4$. These observations validate our non-parametric framework’s asymptotic consistency and convergence in synthesizing control policies.



\begin{figure}
    \centering
    \includegraphics[width=0.75\linewidth]{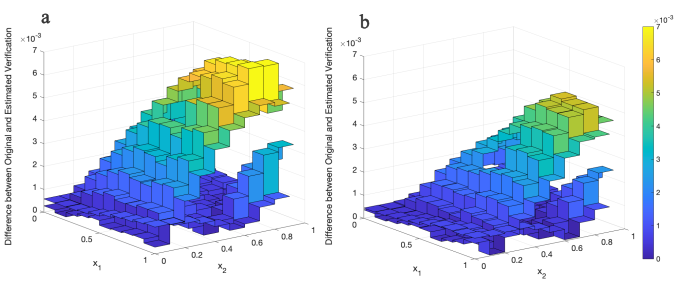}
    \caption{Panel (a) and (b) show the difference between the original and estimated verification with varying data scale $n=2\times 10^4$ and $n=5\times 10^4$ .
    }
    \label{compar_formal_verif_actions}
\end{figure}

\begin{figure}
    \centering
    \includegraphics[width=\linewidth]{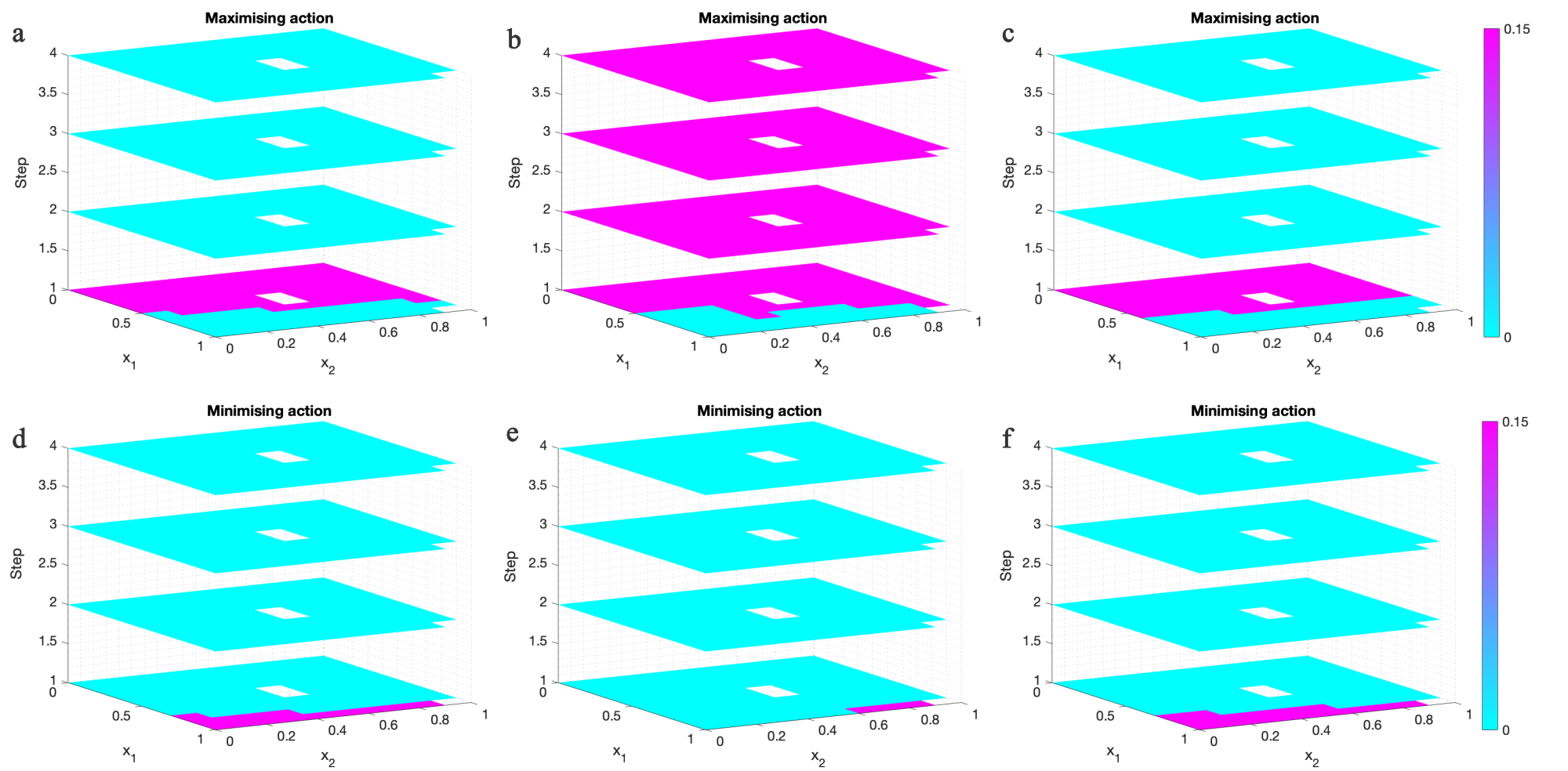}
    \caption{The control policies as a function of state for minimizing and maximizing the satisfaction probability.
    The left column shows the results
    from a model-based approach.
    The middle and right column are for the NPE under data scale $n=2\times 10^4$ and $n=5\times 10^4$, respectively.
    }
    \label{action_synthe}
\end{figure}

\end{example}

\textcolor{blue}{
\begin{figure}[h]
\vspace{-.2cm}
\centerline{\includegraphics[width=0.6\textwidth]{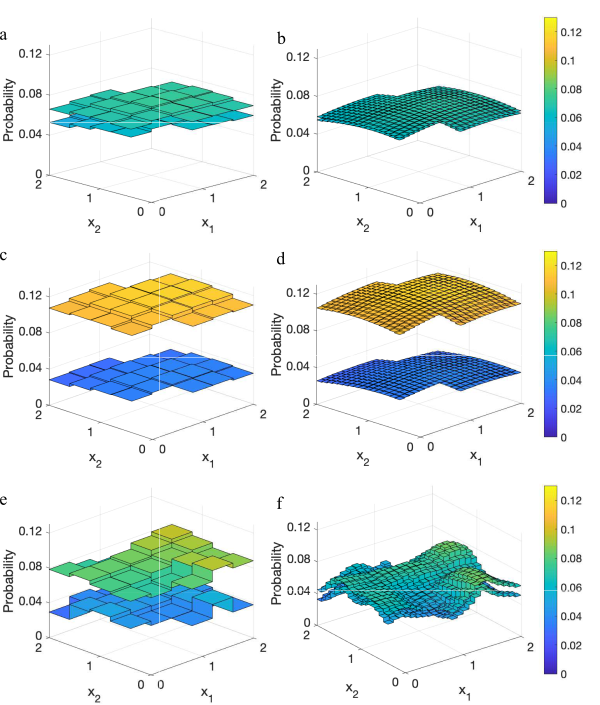}}
\vspace{-.05cm}
\caption{\textcolor{black}{The upper-bound and lower-bound probabilities of paths initialized at each state satisfying $\psi$ for the known system in panels (a) and (b), and its data-driven approximations using empirical IMDP abstraction in panels (c) and (d) and NPE in panels (e) and (f), under $\delta=0.4$ and $\delta=0.1$, respectively, where $r_O=[1.2,2]\times [1.6,2]$ and $r_D=[0,0.8]\times [0,0.4]$.}
}
\label{no_strategy}
\vspace{-0.3cm}
\end{figure}
}

\textcolor{black}{
\subsection{Comparison with an Alternative Approach}
}

\textcolor{black}{In this subsection, we compare our framework with an alternative data-driven method that builds an IMDP directly based on empirical estimation of transition probabilities.}

\smallskip
\noindent
\textcolor{black}{
\textbf{Empirical Construction of the IMDP Abstraction.}
The IMDP abstraction can be constructed empirically by taking samples and using Chebyshev's inequality.
Let $\mathsf T^a$ denote the transition probability matrix of an MDP with a certain partitioning construction under action $a\in S_a$, where its entry $P^{a}_{ij}$ is the transition probability from state $q_i\in Q$ to state $q_{j}\in Q$ under action $a$, and defined as $P^{a}_{ij}=\mathsf \theta^{a}_{q_{i}}(q_j)$. Also we can define the empirical estimation of $P^{a}_{ij}$ as $\bar{P}^{a}_{ij}=\frac{1}{G_{ij}}\sum^{G_{ij}}_{m=1}\boldsymbol{1}(\bar{q}^m_j\in q_j|\bar{q}^m_i\in q_i)$, where $\boldsymbol{1}(\cdot)$ is the indicator function such that $\boldsymbol{1}(\bar{q}^m_j\in q_j|\bar{q}^m_i\in q_i)=1$ if $\bar{q}^m_j\in q_j$, and $\boldsymbol{1}(\bar{q}^m_j\in q_j|\bar{q}^m_i\in q_i)=0$ otherwise, and $G_{ij}$ is the required number of data for this empirical estimation. 
By a-priori fixing the threshold $\bar{\epsilon}\in (0,1]$ and \textcolor{black}{parameter} $\Bar{\beta}\in(0,1)$, according to Chebyshev's inequality \citep{saw1984chebyshev}, we have 
\begin{equation}
    \label{ineq_empircalappro}
    \mathbb{P}\{\Bar{P}^{a}_{ij}-\Bar{\epsilon}\leq P^{a}_{ij}\leq \Bar{P}^{a}_{ij}+\Bar{\epsilon} \}\ge 1-\Bar{\beta},
\end{equation}
where $G_{ij}\ge \frac{1}{4\Bar{\beta}\Bar{\epsilon}^2}.$ From above, we obtain the upper and lower bound of transition probability from state $q_{i}$ to state $q_{j}$ with a confidence at least $1-\Bar{\beta}$.
Let $\epsilon_g>0$ denote the prescribed global tolerance for the verification error induced by the empirical IMDP abstraction. Following the error allocation strategy in \cite{zhang2024formal}, we select the transition probability tolerance as 
$\bar{\epsilon}=\frac{\epsilon_g}{2Kn_Q},$
thereby distributing the global error tolerance across the $K$-step verification horizon and the $n_Q$ abstract states.}


\smallskip

\begin{example}
\label{verif_case1}
\textcolor{black}{
Consider an unknown linear system
$\bs x(k+1)=A\bs x(k)+\bs w(k),$
with noise $\bs w \sim \mathcal{N}(\mu,\Sigma)$, and the PCTL path formula $\psi=\neg r_O \mathcal{U}^{\leq 3} r_D$, where $\mu=\bs 0$ and $\Sigma=\begin{bmatrix}
    1 & 0\\
    0 &1
\end{bmatrix}$.
We assume that the third derivatives of its CoDF are bounded by $0.2$.
Let $\epsilon_g$ denote the prescribed upper bound on the verification error. We fix $\epsilon_g=0.2$ for the empirical IMDP abstraction and use a dataset of size $n=2000$ for the proposed NPE method.
Based on Algorithm \ref{algo_enviro} and Theorem \ref{LC_main}, the asymptotic upper bound of LC is $0.0722$. 
Then we perform verification under two state discretization parameters $\delta=0.4$ and $\delta=0.1$. 
When $\delta=0.4$, the probability of each state of the above system falls within the range between the upper-bound and lower-bound probabilities satisfying $\psi$ as shown in Figs.~\ref{no_strategy}c and \ref{no_strategy}e. With a finer partition under $\delta=0.1$, more details of each state on satisfying specification $\psi$ are presented in Figs.~\ref{no_strategy}d and \ref{no_strategy}f. Unlike original system in Fig.~\ref{no_strategy}b and NPE in Fig.~\ref{no_strategy}f, the distances between the upper-bound and lower-bound probabilities do not decrease significantly for the empirical method in Fig.~\ref{no_strategy}d, but the probabilities in Fig.~\ref{no_strategy}b are still in range between these upper-bound and lower-bound probabilities. 
Meanwhile, for NPE, the probabilities in Fig.~\ref{no_strategy}f are very close to the results of original system in  Fig.~\ref{no_strategy}b and exhibit greater variation at various states than the results in Fig.~\ref{no_strategy}e.
}

\textcolor{black}{
Consider the same dynamical system with a heavy-tailed disturbance following the Cauchy distribution
$w\sim\mathrm{Cauchy}(\bs x_0,\gamma)$,
where $\bs x_0=\bs{0}$ and $\gamma=0.25$.
The state space is
$[0,1]\times[0,1]$
and the PCTL specification is
$\psi=\neg r_O\,\mathcal{U}^{\leq3}r_D$,
where
$r_O=[0.4,0.6]\times[0.5,0.6]$
and
$r_D=[0.9,1]\times[0.9,1]$.
Fig.~\ref{fig:empirical_NPE_comparison} compares the verification errors obtained by the empirical method and the proposed NPE framework with data scale ranging from $10^4$ to $5\times10^4$.
Here, the empirical method considers a confidence level of $95\%$.
We use $G_{ij}=100$ samples for each transition probability when the total data scale is $10^4$, corresponding to an estimation tolerance of $\bar{\epsilon}=0.2236$, and $G_{ij}=500$ samples when the total data scale is $5\times10^4$, corresponding to $\bar{\epsilon}=0.1$.
For both approaches, increasing the amount of data reduces the verification error, which is consistent with the theoretical convergence analysis.
Under the same heavy-tailed disturbance, the proposed NPE framework consistently achieves errors that are much smaller than those of the empirical method, and presents the asymptotic convergence predicted by the theory.
}

\begin{figure}
    \centering
    \includegraphics[width=0.8\linewidth]{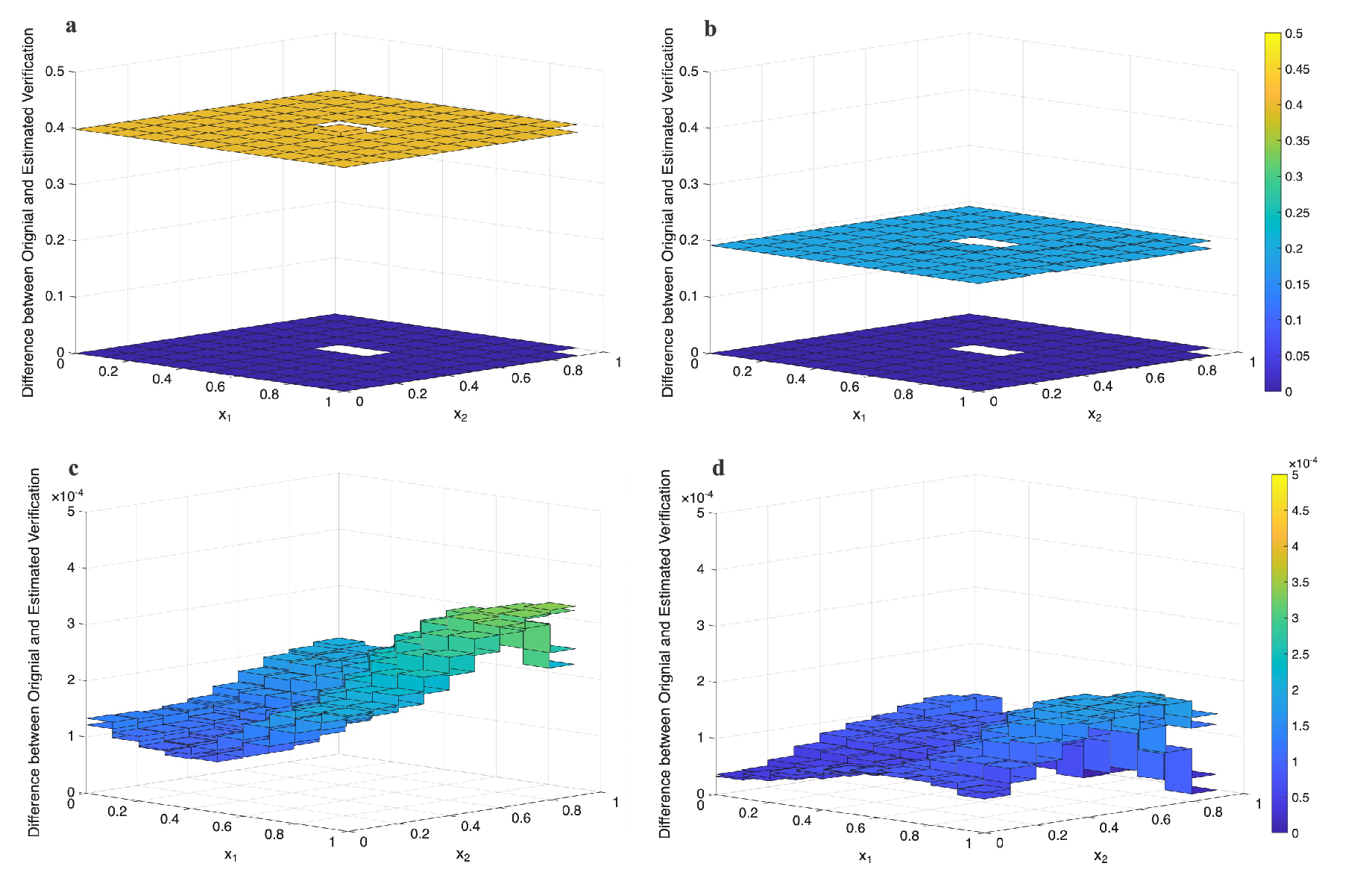}
    \caption{\textcolor{black}{The difference between the original and estimated verification with varying data scale $n=10^4$ and $n=5\times 10^4$ for the empirical method in panels (a) and (b), and 
    for the NPE method in panels (c) and (d).}}
    \label{fig:empirical_NPE_comparison}
\end{figure}

\end{example}

\section{Conclusions}
\label{sec:conclusions}
In this paper, we proposed a data-driven formal verification and synthesis framework for unknown stochastic systems with non-parametric representations. Our approach guarantees probabilistic asymptotic convergence of the verification results by deriving probabilistic asymptotic bounds for the transition probabilities, ensuring that the estimated transition probabilities converge to their true values.
We also demonstrated that by integrating the results with the asymptotic upper bound of the LC, this property can guarantee the asymptotic closeness in formal verification between the original system and its data-driven finite abstraction converging asymptotically to some specified value.
The effectiveness of this data-driven framework has been demonstrated through case studies.
In the future, building on these results, we aim to improve the performance of our approach when additional information on the conditional density function of the system is available, e.g., through known constraints on the moments.

\begin{acks}
This work is supported by the EIC SymAware project 101070802 and the ERC Auto-CyPheR project 101089047.
\end{acks}

\bibliographystyle{ACM-Reference-Format}
\bibliography{sample-base}

\appendix

\section{SUPPLEMENTARY OF SECTION~\ref{sec2}}\label{prel_appen}
\subsection{Supplementary of Section~\ref{sec:prelem}}\label{kernel_bandwidth}

\paragraph{Accuracy of the Kernel Estimator}
The asymptotic bias and variance of $\hat{f}_X(\bs{x})$ in \textcolor{black}{equation}~\eqref{equ:MKDE} are derived by \cite{hardle2004nonparametric} as
\begin{align*}
\Bias[\hat{f}_{X}(\bs{x})] & \approx \frac{1}{2}\mu_2(K) \,\textsf{tr}(H^T\mathcal{H}_f(\bs{x})H) \\
\Var[\hat{f}_{X}(\bs{x})]  &\approx \frac{1}{n\;|H|}\|K\|_2^2f_X(\bs{x}),
\end{align*}
where $\mu_2(K)$ is a constant defined with $\int \bs{u}\bs{u}^TK(\bs{u})\,d\bs{u}=\mu_2(K)\bs{I}_d$,
$\mathcal{H}_f(x)$ is the Hessian matrix of second partial derivatives of $f$, 
$\textsf{tr}(\cdot)$ denotes the trace of a matrix,
and $\|K\|_2$ is the $L_2$-norm of $K$.
Then, the asymptotic mean integrated squared error (AMISE) can be formulated as
\[
\AMISE(\hat{f}_{X} )\!=\!\frac{1}{4}\mu_2^2(K)\!\!\int\! \textsf{tr}(H^T\mathcal{H}_f(\bs{x})H)^2 d\bs{x} +\frac{\|K\|_2^2}{n|H|}.
\]
The AMISE is primarily determined by the choice of the bandwidth, with the kernel function having a minor effect only through specific characteristics such as the order of its first nonzero moments \citep{hardle2004nonparametric,scott2015multivariate}.


\paragraph{Choice of the Kernel Function.}
For two different kernel functions, in practice it is possible to get approximately the same degree of smoothness by multiplying one of the bandwidths with an adjustment factor \citep{hardle2004nonparametric}. This adjustment factor for two kernels $A$ and $B$ can be computed from $h_B=h_A\delta_0^B/\delta_0^A$ where $\delta_0$ is the canonical bandwidth. Table~\ref{tab:kernel} gives commonly used kernels and their canonical bandwidths. For the selection of the kernel, the conservative recommendation is the kernel which is smooth, clearly unimodal, and symmetric 
around the origin. Also, some factors (e.g., ease of computation and differentiability) should be considered, rather than concerning the loss of efficiency.
\begin{table*}[hbt!]
\centering
\caption{Commonly used kernels and their canonical bandwidths}
\begin{tabular}[t]{lcc}
\hline Kernel & Equation & Canonical Bandwidths \\
\hline
Uniform & $\frac{1}{2}I(|u|\leq 1)$ & 1.3510 \\
Triangle & $(1-|u|)I(|u|\leq 1)$ & 1.8890\\
Epanechnikov & $\frac{3}{4}(1-u^2)I(|u|\leq 1)$ & 1.7188 \\
Quartic (Biweight) & $\frac{15}{16}(1-u^2)^2I(|u|\leq 1)$ & 2.0362 \\
Triweight & $\frac{35}{32}(1-u^2)^3I(|u|\leq 1)$ & 2.3122 \\ 

Gaussian & $k(u)=\frac{1}{\sqrt{2\pi}}exp(-u^2/2)$& 0.7764 \\
\hline
\end{tabular}
\label{tab:kernel}
\end{table*}

\paragraph{Related Works on the Selection of Bandwidth.}
There are several data-driven methods for choosing the optimal bandwidth. 
For example,
\cite{hardle2004nonparametric} proposed using Silverman’s rule-of-thumb bandwidth for unimodal distributions that are fairly symmetric and are not heavy-tailed, and using the cross-validation method which is fairly independent of the special structure of the parameter or function estimate.
\cite{tsybakov2009introduction} have employed the cross-validation method to choose the ideal value of the bandwidth and then constructed unbiased risk estimators using the Fourier analysis of density estimators.
\cite{sheather2004density} have provided a practical description of kernel density estimation methods and compared the performance of three methods for selecting the value of the bandwidth, including Rules of Thumb, Cross-Validation, and Plug-in Methods. 
\cite{wand1992error} have demonstrated numerical minimization of the AMISE for general $H$, which can be used as a data-driven method for choosing the optimal bandwidth using a plug-in approach.
Several papers have recommended to construct a family of density estimates based on a number of values of the bandwidth \citep{marron2001presentation,scott2015multivariate}.
In general, the above methods for selection of the bandwidth is with respect to keeping the balance between the bias and the variance to avoid under-smoothing and over-smoothing scenarios.

\vspace{-0.1cm}
\paragraph{Scott's Formula for Selecting the Bandwidth.}
Scott's formula provides a method for selecting the bandwidth of the estimator for a normal distribution with covariance matrix $\Sigma\!=\!diag(\sigma_1^2,\ldots,\sigma_d^2)$, and ensures the optimal convergence rate $O(n^{-\frac{4}{4+d}})$ for the AMISE \citep{hardle2004nonparametric}.
The optimal bandwidth is 
$
H=n^{-1/(d+4)}\hat{\Sigma}^{1/2},
$
with
$\hat\Sigma    :=\frac{1}{n}\sum^{n}_{i=1}(\hat{X}_{i}-\bar{X})^{2}$ and $\bar{X}:=\frac{1}{n}\sum^{n}_{i=1}\hat{X}_{i}$.

\vspace{-0.1cm}

\paragraph{Cross-Validation.}
The cross-validation (CV) method finds the best bandwidth by minimizing an unbiased or approximately unbiased estimator of MISE instead of minimizing MISE. The CV selects the optimal bandwidth $H_{CV}$ by performing the minimization 
\begin{align}
\label{cross-valid}
H_{CV}=&\argmin_{H>0}  CV(H), \text{ with }\\
CV(H) =&
\frac{1}{n^2 |H|}
\sum_{i=1}^{n}\sum_{j=1}^{n}
K\star K\{H^{-1}(\hat{X}_j-\hat{X}_i)\}-\frac{2}{n(n-1)} \sum_{i=1}^{n}\sum_{j=1\,j\ne i}^{n} K\{H^{-1}(\hat{X}_j-\hat{X}_i)\},\nonumber
\end{align}
where $K\!\star\! K(\bs{u})\!:=\!\int\! K(\bs{u}-\bs{v})K(\bs{v})d\bs{v}$.

The asymptotic bias and variance of the univariate conditional density estimator~\eqref{condi_densityestima} are as follows
\begin{align}\label{bias_from1996}
    \Bias[\hat{f}_{Y|X}(y,x)]=&\frac{h_{\mathsf x}^{2}G_{12}^{2}(K)}{2}\left\{ 2\frac{f'_{X}(x)}{f_{X}(x)}\frac{\mathrm{d}}{\mathrm{d}x}f_{Y|X}(y,x)+\frac{\mathrm{d}^{2}}{\mathrm{d}^{2}x}f_{Y|X}(y,x)
    +\frac{h_{\mathsf y}^{2}}{h_{\mathsf x}^{2}}\frac{\mathrm{d}^{2}}{\mathrm{d}y^{2}}f_{Y|X}(y,x)\right\}\nonumber\\
    &+O(h_{\mathsf x}^4)+O(h_{\mathsf y}^4)+O(h_{\mathsf x}^2h_{\mathsf y}^2)+O(\frac{1}{nh_{\mathsf x}}),
\end{align}
and
\begin{equation}\label{var_from1996}
    \Var[\hat{f}_{Y|X}(y,x)]
    =\frac{G_{20}(K)f_{Y|X}(y,x)}{nh_{\mathsf y}h_{\mathsf x}f_{X}(x)}[G_{20}(K)-h_{\mathsf y}f_{Y|X}(y,x)]+O(\frac{1}{n})+O(\frac{h_{\mathsf y}}{nh_{\mathsf x}})+O(\frac{h_{\mathsf x}}{nh_{\mathsf y}}),
\end{equation}
where $G_{12}(K)=\int u^{2}K(u)du$ and $G_{20}(K)=\int K^{2}(u)du$, if $h_{\mathsf x},~h_{\mathsf y}\to 0$ and $n\to +\infty$.

\section{Supplementary of Section~\ref{pro_formula}}\label{supp_pro_formula}

To verify system \eqref{dynamic_evolu} against specification $\psi$, a DTSCS $\Sigma_{ss}$ need to be approximated by a finite MDP $\textcolor{black}{\bar{\Sigma}}_{ss}=(Q, S_{\mathfrak a}, P, AP, L )$ with $Q$ representing a partition of the state space $\mathcal S$ with partition sets denoted by $q\in Q$. The action space $S_{\mathfrak a} = U$. For the transition probabilities $P_{ij,a}$, select representative points $\bar q\in q$ for each partition set. Define $P_{ij,a} :=\textsf{Prob}_w(f(\bar q_i,a,w)\in q_j)$.
Define the state discretizations parameter $\delta:=\sup\{ \| x-x'\|,x,x'\in q,\,\, q\in Q\}$. 
DTSCS $\Sigma_{ss}$ and its finite MDP abstraction $\textcolor{black}{\bar{\Sigma}}_{ss}$ under any strategy $\varpi(\cdot)\in \mathcal{U}_{\mathfrak a}$ are denoted as $\Sigma^{\varpi}_{ss}$ and $\hat{\Sigma}^{\varpi}_{ss}$, respectively.
The following theorem \citep{SA13} provides the closeness guarantee between $\Sigma_{ss}$ and its finite abstraction $\hat{\Sigma}_{ss}$.

\begin{theorem}\label{SA13_bound}
      For a given PCTL specification $\psi$ over a finite horizon and any strategy $\varpi(\cdot)\in \Pi$,
      the closeness between $\Sigma^{\varpi}_{ss}$ and $\textcolor{black}{\bar{\Sigma}}^{\varpi}_{ss}$ can be obtained as 
\begin{align}\label{spec_disc_para}
    |P(\Sigma^{\varpi}_{ss}\models \psi)-P(\textcolor{black}{\bar{\Sigma}}^{\varpi}_{ss}\models \psi) |\leq \epsilon, \text{ with $\epsilon:=\bs T\delta L \mathfrak{L}$,}
\end{align}
where $\bs T$ is the finite time horizon, $\delta$ is the state discretizations parameter, $L$ is the Lipschitz constant of the stochastic kernel, and $\mathfrak{L}$ is the Lebesgue measure of the specification set. 

\end{theorem}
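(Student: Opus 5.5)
The plan is to write the bounded-horizon satisfaction probability as the output of a backward value-function recursion, first on the continuous system $\Sigma^{\varpi}_{ss}$ and then on its finite MDP abstraction $\bar{\Sigma}^{\varpi}_{ss}$. I would then bound the gap between the two recursions one step at a time and sum over the horizon.

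\textbf{Step 1: reduce to a bounded-until formula and set up the recursions.} As discussed before the problem formulation, it suffices to treat $\psi=\phi_1\,\mathcal{U}^{\leq \bs T}\phi_2$ under a Markov strategy. I would assume the partition $Q$ is aligned with the sets $\mathcal{A}:=\{x:x\models\phi_1\}$ and $\mathcal{B}:=\{x:x\models\phi_2\}$. I would also take the strategy $\varpi$ to be piecewise constant on the cells $q\in Q$, so that the same action is applied at a point $x\in q$ and at its representative $\bar q$; a strategy of the abstraction is refined to the original system in exactly this way.

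On the original system, define $V_0=\mathbf 1_{\mathcal B}$ and
\[
V_k(x)=\mathbf 1_{\mathcal B}(x)+\mathbf 1_{\mathcal A\setminus\mathcal B}(x)\int_{\mathcal A\cup\mathcal B} V_{k-1}(y)\,t(y\mid x,\varpi_k(x))\,dy .
\]
Here $t$ is the stochastic kernel, i.e.\ the conditional density $T(\bs x'\mid\bs x,a)$. Then $P(\Sigma^{\varpi}_{ss}\models\psi)=V_{\bs T}(x_0)$.

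On the abstraction, with $P_{ij,a}=\int_{q_j}t(y\mid\bar q_i,a)\,dy$, the analogous recursion produces $\bar V_k$. I would extend $\bar V_k$ to a piecewise-constant function on $\mathcal S$, so that
\[
\bar V_k(x)=\mathbf 1_{\mathcal B}(x)+\mathbf 1_{\mathcal A\setminus\mathcal B}(x)\int \bar V_{k-1}(y)\,t(y\mid\bar q(x),a)\,dy .
\]
Here $\bar q(x)$ is the representative of the cell containing $x$.

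\textbf{Step 2: one-step error bound.} Let $E_k:=\sup_x|V_k(x)-\bar V_k(x)|$, so $E_0=0$. For $x\in\mathcal A\setminus\mathcal B$, add and subtract $\int V_{k-1}(y)\,t(y\mid\bar q(x),a)\,dy$. This gives
\[
|V_k(x)-\bar V_k(x)|\le \int |V_{k-1}-\bar V_{k-1}|(y)\,t(y\mid x,a)\,dy+\int_{\mathcal A\cup\mathcal B} \bar V_{k-1}(y)\,\bigl|t(y\mid x,a)-t(y\mid\bar q(x),a)\bigr|\,dy .
\]
\begin{itemize}
\item The first term is at most $E_{k-1}$, because $t(\cdot\mid x,a)$ is a probability density.
\item The second term uses $0\le\bar V_{k-1}\le 1$, Lipschitz continuity of the kernel in $x$ with constant $L$, and $\|x-\bar q(x)\|\le\delta$. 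This bounds it by $L\delta\,\mathfrak L$, where $\mathfrak L$ is the Lebesgue measure of the specification set $\mathcal A\cup\mathcal B$.
\end{itemize}
Hence $E_k\le E_{k-1}+\delta L\mathfrak L$. Induction from $E_0=0$ gives $E_{\bs T}\le \bs T\delta L\mathfrak L$, which is the claim.

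\textbf{Main obstacle.} The delicate step is the second term in Step 2. Two consistency issues arise there. First, the Lipschitz property is stated for the infinity norm over coordinates (as in Algorithm~\ref{algo_enviro}), while $\delta$ is the diameter in the norm used for the partition. I would either choose the norms to match or absorb a dimension constant into $L$. Second, the integration must be restricted to a set of finite measure. This is exactly why $\mathfrak L$ is the measure of the specification set rather than of $\mathcal S$, and it forces me to observe that $\bar V_{k-1}$ vanishes outside $\mathcal A\cup\mathcal B$.

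A smaller point is the strategy. The argument needs the same action to be applied at $x$ and at $\bar q(x)$, so I would state the result for strategies of the abstraction refined to piecewise-constant policies on the original system. For general $\varpi\in\Pi$ I would first reduce to this Markov, cell-wise case.
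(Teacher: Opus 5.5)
The paper gives no proof of its own and imports this bound directly from \cite{SA13}. Your argument is exactly the standard proof of that result, and it is correct under the alignment and cell-wise-strategy assumptions you state: you compare the backward recursions for $V_k$ and the cell-wise constant extension $\bar V_k$, bound the one-step gap by $E_{k-1}+\delta L\mathfrak L$ using the Lipschitz property of the kernel over the finite-measure set $\mathcal A\cup\mathcal B$, and induct over the horizon.

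One cosmetic slip: your displayed inequality comes from adding and subtracting $\int\bar V_{k-1}(y)\,t(y\mid x,a)\,dy$, not $\int V_{k-1}(y)\,t(y\mid\bar q(x),a)\,dy$. Either choice gives the same bound, since both value functions lie in $[0,1]$ and vanish outside $\mathcal A\cup\mathcal B$.
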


\begin{remark}\label{reason_partition}
The upper bound $L$ of the LC impacts the algorithm as follows: One can initially fix the desired threshold $\epsilon$ in advance, and then select the partition parameter $\delta=\frac{\epsilon}{T L \mathfrak{L}}$ according to the values of $T$, $L$, $\mathfrak{L}$.
This partition provides a guarantee for the verification based on MDP abstraction, which ensures the absolute distance between the satisfaction probability of the original system and that of its finite MDP abstraction is smaller than $\epsilon$.
\end{remark}

\section{SUPPLEMENTARY OF SECTION~\ref{main}}\label{supp_main}

If Assumption~\ref{asstwod_1} is not satisfied, we can have the following weaker result. The values of $\textcolor{black}{\bar{\epsilon}}^{\textcolor{black}{c}}_{\bs x,\bs y}$ and $\epsilon^{\textcolor{black}{c}}_{\bs x,\bs y,d}$ in \textcolor{black}{inequality}~\eqref{mult_mid_bernstain_2_weak} and \eqref{mult_mid_berstain_4_weak} are much larger than  $\epsilon_{\bs x,\bs y}$ and $\epsilon_{\bs x,\bs y,d}$ in \textcolor{black}{inequality}~\eqref{mult_mid_bernstain_2} and \eqref{mult_mid_berstain_4}, respectively. 

\begin{lemma}\label{multi_lemma_midd_weak}
    Suppose $\textcolor{black}{X=(X_{1},\ldots,X_{\mathsf d})}$ and $\textcolor{black}{Y=( Y_{1},\ldots, Y_{\mathsf d})}$ are $\mathsf d$-dimensional random variables, which have a CoDF $f_{Y|X}(\bs y,\bs x)$ for all $(\bs x,\bs y)\in D_{\bs X}\times D_{\bs Y}$, and $X$ is from the uniform distribution with density function $f_{X}(\bs x)$. Suppose that $\textcolor{black}{k}:\mathbb{R}\to \mathbb{R}$ is the Gaussian kernel function with bandwidths $h.$ In addition, samples $\{\textcolor{black}{\hat{X}}_{i},~i=1,\ldots,n \}$ are selected uniformly from $D_{\bs X}$ with density function $f_{X}(\bs x)$, and for each $\textcolor{black}{\hat{X}}_{i}$, sample $\textcolor{black}{\hat{Y}}_{i}$ is generated from $(\textcolor{black}{Y|\hat{X}}_{i})$ with the CoDF $f_{Y|X}(\bs y,\bs x)$, $i\in \{1,\ldots,n \}$, where $\textcolor{black}{\hat{\bs X}_{i}=(\hat{ X}_{i1},\ldots,\hat{X}_{i\,\mathsf d})}$ and $\textcolor{black}{\hat{\bs Y}_{i}=(\hat{ Y}_{i1},\ldots,\hat{Y}_{i\,\mathsf d})}$. Then, for all $\tau_{\bs x},\tau_{\bs x,\bs y}>0$, for any $j\in \{ 1,\ldots,\mathsf d \}$, we have
\begin{align}\label{mult_mid_bernstain_2_weak}
    P\left(\left| \hat{f}(\bs y,\bs x)\Bar{f}(\bs x)-\Bar{f}(\bs y,\bs x)\hat{f}(\bs x) \right|\leq \textcolor{black}{\bar{\epsilon}}^{\textcolor{black}{c}}_{\bs x,\bs y} \right)\ge 1-2\exp(-\tau_{\bs x,\bs y}),
\end{align}
\begin{align}\label{mult_mid_berstain_4_weak}
    P\left( \left|  \frac{\partial}{\partial\textcolor{black}{x}_{j}}\hat{f}(\bs y,\bs x)-\frac{\partial}{\partial\textcolor{black}{x}_{j}}\bar{f}(\bs y,\bs x)  \right| \leq \epsilon^{\textcolor{black}{c}}_{\bs x,\bs y,d} \right)\ge 1-2\exp{(-\tau_{\bs x,\bs y,d})},
\end{align}
    {\color{black}where
    $
    \bar{\delta}^{2,\textcolor{black}{*}}_{\bs x,\bs y}= {\color{black} h^{-\frac{7}{2}\mathsf d} G^{\frac{1}{2}}_{40}f^{\frac{5}{2}}_{X}(\bs x) \max_{u}k^{2\mathsf d}(u) 
    +h^{-4\mathsf d} G^2_{20}(K)f^2_{X}(\bs x) \max_{u}k^{2\mathsf d}(u)
    },
    $
\begin{align*}
    \bar{R}_{\bs x,\bs y}= 2\max_{u}\mathsf d h^{-2\mathsf d}\bs k^2(u)\E_{p}\bs k_{\bs x,h},~\Bar{\delta}^{2,\textcolor{black}{*}}_{\bs x,\bs y,d}=h^{-\frac{3\mathsf d+4}{2}}  G^{\frac{1}{2}}_{22}(K)f^{\frac{1}{2}}_{X}(\bs x)\max_{u}k^{\mathsf d}(u),
\end{align*}    
\begin{align*}
    R_{\bs x,\bs y,d}=h^{-2\mathsf d+1} \max_{u}uK^{2\mathsf d}(u)  +h^{-\frac{2\mathsf d+1}{2}}G^{\frac{1}{2}}_{22}(K)f^{\frac{1}{2}}_{X}(\bs x)h^{-\mathsf d}\max_{u}k^{\mathsf d}(u),
\end{align*}
    \begin{align*}
        \bar{\epsilon}^{\textcolor{black}{c}}_{\bs x,\bs y}=\frac{\tau_{\bs x,\bs y} \Bar{R}_{\bs x,\bs y}}{3n}+\sqrt{ \frac{\tau_{\bs x,\bs y}^{2} \Bar{R}_{\bs x,\bs y}^2}{9n^2}+\frac{2\tau_{\bs x,\bs y}\bar{\delta}_{\bs x,\bs y}^{2,\textcolor{black}{*}}}{n}  },~\epsilon^{\textcolor{black}{c}}_{\bs x,\bs y,d}=\frac{\tau_{\bs x,\bs y,d} R_{\bs x,\bs y,d}}{3n}+\sqrt{ \frac{\tau_{\bs x,\bs y,d}^2 R_{\bs x,\bs y,d}^2}{9n^2}+\frac{2\tau_{\bs x,\bs y,d}\bar{\delta}_{\bs x,\bs y,d}^{2,\textcolor{black}{*}}}{n}  }.
    \end{align*}

}
\end{lemma}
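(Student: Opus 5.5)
The plan is to follow the proof of Lemma~\ref{multi_lemma_midd}: write each centred quantity as an empirical mean of i.i.d. bounded random variables and apply the Bernstein inequality (Theorem~\ref{Bern_ineq}). The only change is that, without Assumption~\ref{asstwod_1}, the variance can no longer be controlled through the second-order expansion of $f_{Y|X}$. Instead it will be bounded crudely, using only the Gaussian kernel, the sup of $k$, and the uniform density $f_X$.

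For \eqref{mult_mid_bernstain_2_weak}, I would use the decomposition \eqref{mult_part_1} and the same centred variable $g_{\bs x,\bs y}(X,Y)$ as in the proof of Lemma~\ref{multi_lemma_midd}, so that the left-hand side is $\frac1n\sum_i g_{\bs x,\bs y}(\hat X_i,\hat Y_i)$ with mean zero. The a.s. bound $\bar R_{\bs x,\bs y}$ is obtained exactly as before, since it only uses $\max_u k(u)$ and the product structure of the kernel. For the variance, expand
\[
\E_p g^2 \le 2\E_p[\bs k^2_{\bs x,\bs y,h}]\,\E_p^2\bs k_{\bs x,h}+2\E_p[\bs k^2_{\bs x,h}]\,\E_p^2\bs k_{\bs x,\bs y,h}.
\]

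\begin{itemize}
\item \emph{First term.} Bound it by Cauchy--Schwarz as $\E_p[\bs k^4_{\bs x,h}(X)]^{1/2}\,\E_p[\bs k^4_{\bs y,h}(Y)\mid\cdot]^{1/2}$. The $X$-moment comes from Lemma~\ref{multi_expect}, namely $\E_p\bs k^4_{\bs x,h}\sim h^{-3\mathsf d}G_{40}f_X$. The $Y$-factor is bounded by $\max_u h^{-2\mathsf d}k^{2\mathsf d}(u)$, because no regularity of $f_{Y|X}$ is available.
\item \emph{Second term.} Bound $\E_p\bs k_{\bs x,\bs y,h}$ trivially by $\max_u h^{-\mathsf d}k^{\mathsf d}(u)\,\E_p\bs k_{\bs x,h}$, and take $\E_p\bs k^2_{\bs x,h}$ from Lemma~\ref{multi_expect}.
\end{itemize}

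Collecting powers of $h$ (using $\E_p\bs k_{\bs x,h}\approx f_X$) should reproduce the two summands $h^{-\frac72\mathsf d}G_{40}^{1/2}f_X^{5/2}\max k^{2\mathsf d}$ and $h^{-4\mathsf d}G_{20}^2f_X^2\max k^{2\mathsf d}$ in $\bar\delta^{2,*}_{\bs x,\bs y}$. Substituting $\bar\delta^{2,*}_{\bs x,\bs y}$ into Bernstein and taking the positive root in the $\max$ gives $\bar\epsilon^c_{\bs x,\bs y}$ and \eqref{mult_mid_bernstain_2_weak}. Here the inequality is a genuine $\le$, not $\lesssim$, because nothing is expanded.

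For \eqref{mult_mid_berstain_4_weak}, differentiate under the sum. The quantity $\partial_{x_j}\hat f_{YX}-\partial_{x_j}\bar f$ is the empirical mean of the centred variable
\[
\zeta=h^{-1}\tfrac{X_j-x_j}{h}\bs k_{\bs x,h}(X)\bs k_{\bs y,h}(Y)-\E_p[\cdot],
\]
using $k'(u)=-uk(u)$ for the Gaussian kernel. The a.s. bound $R_{\bs x,\bs y,d}$ is the sup of $|u|k^{2\mathsf d}$ times $h^{-2\mathsf d-1}$, plus a bound on the mean. The mean is controlled by Cauchy--Schwarz via $\E_p[(\tfrac{X_j-x_j}{h})^2\bs k^2_{\bs x,h}]^{1/2}$ from Lemma~\ref{multi_expect}, which yields the stated second summand. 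For the variance, bound $\E\zeta^2\le h^{-2}\E_p[(\tfrac{X_j-x_j}{h})^2\bs k^2_{\bs x,h}\,\bs k^2_{\bs y,h}]$. Take one factor $\bs k_{\bs y,h}$ out by its sup, and use $\int\bs k_{\bs y,h}f_{Y|X}\,d\bs y\le 1$ in place of Assumption~\ref{asstwod_1}(a). Then apply Cauchy--Schwarz once more with the $G_{22}$ moment, which should give $h^{-\frac{3\mathsf d+4}{2}}G_{22}^{1/2}f_X^{1/2}\max k^{\mathsf d}$. Bernstein then yields $\epsilon^c_{\bs x,\bs y,d}$.

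I expect the main obstacle to be the variance bounds. Without bounds on $f_{Y|X}$ or its derivatives, each $Y$-integral has to be absorbed either by $\sup k$ or by the fact that a conditional density integrates to one. Getting exactly the stated exponents of $h$ requires choosing where to split with Cauchy--Schwarz, so I would first try to match the $h$-powers term by term and only then fix the constants. Everything else is a direct repetition of the Bernstein step already carried out for Lemma~\ref{multi_lemma_midd}.
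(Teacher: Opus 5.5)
For \eqref{mult_mid_bernstain_2_weak} your route is the paper's. You use the same centred variable $g_{\bs x,\bs y}$, the same $\bar R_{\bs x,\bs y}$, and Cauchy--Schwarz with the $G_{40}$ moment and $\sup k$ for the $Y$-factor, which gives the first summand of $\bar\delta^{2,*}_{\bs x,\bs y}$. Two details differ, so you do not literally reproduce $\bar\delta^{2,*}_{\bs x,\bs y}$:
\begin{itemize}
\item The paper expands $\E_p g^2$ exactly and discards the cross term $-2\E_p[\bs k_{\bs x,\bs y,h}\bs k_{\bs x,h}]\,\E_p\bs k_{\bs x,h}\,\E_p\bs k_{\bs x,\bs y,h}\le 0$, since the kernel is nonnegative. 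This avoids your factor $2$.
\item The paper bounds $\E_p\bs k_{\bs x,\bs y,h}\le h^{-\frac32\mathsf d}G_{20}^{1/2}f_X^{1/2}\max_u k^{\mathsf d}(u)$ by H\"older, and this is what produces $h^{-4\mathsf d}G_{20}^2f_X^2\max_u k^{2\mathsf d}(u)$. Your trivial bound gives $h^{-3\mathsf d}G_{20}f_X^3\max_u k^{2\mathsf d}(u)$ instead.
\end{itemize}
Your total is $O(h^{-\frac72\mathsf d})=o(h^{-4\mathsf d})$, so it lies below $\bar\delta^{2,*}_{\bs x,\bs y}$ once $h$ is small, and this inequality survives. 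Only your claim that you recover the stated summands is wrong.

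The genuine gap is in \eqref{mult_mid_berstain_4_weak}, which the paper does not prove (it only says ``similar'').

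First, $\int\bs k_{\bs y,h}f_{Y|X}\le 1$ is false. The quantity $\E_p[\bs k_{\bs y,h}(Y)\mid X=\bs z]=\int\bs k_h(\bs y-\bs w)f_{Y|X}(\bs w,\bs z)\,d\bs w$ is a smoothed value of $f_{Y|X}$ near $\bs y$. It is controlled only by $\sup f_{Y|X}$, which is exactly Assumption~\ref{asstwod_1}(a) that you are dropping, or by $h^{-\mathsf d}\max_u k^{\mathsf d}(u)$. It exceeds $1$ for, e.g., narrow Gaussian noise.

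Second, no Cauchy--Schwarz split can reach $h^{-\frac{3\mathsf d+4}{2}}$. Take $\bs x$ interior and $f_{Y|X}$ continuous, and substitute $\bs z=\bs x-h\bs u$, $\bs w=\bs y-h\bs v$. This gives $\Var[\zeta]=h^{-2\mathsf d-2}\bigl(c\,f_X(\bs x)f_{Y|X}(\bs y,\bs x)+o(1)\bigr)$ with $c=\int u_j^2K^2(\bs u)d\bs u\int K^2(\bs v)d\bs v>0$. Whenever $f_{Y|X}(\bs y,\bs x)>0$, this exceeds the stated $\bar\delta^{2,*}_{\bs x,\bs y,d}\asymp h^{-\frac{3\mathsf d}{2}-2}$ as $h\to0$. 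So the stated quantity is not a variance bound, and Bernstein cannot be invoked with it. A lone factor $G_{22}^{1/2}$ can only come from $\E_p^{1/2}[(\frac{X_j-x_j}{h})^2\bs k^2_{\bs x,h}(X)]$, i.e.\ from a first-moment bound.

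Your a.s.\ bound of order $h^{-2\mathsf d-1}$ is correct, but it also disagrees with the stated $h^{-2\mathsf d+1}$.

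What your method honestly gives without Assumption~\ref{asstwod_1} is Bernstein with $R\asymp h^{-2\mathsf d-1}$ and a variance proxy obtained by pulling both $Y$-kernel factors out by their sup: $h^{-2}\cdot h^{-2\mathsf d}\max_u k^{2\mathsf d}(u)\cdot\E_p[(\frac{X_j-x_j}{h})^2\bs k^2_{\bs x,h}(X)]\approx h^{-3\mathsf d-2}G_{22}f_X\max_u k^{2\mathsf d}(u)$. That yields a valid but larger $\epsilon^c_{\bs x,\bs y,d}$. The second inequality with the stated constants is out of reach. In the regime $nh^{2\mathsf d+2}\to\infty$, a CLT argument indicates it actually fails for small $h$.
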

\begin{proof}
The proof is similar with Lemma~\ref{multi_lemma_midd}.
In this part, we only give the proof for \textcolor{black}{inequality}~\eqref{mult_mid_bernstain_2_weak}. We need to give the upper bound of $\delta^2_{\bs x,\bs y}.$ 
    
Since 
\begin{align*}
    \E_{p}k_{\bs x,\bs y,h}=&h^{-2\mathsf d}\E_{p}\left[K(\frac{\bs x-\textcolor{black}{X}}{h})K(\frac{\bs y-\textcolor{black}{Y}}{h})\right]\\
    \leq& h^{-2\mathsf d} \E^{\frac{1}{2}}_{p}\left[ K^{2}(\frac{\bs x-\textcolor{black}{ X}}{h})\right]\E^{\frac{1}{2}}_{p}\left[ K^{2}(\frac{\bs y-\textcolor{black}{Y}}{h})\right]~~~~~~~~~~~~~~\text{(Holder Inequality)}\\
    \leq&h^{-2\mathsf d} \left[h^{\mathsf d} G_{20}(K)f_{X}(\bs x) \right]^{\frac{1}{2}} \max_{u}k^{\mathsf d}(u)
    =h^{-\frac{3}{2}\mathsf d} G^{\frac{1}{2}}_{20}(K)f^{\frac{1}{2}}_{X}(\bs x)\max_{u}k^{\mathsf d}(u),
\end{align*}
\begin{align*}
    \E_{p}k^{2}_{\bs x,\bs y,h}=& h^{-4\mathsf d}\E_{p}\left[ K^2(\frac{\bs x-\textcolor{black}{X}}{h})K^2(\frac{\bs y-\textcolor{black}{Y}}{h}) \right]\\
    \leq& h^{-4\mathsf d}\E^{\frac{1}{2}}_{p}\left[ K^4(\frac{\bs x-\textcolor{black}{X}}{h}) \right] \E^{\frac{1}{2}}_{p}\left[ K^4(\frac{ \bs y-\textcolor{black}{Y}}{h}) \right]\\
    \leq& h^{-4\mathsf d} \left[ h^{\mathsf d}G_{40}(K)f_{X}(x) \right]^{\frac{1}{2}}\max_{u}k^{2\mathsf d}(u)
    =h^{-\frac{7}{2}\mathsf d} G^{\frac{1}{2}}_{40}f^{\frac{1}{2}}_{X}(x) \max_{u}k^{2\mathsf d}(u),
\end{align*}
and 
\begin{align*}
    &\E_{p}\left[ k_{\bs x,\bs y,h}\left(\textcolor{black}{\hat{X}}_{i},\textcolor{black}{\hat{Y}}_{i}\right)k_{\bs x,h}(\textcolor{black}{\hat{X}}_{i}) \right]\\
    =& h^{-3\mathsf d}\E_{p}\left[  K^2\left(\frac{ \bs x-\textcolor{black}{X}}{h}\right) K\left(\frac{ \bs y-\textcolor{black}{X}}{h}\right) \right]\\
    \leq& h^{-3\mathsf d} \E_{p}^{\frac{1}{2}}\left[ K^{4}\left(\frac{ \bs x-\textcolor{black}{X}}{h} \right) \right] \E^{\frac{1}{2}}_{p}\left[ K^{2}\left( \frac{ \bs y-\textcolor{black}{Y}}{h}  \right) \right]\\
    =&h^{-3\mathsf d} \left[  h^{\mathsf d}G_{40}(K)f_{X}(\bs x)  \right]^{\frac{1}{2}}\max_{u}k(u)
    =h^{-\frac{5}{2}\mathsf d}G^{\frac{1}{2}}_{40}(K)f^{\frac{1}{2}}_{X}(\bs x)\max_{u}k(u),
\end{align*}
we can have 
\begin{align*}
    &\delta^2_{\bs x,\bs y}:=\Var_{p}[ g_{\bs x,\bs y}(\textcolor{black}{\hat{ X}}_{i},\textcolor{black}{\hat{ Y}}_{i})]\\
    =&\E_{p}\left[\left[ k_{\bs x,\bs y,h}(\textcolor{black}{\hat{ X}}_{i},\textcolor{black}{\hat{ Y}}_{i})\E_{p}k_{\bs x,h}-k_{\bs x,h}(\textcolor{black}{\hat{ X}}_{i})\E_{p}k_{\bs x,\bs y,h} \right]^2\right] 
    \\
    =&{\color{black}\E_{p} k^2_{\bs x,\bs y,h}\E^2_{p}k_{\bs x,h}+\E_{p}k^2_{\bs x,h}\E^2_{p}k_{\bs x,\bs y,h}-2\E_{p}\left[ k_{\bs x,\bs y,h}(\textcolor{black}{\hat{ X}}_{i},\textcolor{black}{\hat{ Y}}_{i})k_{\bs x,h}(\textcolor{black}{\hat{ X}}_{i}) \right]\E_{p}k_{\bs x,h}\E_{p}k_{\bs x,\bs y,h}}\\
    \leq&{\color{black} h^{-\frac{7}{2}\mathsf d} G^{\frac{1}{2}}_{40}f^{\frac{5}{2}}_{X}(\bs x) \max_{u}k^{2\mathsf d}(u) 
    +h^{-4\mathsf d} G^2_{20}(K)f^2_{X}(\bs x) \max_{u}k^{2\mathsf d}(u)
    }
\end{align*}
Thus, we can obtain 
\begin{align*}
    \bar{\delta}^{2,\textcolor{black}{*}}_{\bs x,\bs y}=&{\color{black} h^{-\frac{7}{2}\mathsf d} G^{\frac{1}{2}}_{40}f^{\frac{5}{2}}_{X}(\bs x) \max_{u}k^{2\mathsf d}(u) 
    +h^{-4\mathsf d} G^2_{20}(K)f^2_{X}(\bs x) \max_{u}k^{2\mathsf d}(u)
    }.
\end{align*}

As $\Bar{R}_{\bs x,\bs y}$ has been obtained in Lemma~\ref{multi_lemma_midd}, we can now obtain the confidence of the upper bound of $\hat{f}(\bs y,\bs x)\Bar{f}(\bs x)-\Bar{f}(\bs y,\bs x)\hat{f}(\bs x)$, as follows.
For any  $n$ and $\tau_{\bs x,\bs y}$, there exists $\textcolor{black}{\bar{\epsilon}^{c}}_{\bs x,\bs y}$ such that 

\begin{align*}
    P\left( \left|  (\hat{f}(\bs y,\bs x)-\Bar{f}(\bs y,\bs x))\Bar{f}(\bs x) +\Bar{f}(\bs y,\bs x)(\Bar{f}(\bs x)-\hat{f}(\bs x)) \right|\leq \textcolor{black}{\bar{\epsilon}^{c}}_{\bs x,\bs y}\right)\ge 1-2\exp(-\tau_{\bs x,\bs y}),
\end{align*}
where 
\begin{align*}
    \textcolor{black}{\bar{\epsilon}^{c}}_{\bs x,\bs y}=&\max\left\{\frac{\tau_{\bs x,\bs y} \Bar{R}_{\bs x,\bs y}}{3n}+\sqrt{ \frac{\tau_{\bs x,\bs y}^{2} \Bar{R}_{\bs x,\bs y}^2}{9n^2}+\frac{2\tau_{\bs x,\bs y}\delta_{\bs x,\bs y}^{2,\textcolor{black}{*}}}{n}  }, \frac{\tau_{\bs x,\bs y} \Bar{R}_{\bs x,\bs y}}{3n}-\sqrt{ \frac{\tau_{\bs x,\bs y}^{2}\Bar{R}_{\bs x,\bs y}^2}{9n^2}+\frac{2\tau_{\bs x,\bs y}\delta_{\bs x,\bs y}^{2,\textcolor{black}{*}}}{n} }  \right\}\\
    =&\frac{\tau_{\bs x,\bs y} \Bar{R}_{\bs x,\bs y}}{3n}+\sqrt{ \frac{\tau_{\bs x,\bs y}^{2} \Bar{R}_{\bs x,\bs y}^2}{9n^2}+\frac{2\tau_{\bs x,\bs y}\delta_{\bs x,\bs y}^{2,\textcolor{black}{*}}}{n}  }.
\end{align*}
Since $\bar{\delta}^2_{\bs x,\bs y}\ge \delta^2_{\bs x,\bs y},$ it still holds that 
\begin{align*}
    P\left( \left|  (\hat{f}(\bs y,\bs x)-\Bar{f}(\bs y,\bs x))\Bar{f}(\bs x) +\Bar{f}(\bs y,\bs x)(\Bar{f}(\bs x)-\hat{f}(\bs x)) \right|\leq \bar{\epsilon}_{x,y}\right)\ge 1-2\exp(-\tau_{\bs x,\bs y}),
\end{align*}
where $\bar{\epsilon}_{\bs x,\bs y}=\frac{\tau_{\bs x,\bs y} \Bar{R}_{\bs x,\bs y}}{3n}+\sqrt{ \frac{\tau_{\bs x,\bs y}^{2} \Bar{R}_{\bs x,\bs y}^2}{9n^2}+\frac{2\tau_{\bs x,\bs y}\bar{\delta}_{\bs x,\bs y}^2}{n}  }.$

\end{proof}



\textcolor{black}{
\begin{proposition}
\label{prop:uniform_transition_probability}
Suppose that the assumptions of
Theorem~\ref{multi_Bound_TranProbi} hold, every abstract state
$q_i$ is compact, and the transition-error processes satisfy the
Lipschitz condition in
Remark~\ref{rem:uniform_transition_probability}.
For any abstract state $q_{i},q_j,~i,j\in \{1.\ldots, n_{Q}\}$, and action $a$, assume that there exists a
deterministic Lipschitz constant $L_{tr}<+\infty$
such that, for all $\bs x,\bs z\in q_i,$
\begin{equation}
\left|
Z_{i j,a}(\bs x)-Z_{i j,a}(\bs z)
\right|
\le
L_{tr}\|\bs x-\bs z\|
\label{eq:transition_error_lipschitz}
\end{equation}
where $Z_{ij,a}(\bs x)=
\hat P_{ij,a}(\bs x)-P_{ij,a}(\bs x).$
For any $\eta_{i j,a}>0$, let $\mathcal N_{\eta_{i j,a}}$ be a finite $\eta_{i j,a}$-cover of $q_i$ with cardinality $N_{\eta_{i j,a}}$, where $\eta_{i j,a} =
\frac{1}
{n^2(1+L_{tr})}$.
Define $\bar\epsilon_{p,j}(\tau):=\max_{\bs x_{r}\in \mathcal N_{\eta_{i j,a}}}\{\epsilon_{p,j}(\bs x_r,\tau)\},$ where 
$\epsilon_{p,j}(\bs x_r,\tau)$ denotes the
pointwise error bound in
Theorem~\ref{multi_Bound_TranProbi} for the transition probability
from $q_i$ to $q_j$, evaluated at
$\bs x_r\in q_{i}$ with probability parameter
$\tau$, $j\in\{1,\ldots,n_{Q}\}$.
Denote $\bar\epsilon_p(\tau)
:=
\max_{1\le j\le n_Q}
\bar\epsilon_{p,j}(\tau),$
and $D_i
:=
\operatorname{diam}(q_{i})
=
\sup_{\bs x,\bs z\in q_{i}}
\|\bs x-\bs z\|.$
Then, for all
$\tau_p>0$, 
\[
\max_{\substack{
1\leq j\leq n_Q
}}
\sup_{\bs x\in q_i}
\left|
\hat P_{i j,a}(\bs x)
-
P_{i j,a}(\bs x)
\right|
\lesssim
\bar\epsilon_p\bigl(\tau_p+O(\log n)\bigr)
\]
with probability at least $1-2\exp(-\tau_p)$.
\end{proposition}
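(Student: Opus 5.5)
The proof is a covering and union-bound argument built on the pointwise bound of Theorem~\ref{multi_Bound_TranProbi}. Fix the source state $q_i$ and the action $a$. For each target $q_j$ take the finite $\eta_{ij,a}$-cover $\mathcal N_{\eta_{ij,a}}$ of $q_i$ with $\eta_{ij,a}=1/(n^2(1+L_{tr}))$. Because $q_i$ is compact in $\mathbb R^{\mathsf d}$, a volumetric argument gives
\[
N_{\eta_{ij,a}}\le \bigl(1+2D_i/\eta_{ij,a}\bigr)^{\mathsf d}=\bigl(1+2D_in^2(1+L_{tr})\bigr)^{\mathsf d},
\]
so $\log N_{\eta_{ij,a}}=O(\mathsf d\log n)$, with $D_i$, $L_{tr}$ and $\mathsf d$ fixed.

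For each grid point $\bs x_r\in\mathcal N_{\eta_{ij,a}}$ and each $j$, apply Theorem~\ref{multi_Bound_TranProbi} with $A=q_j$ and probability parameter $\tau=\tau_p+\log(n_QN)$, where $N:=\max_j N_{\eta_{ij,a}}$. This gives $|Z_{ij,a}(\bs x_r)|\lesssim\epsilon_{p,j}(\bs x_r,\tau)$ with failure probability at most $2\exp(-\tau)=2\exp(-\tau_p)/(n_QN)$. A union bound over the at most $n_QN$ pairs $(j,\bs x_r)$ then shows that, with probability at least $1-2\exp(-\tau_p)$, all of these bounds hold simultaneously, each at most $\bar\epsilon_p(\tau)$. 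Since $\log(n_QN)=O(\log n)$ for fixed $n_Q$, we have $\tau=\tau_p+O(\log n)$. If one prefers $n_Q$ to grow, it simply enters the logarithmic term.

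On this event, take any $\bs x\in q_i$ and choose $\bs x_r$ with $\|\bs x-\bs x_r\|\le\eta_{ij,a}$. The Lipschitz condition \eqref{eq:transition_error_lipschitz} gives
\[
|Z_{ij,a}(\bs x)|\le|Z_{ij,a}(\bs x_r)|+L_{tr}\eta_{ij,a}\le \bar\epsilon_p(\tau_p+O(\log n))+n^{-2}.
\]
Taking the supremum over $\bs x$ and the maximum over $j$ yields the claim. The final step is to absorb the $n^{-2}$ into $\lesssim$. This works because $\bar\epsilon_p$ decays only like $n^{-1/(3+\mathsf d)}$ (or at least slower than $n^{-1/2}$, owing to the $\sqrt{\tau/n}$ Bernstein term and the $h^2$ bias term), so $n^{-2}$ is of strictly higher order and is removed by the paper's convention of dropping higher-order terms.

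The main obstacle is the \emph{meaning of $\lesssim$ under the union bound}. Theorem~\ref{multi_Bound_TranProbi} gives an asymptotic, not exact, inequality at each point, and we need the neglected higher-order remainders to be uniform over the grid and over $j$. The first approach would be to note that the remainders come from Lemma~\ref{multi_expect} and from the variance bound of Lemma~\ref{multi_lemma_midd}. Their constants are controlled by the global derivative bounds of Assumption~\ref{asstwod_1} and by $f_X$ being the constant uniform density, so they hold uniformly in $(\bs x,\bs y)\in D_X\times D_Y$. This makes the $O(h^4)$ and related terms uniform, which justifies applying $\lesssim$ simultaneously to all grid points.

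A secondary point is that $L_{tr}$ must be deterministic, as the statement assumes, so that the cover can be chosen before the data are drawn and the union bound is legitimate.
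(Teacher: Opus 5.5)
Your proposal is correct and is essentially the paper's own argument: the same $\eta_{ij,a}=1/(n^2(1+L_{tr}))$-cover with cardinality at most $(1+2D_i/\eta_{ij,a})^{\mathsf d}$, a union bound with the confidence parameter inflated by the logarithm of the number of events, and a Lipschitz extension off the grid in which $L_{tr}\eta_{ij,a}\le n^{-2}$ is absorbed into $\lesssim$. The differences are cosmetic. You take a single union bound over all pairs $(j,\bs x_r)$ with $\tau=\tau_p+\log(n_QN)$, whereas the paper first unions over the grid with $\tau_p+\log M_{ij,a}$ and then over $j$ with $\tau_p+\log n_Q$; you are also more explicit than the paper about why $n^{-2}$ is negligible and why the asymptotic remainders must hold uniformly over the grid.
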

}
\begin{proof}
\textcolor{black}{
Let $\mathcal N_{i j,a}
=
\{
\bs x_1,\ldots,\bs x_{N_{i j,a}}
\}
\subset q_i$
be an $\eta_{i j,a}$-cover of $q_i$. Thus, for every
$\bs x\in q_i$, there exists a point
$\bs x_r\in\mathcal N_{ i j,a}$ such that $\|\bs x-\bs x_r\|
\le
\eta_{i j,a}.$
Since $q_i$ is a compact subset of $\mathbb R^{\mathsf d}$, its
covering number satisfies
\begin{equation}
N_{ij,a}
\le
\left(
1+
\frac{2D_{i}}{\eta_{i  j,a}}
\right)^{\mathsf d}
=
M_{ i  j,a}.
\label{eq:covering_number_bound_direct}
\end{equation} }
\textcolor{black}{For every fixed point $\bs x_r$,
Theorem~\ref{multi_Bound_TranProbi} gives 
$P\left(
\text{E}_r
\right)
\ge
1-2\exp(-\tau),$ where $\text{E}_r=\left\{\left|
Z_{i j,a}(\bs x_r)
\right|
\lesssim
\epsilon_{p}(\tau)\right\}.$
Define $\bar\epsilon_{p,j}(\tau)=\max\{\epsilon_{p,j}(\bs x_1,\tau),\ldots, \epsilon_{p,j}(\bs x_{N_{i j,a}},\tau)\}.$
It holds $P\left(
\bar{\text{E}}_r
\right)
\ge
1-2\exp(-\tau)$ for every $\tau\ge 0,$ where $\bar{\text{E}}_r=\left\{\left|
Z_{i j,a}(\bs x_r)
\right|
\lesssim
\bar{\epsilon}_{p,j}(\tau)\right\}.$
}

\textcolor{black}{
Applying the union bound over all points in $\mathcal N_{i j,a}$ yields
\begin{align*}
&
P\left(
\bigcup^{N_{ij,a}}_{r=1} \bar{\text{E}}^{c}_r
\right)
\le
\sum_{r=1}^{N_{i j,a}}
P\left( \bar{\text{E}}^{c}_r
\right)
\le
2N_{i j,a}\exp(-\tau)
\le
2M_{i j,a}\exp(-\tau).
\label{eq:union_bound_over_net_direct}
\end{align*}
Then, we have $P\left(
\bigcap_{r=1}^{N_{i j,a}}
\bar E_r
\right)
\ge
1-2M_{i j,a}e^{-\tau}.$
On the event
$\bigcap_{r=1}^{N_{i j,a}}\bar E_r$, it follows that
$\max_{1\le r\le N_{i j,a}}
|Z_{i j,a}(\mathbf x_r)|
\lesssim
\bar\epsilon_{p,j}(\tau).$
Therefore,
\[
P\left(
\max_{1\le r\le N_{i j,a}}
|Z_{i j,a}(\mathbf x_r)|
\lesssim
\bar\epsilon_{p,j}(\tau)
\right)
\ge
P\left(
\bigcap_{r=1}^{N_{i j,a}}
\bar E_r
\right)
\ge
1-
2M_{ij,a}\exp{(-\tau)}.
\]
}

\textcolor{black}{
Now choose $\tau
=
\Lambda_{i j,a}
=
\tau_p+\log M_{ij,a},$ for all $\tau_p> 0.$
Then
\begin{align*}
2M_{i j,a}\exp(-\Lambda_{ ij,a})
&=
2M_{i  j,a}
\exp\left(
-\tau_p-\log M_{i  j,a}
\right)
=
2\exp(-\tau_p).
\end{align*}
Therefore, with probability at least
$1-2\exp(-\tau_p)$, 
\begin{equation*}
\max_{1\le r\le N_{i j,a}}
\left|
Z_{i  j,a}(\bs x_r)
\right|
\lesssim
\bar\epsilon_{p,j}
\left(
\Lambda_{i  j,a}
\right).
\label{eq:simultaneous_net_bound_direct}
\end{equation*}
On this event, take any $\bs x\in q_i$. 
We have 
\begin{align*}
\left|
Z_{i j,a}(\bs x)
\right|
&\le
\left|
Z_{i  j,a}(\bs x_r)
\right|
+
\left|
Z_{i j,a}(\bs x)
-
Z_{i  j,a}(\bs x_r)
\right|
\lesssim
\bar\epsilon_{p,j}
\left(
\Lambda_{i  j,a}
\right)
+
L_{tr}\eta_{i j,a}.
\label{eq:extension_from_net_direct}
\end{align*}
Choosing $\eta_{i j,a} =
\frac{1}
{n^2(1+L_{tr})}$, we have $L_{tr}\eta_{i  j,a}
=
\frac{
L_{tr}
}{
n^2(1+L_{tr})
}
\le
\frac{1}{n^2}.$
Thus, according to Theorem~\ref{multi_Bound_TranProbi}, we can have that for any $\tau_{p},$
\begin{equation*}
\sup_{\bs x\in q_{\iota}}\left|
Z_{i j,a}(\bs x)
\right|
\lesssim
\bar\epsilon_{p,j}
\left(
\tau_p+O(\log n)
\right), \text{ with probability at least
$1-2\exp(-\tau_p)$.}
\end{equation*}
}

\textcolor{black}{
We can extend the preceding result simultaneously over all target
abstract states. 
Define
\[
\mathcal E_j
:=
\left\{
\sup_{\bs x\in q_{i}}
\left|
Z_{i j,a}(\bs x)
\right|
\lesssim
\bar\epsilon_{p,j}
\left(
\tau_p+\log n_Q+O(\log n)
\right)
\right\},
\qquad
j=1,\ldots,n_Q,
\]
where $\bar\epsilon_{p,j}$ denotes the uniform error radius
corresponding to the transition from $q_i$ to $q_j$.
\\
Applying the preceding result with confidence parameter
$\tau_p+\log n_Q$ gives, for every $j$,
\[
P(\mathcal E_j^c)
\le
2\exp\left(
-\tau_p-\log n_Q
\right)
=
\frac{2\exp(-\tau_p)}{n_Q}.
\]
Therefore, by the union bound,
\begin{align*}
P\left(
\bigcup_{j=1}^{n_Q}\mathcal E_j^c
\right)
&\le
\sum_{j=1}^{n_Q}
P(\mathcal E_j^c)
\le
\sum_{j=1}^{n_Q}
\frac{2\exp(-\tau_p)}{n_Q}
=
2\exp(-\tau_p).
\end{align*}
Equivalently, $P\left(
\bigcap_{j=1}^{n_Q}\mathcal E_j
\right)
\ge
1-2\exp(-\tau_p).$
On the event
$\bigcap_{j=1}^{n_Q}\mathcal E_j$, the uniform error bound holds
simultaneously for every $j$.
Hence, $\max_{1\le j\le n_Q}
\sup_{\bs x\in q_i}
\left|
Z_{ij,a}(\bs x)
\right|
\lesssim
\max_{1\le j\le n_Q}
\bar\epsilon_{p,j}
\left(
\tau_p+\log n_Q+O(\log n)
\right).$
Since the partition is fixed, the term
$\log n_Q$ can be absorbed into the $O(\log n)$ confidence
correction. Defining $\bar\epsilon_p(\tau)
:=
\max_{1\le j\le n_Q}
\bar\epsilon_{p,j}(\tau),$
we obtain
\[
\max_{1\le j\le n_Q}
\sup_{\bs x\in q_i}
\left|
\hat P_{i j,a}(\bs x)
-
P_{i j,a}(\bs x)
\right|
\lesssim
\bar\epsilon_p
\left(
\tau_p+O(\log n)
\right)
\]
with probability at least
$1-2\exp(-\tau_p)$.
}
\end{proof}
\textcolor{black}{
\begin{remark}
    Since the numbers of abstract states and actions are finite,
the above result extends immediately to all source states and
actions by an additional union-bound argument.
The resulting confidence correction is $O(1)$ and is therefore
absorbed into the existing $\tau_p+O(\log n)$ term.
\end{remark}
}

\textcolor{black}{
\begin{proposition}
\label{prop:uniform_y_transition}
Fix $\bs x\in D_{\bs X}$ and let
$A\subseteq D_{\bs Y}$ be compact.
Assume that $$R_n(\bs x,\bs y)= \left|(\hat{f}_{YX}(\bs y,\bs x)-\Bar{f}(\bs y,\bs x))\Bar{f}(\bs x) +\Bar{f}(\bs y,\bs x)(\Bar{f}(\bs x)-\hat{f}_{X}(\bs x))\right|$$ is Lipschitz continuous with respect to
$\bs y$ on $A$, with a deterministic Lipschitz constant
$L_Y$ independent of $n$.
Let
$\mathcal N_\eta=\{\bs y_1,\ldots,\bs y_{N_\eta}\}$
be a finite $\eta$-cover of $A$, where $\eta=\frac{1}{n^2(1+L_Y)}.$
Define
$\bar{\epsilon}^*_{\bs x,\bs y}(\bs x;\tau_{\eta})
:=
\max_{\bs y_k\in\mathcal N_\eta}
\bar{\epsilon}_{\bs x,\bs y}(\bs x,\bs y_k;\tau_\eta),$
where $\tau_\eta=\tau_p+\log N_\eta$ for all $\tau_{p}>0,$ and $\bar{\epsilon}_{\bs x,\bs y}$ denotes the pointwise error bound in \eqref{multi_confi_MiddSecond} evaluated at $\bs x\in D_{X}$ and $\bs y\in A.$
Then,
$\int_A R_n(\bs x,\bs y)d\bs y
\lesssim
\int_A \bar{\epsilon}^*_{\bs x,\bs y}\left(\bs x;\tau_p+O(\log n)\right)d\bs y,$
 with probability at least $1-2\exp(-\tau_p).$
\end{proposition}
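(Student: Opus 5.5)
The argument mirrors the covering and union-bound proof of Proposition~\ref{prop:uniform_transition_probability}, now applied in the $\bs y$-variable with $\bs x$ fixed. Since $A$ is compact with diameter $D_A:=\sup_{\bs y,\bs y'\in A}\|\bs y-\bs y'\|<\infty$, the standard volumetric bound gives an $\eta$-cover $\mathcal N_\eta=\{\bs y_1,\ldots,\bs y_{N_\eta}\}\subset A$ with
\[
N_\eta\le\Bigl(1+\tfrac{2D_A}{\eta}\Bigr)^{\mathsf d}.
\]
With $\eta=\frac{1}{n^2(1+L_Y)}$ this yields $\log N_\eta\le \mathsf d\log\bigl(1+2D_A n^2(1+L_Y)\bigr)=O(\log n)$, which is the source of the $\tau_p+O(\log n)$ correction.

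\textbf{Step 1 (pointwise bounds at the net points).} For each net point $\bs y_k$, Lemma~\ref{multi_lemma_midd}, inequality~\eqref{mult_mid_bernstain_2}, applied with confidence parameter $\tau_\eta=\tau_p+\log N_\eta$, gives
\[
R_n(\bs x,\bs y_k)\lesssim \bar\epsilon_{\bs x,\bs y}(\bs x,\bs y_k;\tau_\eta)\le \bar\epsilon^*_{\bs x,\bs y}(\bs x;\tau_\eta)
\]
with probability at least $1-2\exp(-\tau_\eta)$. A union bound over the $N_\eta$ points shows that the failure probability is at most $2N_\eta e^{-\tau_p-\log N_\eta}=2e^{-\tau_p}$. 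Hence, on a common event of probability at least $1-2\exp(-\tau_p)$, the bound holds simultaneously for every $\bs y_k$.

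\textbf{Step 2 (extension to all of $A$).} On that event, take any $\bs y\in A$ and choose $\bs y_k$ with $\|\bs y-\bs y_k\|\le\eta$. The deterministic Lipschitz hypothesis gives
\[
R_n(\bs x,\bs y)\le R_n(\bs x,\bs y_k)+L_Y\eta\lesssim \bar\epsilon^*_{\bs x,\bs y}(\bs x;\tau_\eta)+n^{-2}.
\]
This bound holds uniformly in $\bs y\in A$.

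\textbf{Step 3 (integration).} Integrating over $A$, which has finite Lebesgue measure because it is compact, gives
\[
\int_A R_n\,d\bs y\lesssim \int_A \bar\epsilon^*_{\bs x,\bs y}\,d\bs y+\Vol(A)\,n^{-2}.
\]
It remains to drop the $\Vol(A)n^{-2}$ term under $\lesssim$. This is justified because $\bar\epsilon^*_{\bs x,\bs y}\gtrsim\sqrt{\tau\bar\delta^2_{\bs x,\bs y}/n}$, and $\bar\delta^2_{\bs x,\bs y}\propto h^{-2\mathsf d}$ grows as $h\to0$. So $\bar\epsilon^*_{\bs x,\bs y}$ decays no faster than $n^{-1/2}$, and the $n^{-2}$ term is of strictly higher order, provided $f_{Y|X}$ is bounded away from zero on $A$. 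Where $f_{Y|X}$ vanishes, the $\bar R_{\bs x,\bs y}\tau/(3n)$ term still dominates $n^{-2}$.

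\textbf{Main obstacle.} The delicate step is this last absorption together with the bookkeeping of $\tau_\eta$. Because $\tau_\eta$ now enters $\bar\epsilon^*$, one must check that replacing $\tau_p$ by $\tau_p+O(\log n)$ inside the Bernstein radius is compatible with the asymptotic "$\lesssim$" convention, i.e.\ that it changes the bound by at most a logarithmic factor in $\tau$ and does not alter the $O(n^{-1/(3+\mathsf d)})$ rate. I would handle this by noting that $\bar\epsilon$ is monotone and at most linear in $\tau$, so the correction contributes a factor $O(\log n)$ that is stated explicitly in the result rather than hidden.
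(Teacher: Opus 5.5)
Your proposal is correct and follows the route the paper intends. The paper omits the proof and defers to the argument of Proposition~\ref{prop:uniform_transition_probability}: a finite $\eta$-cover, a union bound with $\tau_\eta=\tau_p+\log N_\eta$, and a Lipschitz extension in which $\eta=1/(n^2(1+L_Y))$ keeps the extension error at most $n^{-2}$, absorbed under $\lesssim$ before integrating over $A$. Your explicit justification of that absorption fills in a step the paper leaves implicit: the $\bs y$-independent term $\tau_\eta\bar R_{\bs x,\bs y}/(3n)$, of order $h^{-2\mathsf d}/n$, already dominates $n^{-2}$, so your caveat that $f_{Y|X}$ be bounded away from zero is not actually needed.
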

The proof follows the same finite-cover and union-bound argument as that of Proposition~\ref{prop:uniform_transition_probability}, and is therefore omitted.
}

\end{document}